\documentclass[reprint, amsmath, amssymb, superscriptaddress, nofootinbib, onecolumn]{revtex4-2}

\usepackage{graphicx}
\usepackage{natbib}
\usepackage{amsmath,amssymb,amsthm,mathtools}
\usepackage{physics}
\usepackage{caption}
\usepackage{float}
\usepackage{dcolumn}
\usepackage{bm}
\usepackage[colorlinks=true,citecolor=blue,linkcolor=blue,urlcolor=blue]
{hyperref}
\usepackage{subfig}
\usepackage{xcolor}
\usepackage{booktabs}
\usepackage{braket}
\usepackage{tikz}
\usepackage{quantikz}
\usepackage{tabularx}
\newtheorem{proposition}{Proposition}
\newtheorem{lemma}{Lemma}
\newtheorem{theorem}{Theorem}
\newtheorem{corollary}{Corollary}
\theoremstyle{remark}
\newtheorem{remark}{Remark}

\newenvironment{rules}%
{\begin{center}\begin{minipage}{0.94\textwidth}\vspace{2pt}\hrule\vspace{5pt}}%
		{\vspace{5pt}\hrule\vspace{2pt}\end{minipage}\end{center}}

\newcommand{\figabsm}{\begin{quantikz}[column sep=2.20mm, row sep=3.0mm]
\lstick{$n_0$} & \qw & \qw & \qw & \ctrl{5} & \qw & \qw & \qw & \qw & \qw & \qw \\
\lstick{$n_1$} & \qw & \qw & \qw & \qw & \qw & \ctrl{5} & \qw & \qw & \qw & \qw \\
\lstick{$n_2$} & \ctrl{1} & \qw & \qw & \qw & \qw & \qw & \qw & \ctrl{1} & \qw & \qw \\
\lstick{$0\ \text{(scratch)}$} & \targ{} & \gate{X} & \qw & \qw & \ctrl{2} & \qw & \ctrl{3} & \targ{} & \gate{X} & \qw \\
\lstick{$0$} & \qw & \qw & \gate{X} & \qw & \qw & \qw & \qw & \qw & \qw & \qw \\
\lstick{$0$} & \qw & \qw & \qw & \targ{} & \targ{} & \qw & \qw & \qw & \qw & \qw \\
\lstick{$0$} & \qw & \qw & \qw & \qw & \qw & \targ{} & \targ{} & \qw & \qw & \qw \\
\end{quantikz}}

\newcommand{\figadder}{\begin{quantikz}[column sep=2.20mm, row sep=3.0mm]
\lstick{$a_0$} & \ctrl{2} & \ctrl{4} & \targ{} & \qw & \targ{} & \ctrl{3} & \ctrl{3} & \targ{} & \ctrl{3} & \targ{} & \ctrl{4} & \qw & \qw \\
\lstick{$a_1$} & \qw & \qw & \qw & \ctrl{2} & \ctrl{-1} & \targ{} & \targ{} & \ctrl{-1} & \qw & \qw & \qw & \qw & \qw \\
\lstick{$b_0$} & \targ{} & \qw & \ctrl{-2} & \qw & \qw & \qw & \qw & \qw & \qw & \ctrl{-2} & \qw & \targ{} & \qw \\
\lstick{$b_1$} & \qw & \qw & \qw & \targ{} & \qw & \ctrl{-2} & \ctrl{-2} & \qw & \targ{} & \qw & \qw & \qw & \qw \\
\lstick{$0\ \text{(carry)}$} & \qw & \targ{} & \ctrl{-2} & \qw & \qw & \qw & \qw & \qw & \qw & \ctrl{-2} & \targ{} & \ctrl{-2} & \qw \\
\end{quantikz}}

\newcommand{\figinject}{\begin{quantikz}[column sep=2.20mm, row sep=3.0mm]
\lstick{$V_0$} & \ctrl{3} & \ctrl{6} & \targ{} & \qw & \targ{} & \ctrl{4} & \qw & \qw & \qw & \qw & \qw & \qw & \ctrl{4} & \targ{} & \ctrl{4} & \targ{} & \ctrl{6} & \qw & \qw \\
\lstick{$V_1$} & \qw & \qw & \qw & \ctrl{3} & \ctrl{-1} & \targ{} & \qw & \targ{} & \ctrl{4} & \ctrl{4} & \targ{} & \ctrl{4} & \targ{} & \ctrl{-1} & \qw & \qw & \qw & \qw & \qw \\
\lstick{$0$} & \qw & \qw & \qw & \qw & \qw & \qw & \ctrl{3} & \ctrl{-1} & \targ{} & \targ{} & \ctrl{-1} & \qw & \qw & \qw & \qw & \qw & \qw & \qw & \qw \\
\lstick{$\gamma_0$} & \targ{} & \qw & \ctrl{-3} & \qw & \qw & \qw & \qw & \qw & \qw & \qw & \qw & \qw & \qw & \qw & \qw & \ctrl{-3} & \qw & \targ{} & \qw \\
\lstick{$\gamma_1$} & \qw & \qw & \qw & \targ{} & \qw & \ctrl{-3} & \qw & \qw & \qw & \qw & \qw & \qw & \ctrl{-3} & \qw & \targ{} & \qw & \qw & \qw & \qw \\
\lstick{$\gamma_2$} & \qw & \qw & \qw & \qw & \qw & \qw & \targ{} & \qw & \ctrl{-3} & \ctrl{-3} & \qw & \targ{} & \qw & \qw & \qw & \qw & \qw & \qw & \qw \\
 & \qw & \targ{} & \ctrl{-3} & \qw & \qw & \qw & \qw & \qw & \qw & \qw & \qw & \qw & \qw & \qw & \qw & \ctrl{-3} & \targ{} & \ctrl{-3} & \qw \\
\end{quantikz}}

\newcommand{\figweightthree}{\begin{quantikz}[column sep=2.20mm, row sep=3.0mm]
\lstick{$q_{1,1}$} & \qw & \qw & \qw & \qw & \qw & \qw & \ctrl{1} & \qw & \qw & \qw & \qw & \qw & \qw \\
\lstick{$F$} & \gate[6]{R_{ZZ}} & \gate[5]{R_{ZZ}} & \gate[4]{R_{ZZ}} & \gate[3]{R_{ZZ}} & \gate{Z} & \qw & \targ{} & \gate[6]{R_{ZZ}} & \gate[5]{R_{ZZ}} & \gate[4]{R_{ZZ}} & \gate{Z} & \qw & \qw \\
\lstick{$y_0$} & \linethrough & \linethrough & \linethrough & \linethrough & \qw & \gate{Z} & \qw & \linethrough & \linethrough & \linethrough & \qw & \qw & \qw \\
\lstick{$y_1$} & \linethrough & \linethrough & \linethrough &  & \qw & \qw & \qw & \linethrough & \linethrough & \linethrough & \qw & \gate{Z} & \qw \\
\lstick{$y_2$} & \linethrough & \linethrough &  & \qw & \qw & \qw & \qw & \linethrough & \linethrough &  & \qw & \qw & \qw \\
\lstick{$y_3$} & \linethrough &  & \qw & \qw & \qw & \qw & \qw & \linethrough &  & \qw & \qw & \qw & \qw \\
\lstick{$y_4$} &  & \qw & \qw & \qw & \qw & \qw & \qw &  & \qw & \qw & \qw & \qw & \qw \\
\end{quantikz}}

\newcommand{\figframes}{\begin{quantikz}[column sep=2.20mm, row sep=3.0mm]
\lstick{$q_{1,1}$} & \qw & \qw & \qw & \qw & \qw & \qw & \qw & \ctrl{1} & \ctrl{6} & \ctrl{5} & \ctrl{4} & \qw & \qw & \qw & \qw & \qw & \qw \\
\lstick{$F$} & \qw & \qw & \qw & \qw & \gate{Z} & \qw & \ctrl{2} & \targ{} & \qw & \qw & \qw & \qw & \qw & \qw & \gate{Z} & \qw & \qw \\
\lstick{$y_0$} & \qw & \qw & \qw & \qw & \qw & \gate{Z} & \qw & \qw & \qw & \qw & \qw & \qw & \qw & \qw & \qw & \qw & \qw \\
\lstick{$y_1$} & \qw & \qw & \qw & \gate{R_z} & \qw & \qw & \targ{} & \qw & \qw & \qw & \qw & \qw & \qw & \qw & \qw & \gate{Z} & \qw \\
\lstick{$y_2$} & \qw & \qw & \gate{R_z} & \qw & \qw & \qw & \qw & \qw & \qw & \qw & \targ{} & \qw & \qw & \gate{R_z} & \qw & \qw & \qw \\
\lstick{$y_3$} & \qw & \gate{R_z} & \qw & \qw & \qw & \qw & \qw & \qw & \qw & \targ{} & \qw & \qw & \gate{R_z} & \qw & \qw & \qw & \qw \\
\lstick{$y_4$} & \gate{R_z} & \qw & \qw & \qw & \qw & \qw & \qw & \qw & \targ{} & \qw & \qw & \gate{R_z} & \qw & \qw & \qw & \qw & \qw \\
\end{quantikz}}

\newcommand{\figcompetitor}{\begin{quantikz}[column sep=2.20mm, row sep=3.0mm]
\lstick{$n_{1,0}$} & \ctrl{1} & \qw & \qw & \qw & \qw & \qw & \qw & \qw & \qw & \qw & \qw & \qw & \qw & \qw \\
\lstick{$n_{1,1}$} & \targ{} & \qw & \qw & \qw & \qw & \ctrl{5} & \qw & \qw & \qw & \qw & \qw & \ctrl{5} & \qw & \qw \\
\lstick{$n_{3,0}$} & \qw & \ctrl{1} & \qw & \qw & \qw & \qw & \qw & \qw & \qw & \qw & \qw & \qw & \qw & \qw \\
\lstick{$n_{3,1}$} & \qw & \targ{} & \qw & \qw & \qw & \qw & \qw & \qw & \qw & \ctrl{3} & \qw & \qw & \qw & \qw \\
\lstick{$n_{5,0}$} & \qw & \qw & \ctrl{1} & \qw & \qw & \qw & \qw & \qw & \qw & \qw & \qw & \qw & \qw & \qw \\
\lstick{$n_{5,1}$} & \qw & \qw & \targ{} & \ctrl{1} & \qw & \qw & \qw & \ctrl{1} & \qw & \qw & \qw & \qw & \qw & \qw \\
\lstick{$0\ (T)$} & \qw & \qw & \qw & \targ{} & \gate{R_z} & \targ{} & \gate{R_z} & \targ{} & \gate{R_z} & \targ{} & \gate{R_z} & \targ{} & \gate{R_z} & \qw \\
\end{quantikz}}

\newcommand{\figftskeleton}{
\begin{quantikz}[column sep=2.2mm, row sep=3.2mm]
\lstick{$n$ (system)} & \qw & \gate[2]{|m|} & \gate[3]{\times_e} & \gate[3]{\Sigma^{\pm}_e}
  & \qw & \qw & \qw & \gate[3]{\Sigma^{\pm\dagger}_e} & \gate[3]{\times_e^{\dagger}}
  & \gate[2]{|m|^{\dagger}} & \rstick{$n$} \qw \\
\lstick{$0$ ($|m|$)} & \qw & \qw & \qw & \qw & \qw & \qw & \qw & \qw & \qw & \qw
  & \rstick{$0$} \qw \\
\lstick{$0$ ($P_{pq}$, $C_e$, $C_e^2$)} & \qw & \qw & \qw & \qw & \gate{\mathrm{sq}_e}
  & \gate[2]{+} & \gate{\mathrm{sq}_e^{\dagger}} & \qw & \qw & \qw & \rstick{$0$} \qw \\
\lstick{$\ket{\gamma_L}$} & \qw & \qw & \qw & \qw & \qw & \qw & \qw & \qw & \qw & \qw
  & \rstick{$e^{-i\theta V}\ket{\gamma_L}$} \qw
\end{quantikz}
}

\newcommand{\fignisqgadget}{
\begin{quantikz}[column sep=2.4mm, row sep=3.4mm]
\lstick{$q$ (system, $\Delta$ basis)} & \qw & \qw & \gate[3]{\mathcal{A}_{+}} & \qw & \qw
  & \qw & \gate[3]{\mathcal{A}_{-}} & \qw & \rstick{$q$} \qw \\
\lstick{$0$ (fold $F$)} & \qw & \qw & \qw & \qw & \qw & \qw & \qw & \qw & \rstick{$0$} \qw \\
\lstick{$\ket{0}^{\otimes w}$} & \qw & \gate{H^{\otimes w}} & \qw
  & \gate{\mathrm{QFT}_{\rm ns}^\dagger} & \gate{e^{-i\theta 4^{r} v^2}} & \gate{\mathrm{QFT}_{\rm ns}}
  & \qw & \gate{H^{\otimes w}} & \rstick{$\ket{0}^{\otimes w}$} \qw
\end{quantikz}
}

\newcommand{\figbaselinewfour}{
\begin{quantikz}[column sep=2.4mm, row sep=3.0mm]
\lstick{$n_{a,j}$} & \ctrl{1} & \qw & \qw & \qw & \qw & \qw & \ctrl{1} & \qw \\
\lstick{$n_{b,k}$} & \targ{} & \ctrl{1} & \qw & \qw & \qw & \ctrl{1} & \targ{} & \qw \\
\lstick{$n_{c,l}$} & \qw & \targ{} & \ctrl{1} & \qw & \ctrl{1} & \targ{} & \qw & \qw \\
\lstick{$n_{d,m}$} & \qw & \qw & \targ{} & \gate{R_z(\theta_{jklm})} & \targ{}
  & \qw & \qw & \qw
\end{quantikz}
}

\newcommand{\figcentredf}{
\begin{quantikz}[column sep=3.2mm, row sep=3.2mm]
\lstick{$n_0$} & \gate{P(\phi_0)} & \gate[3]{\mathrm{QFT}^\dagger}
  & \gate{P(\phi_0)} & \qw \\
\lstick{$n_1$} & \gate{P(\phi_1)} & \qw & \gate{P(\phi_1)} & \qw \\
\lstick{$n_2$} & \gate{P(\phi_2)} & \qw & \gate{P(\phi_2)} & \qw
\end{quantikz}
}

\begin{document}
	
	\title{Sum-of-squares compilation of matrix-model dynamics\\
		for NISQ and fault-tolerant hardware}
	
	
	\author{Ayanendu Dutta}
	\email[Corresponding author: ]{ayanendudutta@gmail.com}
	\affiliation{Department of Physics, Jadavpur University, Kolkata-700032, India}
	\affiliation{Department of Physics, Presidency University, Kolkata-700073, India}
	
	\author{Nabanita Sarkar}
	\email{nabanitas.official@gmail.com}
	\affiliation{Department of Mathematics, Jadavpur University, Kolkata-700032, India}
	
	
	
	\date{\today}
	
	\begin{abstract}
		The $\mathrm{SU}(N)$ matrix-model potential
		$V=-\tfrac{g^2}{4}\sum_{I,J}\mathrm{Tr}[X_I,X_J]^2$, the interaction of the
		BFSS/BMN family, whose thermal large-$N$ states are dual to black holes, is
		diagonal in the coordinate basis and is compiled in the literature as a phase
		polynomial with $O(d^2N^4Q^4)$ rotations per Trotter step. That cost is close to
		optimal for its gate class: we show that for a generic Trotter step the Walsh
		support $m_{\min}$ lower-bounds both the rotation count and the two-qubit count
		of every circuit over $\{\mathrm{CNOT},\mathrm{CZ},\mathrm{CPhase},R_{ZZ},
		\mathrm{SWAP},X,R_z\}$, ancillas included, and we build a parity network
		attaining $1.01$--$1.10\,m_{\min}$. Improvement therefore requires a gate outside
		that class, and the exact factorisation of $V$ into a sum of squared commutator
		entries supplies one in each era. For NISQ, Hadamard: each squared entry is
		added into a Fourier register and squared there, $\Theta(d^2N^3Q^3)$ two-qubit
		gates. For fault tolerance, $T$: reversible arithmetic with one phase-gradient
		injection per squared entry, $\Theta(d^2N^3Q^2)$ T gates and no arbitrary-angle
		rotation. Both are built and verified. With a native $R_{ZZ}$ the NISQ circuit
		uses fewer two-qubit gates than \emph{every} parity network from $Q=6$ at
		$\mathrm{SU}(2)$ and $Q=2$ at $\mathrm{SU}(5)$; the oracle beats a
		precision-matched Hamming-weight-phasing incumbent at every point built except
		the smallest, by up to $11.9\times$ in T count and, at equal width, $25.5\times$
		in surface-code spacetime. An ablation with the same
		arithmetic but no factorisation costs up to $21\times$ more and loses to that
		incumbent at $15$ of $17$ points, so the factorisation, not the arithmetic,
		produces the win. Both constructions survive the supersymmetric deformation to
		mini-BMN with fermions, and complete Trotter evolutions on a statevector
		simulator reproduce the exact product formula to $2\times10^{-12}$.
	\end{abstract}
	\maketitle

\tableofcontents

\section{Introduction}
\label{sec:intro}

The $\mathrm{SU}(N)$ matrix models obtained by reducing maximally supersymmetric
Yang--Mills theory to one dimension, i.e., the BFSS model~\cite{bfss} and its mass
deformation, the BMN model~\cite{bmn}, describe $d$ Hermitian $N\times N$
matrices coupled through the quartic commutator square $-\mathrm{Tr}\,[X_I,X_J]^2$.
Their interest is that they are ordinary quantum mechanics and, at the same time, a
non-perturbative definition of a gravitational system, because at strong coupling
and large $N$ the thermal states of the D0-brane system are dual to a black
zero-brane geometry of type IIA supergravity~\cite{imsy}. Questions that a
gravitational description cannot settle on its own --- how infalling matter is
scrambled, how a horizon thermalises, what survives at finite $N$ where the geometry
is only approximate --- become questions about the real-time evolution of a
Hamiltonian with finitely many bosonic degrees of freedom. The same commutator
square, in the same coordinate variables, also carries the orbifold-lattice
formulation of Yang--Mills theory~\cite{buser2021,bergner2024}, so the target is not
a single model but a family that reaches as far as QCD.

What has been established about that Hamiltonian is almost entirely thermal.
Euclidean lattice Monte Carlo reproduces the supergravity energy of the D0-brane
system, $\alpha'$ corrections included, at high
precision~\cite{hanada2014,berkowitz2016,mcsmc2023}, which is direct evidence that
the duality holds where it can be tested. Real time is where the picture stops. The
classical limit is chaotic, with a measured Lyapunov spectrum~\cite{gurari2016}, and
quantum corrections have been estimated inside a Gaussian truncation of the density
matrix~\cite{buividovich2019}, but neither is a controlled calculation of the quantum
dynamics, and Euclidean Monte Carlo cannot supply one because recovering real-time
correlators from imaginary-time data is an exponentially ill-conditioned inversion.
The observables that would actually probe the black-hole interpretation, among them
scrambling times, out-of-time-order correlators and the approach to equilibrium, are
therefore open, and a quantum computer is the natural instrument for
them~\cite{bauer2023}. Writing the truncated Hamiltonian on qubits is by now a solved
problem. What is not solved is the cost of the circuit that implements one step of
its time evolution, and that cost is highest in exactly the corner, large $N$,
fine grids, long evolutions, where the holographic physics lives. Gate count is
therefore the quantity that has to come down, with the two-qubit count setting the
accumulated error on NISQ hardware and the $T$ count the magic-state overhead under a
surface code.

Digital proposals exist. A framework for BFSS and BMN on a quantum computer was set
out in Ref.~\cite{gharibyan2021}, the orbifold lattice was introduced as a
formulation that truncates systematically and extends to
QCD~\cite{buser2021,bergner2024}, and small matrix models including mini-BMN have
been studied variationally and with deep
learning~\cite{hanhartnoll2020,rinaldi2022,asplund2016}. Resource counting, however,
is now dominated by the coordinate-basis (``universal framework'')
truncation~\cite{halimeh2026,hanada2025}, in which each boson carries $Q$ qubits
holding a point of a uniform grid and the potential becomes a sum of Pauli-$Z$
strings of weight at most four, in the same form for every $N$, $d$ and volume. Fermions have been added in the same language~\cite{halimeh2025ferm},
gauge invariance has been analysed~\cite{hanada2025gauge}, and Hamiltonians with
fewer terms have been proposed~\cite{bergner2026}. Those works also fix the cost,
since the potential term of one Suzuki--Trotter step grows as
$d(d-1)N^4Q^4$~\cite{halimeh2026}, later estimates retain that growth, and circuit
optimisation is named as the outstanding task~\cite{hanada2025gauge}. Hardware
agrees. The first digital simulation of a single-matrix $\mathrm{SU}(2)$ model on a
trapped-ion device found circuits of $\gtrsim10^3$ two-qubit gates noise-dominated
and restricted to the smallest truncation, identifying qubit count and circuit depth
rather than physics as what keeps holographic matrix models out of
reach~\cite{hartnett2026}.

This paper asks how far that cost can be pushed down, separately for the two
hardware eras, without changing the Hamiltonian simulated. The potential is diagonal
in the coordinate basis, so every compilation of it is a phase polynomial, and the
question is how cheaply one with $\Theta(d^2N^4Q^4)$ Walsh characters can be applied.
We measure ourselves against the two incumbents that own the field, namely
parity-network compilation on the NISQ side, taken at full strength rather than as
published, and the fault-tolerant estimate of Ref.~\cite{halimeh2026} in the
strongest built form we could construct.

Compiling circuits of this kind is a mature subject. For diagonal phases the
governing object is the circuit--polynomial correspondence and the parity networks it
licenses, developed through the Walsh description of diagonal
unitaries~\cite{welch2014}, {\sc GraySynth} and the CNOT-complexity of CNOT--phase
circuits~\cite{amy2018}, rule-based optimisation~\cite{nam2018} and
architecture-aware phase-polynomial synthesis~\cite{vandaele2022}. More broadly,
terms are grouped into commuting clusters and diagonalised
simultaneously~\cite{vandenberg2020,murairi2023}, resynthesised as phase or Pauli
gadgets in the ZX-calculus~\cite{cowtan2019,kissinger2020}, scheduled block-wise with
routing folded into synthesis~\cite{paulihedral2022,tket2021,sabre2019}, or searched
over by learned agents~\cite{alphatensor2025}. Orthogonal savings come from the
product formula itself~\cite{childs2021,tran2020,layden2022}, from working below the
circuit model~\cite{clinton2021} and from constructions of time-independent
depth~\cite{kokcu2022}, while fault-tolerant cost is lowered by cheaper primitives
such as the measurement-uncomputed logical AND~\cite{gidney2018}, Hamming-weight
phasing~\cite{kivlichan2020,campbell2021}, ancilla-free rotation
synthesis~\cite{rossselinger2016}, linear-$T$ table lookup~\cite{babbush2018} and
fermionic swap networks~\cite{kivlichan2018}. Lattice field theory has driven much of
this, from instruction-level estimates for Kogut--Susskind QCD~\cite{kannam2021} and
general algorithms for non-Abelian theories~\cite{kan2021} to near-optimal
techniques~\cite{rhodes2024}, explicit $\mathrm{SU}(3)$ circuits~\cite{balaji2025},
an arithmetic compilation of a quartic lattice potential~\cite{madhav2026} and the
question of how many gates a gauge theory needs~\cite{murairi2022}. All of these optimise \emph{within} a fixed gate class
over a fixed list of terms, which is also their limit, and it is the limit we first
quantify and then leave.

Throughout, the model is the bosonic potential~\eqref{eq:H} on the centred
grid~\eqref{eq:grid} with $Q$ qubits per boson, extended in Sec.~\ref{sec:bmn} to
mini-BMN, and counts refer to the potential term of one Trotter step unless stated
otherwise. The kinetic layer, and for mini-BMN the fermion layer, are the same
circuit in every compilation compared here, so they dilute ratios but cannot change
rankings. Two-qubit counts are logical, NISQ counts are given with a native $R_{ZZ}$
and with CNOT only, and fault-tolerant counts use the Gidney AND
model~\cite{gidney2018}.

Three results follow. The first is that the standard compilation is already optimal
within its own gate class. For a generic Trotter step the Walsh support $m_{\min}$
lower-bounds the rotation count of every $\{\mathrm{CNOT},X,R_z\}$ circuit and the
two-qubit count of every
$\{\mathrm{CNOT},\mathrm{CZ},\mathrm{CPhase},R_{ZZ},\mathrm{SWAP},X,R_z\}$ circuit,
ancillas included (Lemmas~\ref{lem:anc} and~\ref{lem:twoq}). We compute $m_{\min}$
exactly, give it in closed form for every $N$, and build a structure-aware parity
network that attains $1.009$--$1.10\,m_{\min}$, so the bound is a yardstick rather
than an abstraction: improvement requires leaving the gate class
(Sec.~\ref{sec:opt}).

The second is a pair of constructions that do leave it, one per era, built on the
exact factorisation of the potential into squared commutator entries. On NISQ
hardware the circuit adds each entry into a Fourier register and squares it there as
a quadratic form, which costs $12Q^3+O(Q^2)$ two-qubit gates at $\mathrm{SU}(2)$ in
both gate sets against $m_{\min}=\tfrac92Q^4+O(Q^3)$ (Sec.~\ref{sec:nisq}).
Fault-tolerantly the oracle evaluates each entry by reversible arithmetic and injects
$\mu_eC_e^2$ into a catalytic phase-gradient register, so that no arbitrary-angle
rotation appears anywhere and there is no synthesis error (Sec.~\ref{sec:ft}). Both
are built and verified against closed-form resource formulas, and the floor grows as
$\Theta(d^2N^4Q^4)$ \cite{halimeh2026} against $\Theta(d^2N^3Q^3)$ and $\Theta(d^2N^3Q^2)$ for ours
(Sec.~\ref{sec:scaling}).

The third is that the factorisation rather than the arithmetic is what buys this.
Applying the same arithmetic and pricing to the quartic monomials costs
$1.9$--$20.7\times$ more, and conversely the kinetic term, already optimal as a phase
polynomial, becomes $5.4$--$9.5\times$ more expensive compiled arithmetically
(Sec.~\ref{sec:ablation}). The mini-BMN potential is again a single sum of squares,
both constructions absorb the Myers and mass terms inside the same squares, and the
crossovers move by at most one step in $Q$ (Sec.~\ref{sec:bmn}).

\begin{table}[htbp]\centering\small
\setlength{\tabcolsep}{5pt}
\caption{Where each method wins, and by how much. $Q^\ast$ is the smallest number of qubits per boson, within the range built, at which our circuit is no more expensive than the incumbent; below $Q^\ast$ the incumbent is cheaper, and the entry $>Q$ means the crossover lies beyond the largest $Q$ built. NISQ columns compare against $m_{\min}$, the two-qubit floor of \emph{every} parity network (Lemma~\ref{lem:twoq}), in the native-$R_{ZZ}$ and CNOT-only gate sets; the all-to-all logical counts are exact for trapped-ion hardware. Heavy-hex (distance~7, 115 qubits) and square ($10\times12$) columns compare the native-$R_{ZZ}$ emission against the built structure-aware network after routing both with the same router (Sec.~\ref{sec:routing}); ``--'' marks points not routed. FT compares T gates against precision-matched Hamming-weight phasing (unbatched), the primary fault-tolerant incumbent of Sec.~\ref{sec:incumbents}. The last two columns give the ratio incumbent/ours \emph{for the potential term} at the largest $Q$ built, which is also the largest ratio measured; full-step ratios, which add the layers common to every compilation, are a few per cent to $16\%$ smaller and are in Tables~\ref{tab:ftsu2}--\ref{tab:ftmatrix}, \ref{tab:fullstep} and~\ref{tab:bmnstep}. $\mathrm{SU}(2)$ appears on both coordinate routes (Sec.~\ref{sec:matrixform}); mini-BMN is $\mathrm{SU}(2)$, $d=3$, $\kappa=1$ (Sec.~\ref{sec:bmn}), where the NISQ columns take the better of the affine and split circuits.}
\label{tab:crossover}
\begin{tabular}{llcccccrr}
\hline
model & route & \multicolumn{2}{c}{NISQ $Q^\ast$} & \multicolumn{2}{c}{routed $Q^\ast$ (RZZ)} & FT & NISQ ratio & FT ratio\\ & & RZZ & CNOT & heavy-hex & square & $Q^\ast$ & \multicolumn{2}{c}{potential term}\\ & & & & & & & (RZZ) & (HWP)\\
\hline
$\mathrm{SU}(2)$ & adjoint & 6 & 7 & 7 & 7 & 3 & 2.40 ($Q{=}10$) & 3.94 ($Q{=}6$)\\
$\mathrm{SU}(2)$ & matrix & $>4$ & $>4$ & -- & -- & 2 & 0.86 ($Q{=}4$) & 2.82 ($Q{=}4$)\\
$\mathrm{SU}(3)$ & matrix & 3 & 4 & 4 & 4 & 2 & 1.82 ($Q{=}4$) & 5.20 ($Q{=}4$)\\
$\mathrm{SU}(4)$ & matrix & 3 & 3 & 3 & 3 & 2 & 2.71 ($Q{=}4$) & 8.16 ($Q{=}4$)\\
$\mathrm{SU}(5)$ & matrix & 2 & 3 & 2 & 2 & 2 & 3.82 ($Q{=}4$) & 11.91 ($Q{=}4$)\\
mini-BMN & adjoint & 6 & 7 & -- & -- & 4 & 1.37 ($Q{=}7$) & 3.30 ($Q{=}6$)\\
\hline
\end{tabular}\end{table}

Table~\ref{tab:crossover} states where each method wins, and every crossover in it is
a lower bound on $N$ or on $Q$. That is the right way round, because holography is a
large-$N$ statement, large $N$ forces a large grid per
boson~\cite{hanada2025,hartnett2026}, and our advantage grows along both axes
(Sec.~\ref{sec:whichway}). 

Several ingredients are used and not claimed. The circuit--polynomial correspondence
and parity-network synthesis are taken from Ref.~\cite{amy2018} and the Walsh
description of diagonal unitaries from Ref.~\cite{welch2014}. Driving a phase by
kickback from reversible arithmetic is standard~\cite{kitaev2002}, including for
gauge theory~\cite{kan2021} and a quartic lattice potential~\cite{madhav2026}. We take Gidney's logical AND, the catalytic phase-gradient
register and Hamming-weight phasing~\cite{gidney2018,kivlichan2020,campbell2021}
together with the Cuccaro adder~\cite{cuccaro2004}. Steps 1--3 of our NISQ gadget are Fourier-basis
addition~\cite{draper2000,ruiz2017} as used to write a polynomial of binary variables
into a register by Hadamards, controlled phases and an inverse
QFT~\cite{gilliam2021,bucher2025}, while the simplifications S1--S3 rest on Draper
triangularity and parity-network bookkeeping. Sum-of-squares factorisation has been
exploited before in quantum chemistry~\cite{sa2025}. The formulation we
compile and the estimate we compare against are those of
Refs.~\cite{halimeh2026,hanada2025,halimeh2025ferm}, the Trotter error we quote is
the interference-aware leading order of Refs.~\cite{layden2022,tran2020}, and the
mini-BMN model and the fact that its bosonic potential is a total square are due to
Refs.~\cite{asplund2016} and~\cite{hanhartnoll2020}. The fermion layer uses
Jordan--Wigner and weighted Hamming-weight phasing, and the surface-code numbers the
estimator of Ref.~\cite{beverland2022}. The bookkeeping F1--F3 and the kinetic rewrites K1--K4 are
standard practice, and Sec.~\ref{sec:aer} validates the circuits.

What is new here, to the best of our knowledge, is the following. The two-qubit floor
of Lemma~\ref{lem:twoq} is elementary within the framework of Ref.~\cite{amy2018},
but the exactly computed supports and the closed forms~\eqref{eq:mminsu2}
and~\eqref{eq:mminN} are not available elsewhere. The exact commutator sum of squares
has not been used as the compilation unit for matrix models before, nor realised
either by Fourier-register squaring or by integer arithmetic with one phase-gradient
injection per squared entry, and we show that the first beats every parity network
and the second a precision-matched Hamming-weight-phasing incumbent in the regimes
stated. The closed-form resource formulas are verified gate for gate against the
built circuits, and the ablation isolates the factorisation rather than the
arithmetic as the source of the fault-tolerant win. Carrying the mini-BMN total
square as the compilation unit in both eras is likewise new, as is the first
gate-level, verified Trotter step of mini-BMN with fermions in the coordinate basis,
priced for NISQ, fault-tolerant and surface-code hardware. The ancilla-invariance of
Lemma~\ref{lem:anc} follows from the correspondence and is not claimed separately.

Sections~\ref{sec:model}--\ref{sec:opt} set up the factorisation, the cost model and
the bound, Secs.~\ref{sec:nisq}--\ref{sec:ft} give the two constructions, and
Secs.~\ref{sec:scaling}--\ref{sec:ablation} isolate why they work. The rest is
evidence that the numbers are real: the full Trotter step and a surface-code
estimate (Sec.~\ref{sec:kinetic}), complete statevector time evolutions
(Sec.~\ref{sec:aer}) and the supersymmetric extension (Sec.~\ref{sec:bmn}). The appendices hold secondary
baselines, routed counts, transcribed circuits, the verification protocol and the
negative results, and the code package~\cite{codepackage} reproduces every number and
figure.

\section{Model, encoding and the sum-of-squares identity}
\label{sec:model}

\subsection{Hamiltonian and coordinates}
For $d$ bosonic $N\times N$ Hermitian matrices $X_I$,
\begin{equation}
  H=\tfrac12\sum_I\mathrm{Tr}\,P_I^2+V,\qquad
  V=-\frac{g^2}{4}\sum_{I,J}\mathrm{Tr}\,[X_I,X_J]^2 .
  \label{eq:H}
\end{equation}
\emph{Adjoint coordinates}: $X_I=\sum_aX_I^a\tau^a$ with
$\mathrm{Tr}(\tau^a\tau^b)=\delta^{ab}$, $[\tau^a,\tau^b]=if^{abc}\tau^c$,
giving $d(N^2-1)$ bosons. \emph{Matrix-entry coordinates}: $X_I=S_I+iA_I$ with
$S_I$ real symmetric and $A_I$ real antisymmetric, giving $dN^2$ bosons, the
``without traceless condition'' form of Ref.~\cite{halimeh2026}. The trace part
is free, decoupled, and absent from every commutator.

Each boson carries $Q$ qubits holding $n\in[0,\Lambda)$, $\Lambda=2^Q$, with
\begin{equation}
  \hat x=\delta\Big(n-\tfrac{\Lambda-1}{2}\Big)=\tfrac{\delta}{2}m,
  \qquad m\equiv2n-(\Lambda-1),
  \label{eq:grid}
\end{equation}
so $m$ is an \emph{odd} integer in $[-(\Lambda-1),\Lambda-1]$. Writing
$n=\sum_j2^{\,j}b_j$ gives $m=\sum_j2^{\,j}(2b_j-1)$, that is, as an operator,
\begin{equation}
  \boxed{\;m=-\sum_{j=0}^{Q-1}2^{\,j}Z_j\;}
  \label{eq:mZ}
\end{equation}
exactly, with no constant term: every coordinate is a pure weight-one Walsh
operator, and every bilinear $m_pm_q$ of distinct bosons is a pure weight-two
operator with $Q^2$ terms of coefficient $2^{\,j+k}$. Everything below rests on
this identity, the lower bound of Sec.~\ref{sec:opt} because it fixes the
Walsh support exactly, and both circuits because it makes the potential an exact
integer.

\subsection{Absolute value and sign are free}
\label{sec:absm}
Since $m=2n-(2^Q-1)$, $m>0$ iff $\mathrm{MSB}(n)=1$: the sign is one qubit of
the register. For the magnitude set $n'=n$ if $\mathrm{MSB}(n)=1$ and
$n'=\lnot n=\Lambda-1-n$ otherwise; in both cases $|m|=2n'-\Lambda+1$, and with
$n'=2^{Q-1}+n''$,
\begin{equation}
  |m|=2n''+1,
\end{equation}
whose binary form is $[1,n'_0,\dots,n'_{Q-2}]$. The circuit
(Fig.~\ref{fig:absm}) uses \textbf{zero AND gates}: $2Q$ CNOTs and $Q$ $X$ gates
per boson, all Clifford. Every multiplication in the fault-tolerant oracle of
Sec.~\ref{sec:ft} is therefore \emph{unsigned}; the sign travels as one control
bit and is applied once, to the accumulation, by a conditional negation. This is
not a micro-optimisation: a signed multiplier would carry the sign through every
partial product, and the potential has $\Theta(N^3)$ products per matrix pair.

\subsection{The potential is an exact sum of squares}
\label{sec:identities}
For Hermitian $X_I$, $-\mathrm{Tr}[X_I,X_J]^2=\|[X_I,X_J]\|_F^2\ge0$, hence
\begin{equation}
  V=\frac{g^2}{2}\sum_{I<J}\big\|[X_I,X_J]\big\|_F^2 .
  \label{eq:sos}
\end{equation}
This is the compilation unit of both methods: instead of the $\Theta(N^4)$
quartic monomials of the expanded potential, the object to evaluate is the
$\Theta(N^2)$ commutator entries, each a bilinear with $\Theta(N)$ terms, and
each squared exactly once.

\paragraph{$\mathrm{SU}(2)$.} With $\tau^a=\sigma^a/\sqrt2$,
$f^{abc}=\sqrt2\,\varepsilon^{abc}$, so that $V=g^2\sum_{I<J}|\vec v_I\times\vec
v_J|^2$. By the Lagrange identity
\begin{equation}
  V=g^2\sum_{I<J}\big(R_IR_J-S_{IJ}^2\big),\qquad
  R_I=\sum_a(X_I^a)^2,\quad S_{IJ}=\sum_aX_I^aX_J^a ,
  \label{eq:gram}
\end{equation}
and componentwise
\begin{equation}
  V=g^2\sum_{I<J}\sum_{c=1}^{3}\big(K^c_{IJ}\big)^2,\qquad
  K^c_{IJ}=\varepsilon^{abc}X_I^aX_J^b .
  \label{eq:cross}
\end{equation}
Each $K^c$ is a weight-two operator with $2Q^2$ terms. Both methods compile the
cross-product form~\eqref{eq:cross}, which is the cheaper of the two in both
eras; the Gram form~\eqref{eq:gram} is used as the independent reference
Hamiltonian in Sec.~\ref{sec:aer}.

\paragraph{Matrix-entry form.}
\label{sec:matrixform}
With $X_I=S_I+iA_I$,
$[X_I,X_J]=([S_I,S_J]-[A_I,A_J])+i([S_I,A_J]+[A_I,S_J])$, and because
$(S_IS_J)^\top=S_JS_I$, $(A_IA_J)^\top=A_JA_I$, $(S_IA_J)^\top=-A_JS_I$, each
commutator needs one real matrix product, four per pair; every coefficient is
$\pm1$.

\begin{corollary}[exact integer potential]
\label{cor:integer}
In matrix-entry coordinates on the grid~\eqref{eq:grid},
$V_{\rm int}=\sum_{I<J}\sum_{ij}|[X_I,X_J]_{ij}|^2$ in $m$-units is a
non-negative integer, and $V=\tfrac{g^2}{2}(\tfrac{\delta}{2})^4V_{\rm int}$.
\end{corollary}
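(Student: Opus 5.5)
The plan is to substitute the grid coordinates into the matrix-entry decomposition and track the scale factors. In matrix-entry coordinates the independent bosons are the real entries of $S_I$ and $A_I$. I will write each coordinate as $\hat x=\tfrac{\delta}{2}m$, as in~\eqref{eq:grid}, with $m$ an odd integer.

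The one point to fix first is the normalisation. The off-diagonal entries of $S_I$ and $A_I$ come in symmetric or antisymmetric pairs. I will take the convention that each such pair is carried by a single boson, whose value is the entry itself, so $(S_I)_{ij}=(S_I)_{ji}=\tfrac{\delta}{2}m$ and $(A_I)_{ij}=-(A_I)_{ji}=\tfrac{\delta}{2}m'$. With this convention every matrix entry of $X_I$ is $\tfrac{\delta}{2}$ times a Gaussian integer $s+ia$, where $s,a\in\mathbb{Z}$. I expect this normalisation to be the main obstacle. If the framework of Ref.~\cite{halimeh2026} instead rescales off-diagonal entries by $1/\sqrt2$, integrality would fail, so I will check that the ``without traceless condition'' encoding assigns one boson per independent real entry with unit coefficient. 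This is consistent with the remark that every coefficient is $\pm1$. The statement only needs integrality in this convention.

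Next I use the decomposition $[X_I,X_J]=([S_I,S_J]-[A_I,A_J])+i([S_I,A_J]+[A_I,S_J])$. Each entry of each of these real commutators is a sum of products of two entries with coefficients $\pm1$, so it equals $(\tfrac{\delta}{2})^2$ times an integer bilinear in the $m$'s. Hence $[X_I,X_J]_{ij}=(\tfrac{\delta}{2})^2(u_{ij}+iw_{ij})$ with $u_{ij},w_{ij}\in\mathbb{Z}$. It follows that $|[X_I,X_J]_{ij}|^2=(\tfrac{\delta}{2})^4(u_{ij}^2+w_{ij}^2)$. Each term $u_{ij}^2+w_{ij}^2$ is a non-negative integer, so $V_{\rm int}=\sum_{I<J}\sum_{ij}(u_{ij}^2+w_{ij}^2)\in\mathbb{Z}_{\ge0}$.

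Finally I invoke~\eqref{eq:sos}. Since $\|C\|_F^2=\sum_{ij}|C_{ij}|^2$, it gives $V=\tfrac{g^2}{2}\sum_{I<J}\|[X_I,X_J]\|_F^2=\tfrac{g^2}{2}(\tfrac{\delta}{2})^4V_{\rm int}$. As a last check I confirm that the trace part drops out of every commutator. It therefore contributes nothing to $V_{\rm int}$, and its omission from the boson list does not affect the claim.
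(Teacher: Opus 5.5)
Your argument is correct and is essentially the paper's own justification. The paper reads the corollary off from the decomposition $[X_I,X_J]=([S_I,S_J]-[A_I,A_J])+i([S_I,A_J]+[A_I,S_J])$ with $\pm1$ coefficients on unit-normalised entries (the convention you adopt, and the one the paper's ``every coefficient is $\pm1$'' presupposes), so that each commutator entry is $(\delta/2)^2$ times a Gaussian integer, and then applies the Frobenius form~\eqref{eq:sos}. One cosmetic point: in the ``without traceless condition'' form all $dN^2$ entries, trace part included, are kept as bosons, so nothing is omitted from the boson list, though your observation that the trace drops out of every commutator is true and harmless.
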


\begin{rules}
\noindent\textbf{Which coordinates are used, and why.} The analogue of
Corollary~\ref{cor:integer} is \emph{false} in adjoint coordinates for $N\ge3$:
the $\mathrm{SU}(3)$ structure constants include $f^{458}=f^{678}=\sqrt3/2$, so
the quartic coefficients of $V$ include $\sqrt3/2$ and $\sqrt3$, and for
$\mathrm{SU}(4)$, $15$ of $21$ distinct coefficients are non-dyadic.
Integrality is what puts the fault-tolerant phase on an exact dyadic ladder
(Sec.~\ref{sec:precision}), so \textbf{every $N\ge3$ circuit in this paper uses
matrix-entry coordinates}, while $\mathrm{SU}(2)$, where both routes are
integral, is reported on the adjoint (cross-product) route, which is the cheaper
of the two, with the matrix route given alongside wherever both were built.
$\mathrm{SU}(2)$ therefore carries two sets of numbers and they differ: at $Q=2$
the fault-tolerant oracle loses to the precision-matched incumbent on the
adjoint route ($0.77\times$) and wins on the matrix route ($1.13\times$). The
cheaper route giving the worse ratio is not a contradiction: the adjoint route is
cheaper in absolute cost for \emph{both} sides, and it is the incumbent that
gains more. Passing from matrix entries to adjoint coordinates at
$\mathrm{SU}(2)$, $Q=2$ takes the Walsh support, which sets the incumbent's cost,
from $186$ to $60$ characters --- a factor $3.1$ --- while our product count,
which sets ours, falls only from $12$ to $6$; our oracle drops from $2952$ to
$1500$ T and the incumbent from $3336$ to $1156$. A ratio is not a cost, and the
route that minimises our absolute cost is the one we report. Every table caption
below names the route it uses.
\end{rules}

Finally, since $\mathrm{Re}[X_I,X_J]$ is antisymmetric and
$\mathrm{Im}[X_I,X_J]$ symmetric,
\begin{equation}
  \|[X_I,X_J]\|_F^2=2\sum_{i<j}\mathrm{Re}_{ij}^2+2\sum_{i<j}\mathrm{Im}_{ij}^2
  +\sum_i\mathrm{Im}_{ii}^2 ,
\end{equation}
that is $N^2$ weighted squares per pair, with
\begin{equation}
   \#\mathrm{Re}_{i<j}=\#\mathrm{Im}_{i<j}=4(N-1),\qquad \#\mathrm{Im}_{ii}=2(N-1)
  \label{eq:nterms}
\end{equation}
candidate products per entry after $A_{ii}=0$ and after merging products that share a boson pair (the diagonal entries merge in pairs, which is why they carry half as many). Products sharing the same boson
pair are merged; both counts are checked against the circuits' own term lists.
All identities of this section are verified to machine precision, and the
integer oracle reproduces $-\tfrac{g^2}{4}\sum\mathrm{Tr}[X_I,X_J]^2$ from
Hermitian matrices to $5.6\times10^{-16}$ (Appendix~\ref{sec:verif}).

\section{Cost model, incumbents and precision}
\label{sec:cost}

Before any circuit is built, the rules of the comparison have to be fixed,
because a resource claim is only as strong as the incumbent it is measured
against. This section names \emph{one primary incumbent per era}; every headline
claim in the paper is made against that one, and the weaker baselines are
reported for completeness in Appendix~\ref{app:secondary}.

\begin{rules}
\noindent\textbf{Comparison rules.} (i)~Every incumbent compiles the \emph{same}
exact Walsh expansion of the same integer potential $V_{\rm int}$, on the same
grid and at the same truncation; no compilation here changes a term of the
Hamiltonian. (ii)~Every comparison is made \emph{within one gate set}: NISQ
counts against NISQ counts in the same gate set, T counts against T counts at
the same precision. (iii)~\textbf{Primary NISQ incumbent}: the proved two-qubit
floor $m_{\min}$ of every parity network (Lemma~\ref{lem:twoq}), together with a
built structure-aware network that nearly attains it (Sec.~\ref{sec:competitor}).
\textbf{Primary fault-tolerant incumbent}: Hamming-weight phasing of the
equal-angle classes with every rotation priced by its own dyadic angle
(Sec.~\ref{sec:incumbents}), built and verified. (iv)~Layers common to all
compilations, kinetic, and for mini-BMN fermionic, are the same circuit
everywhere, and are reported separately as well as inside full steps.
\end{rules}

\subsection{NISQ metrics}
\label{sec:nisqmetrics}
We count two-qubit gates and two-qubit depth of the emitted gate sequence (ASAP
schedule, commuting gates not reordered), for two gate sets: (i) native
arbitrary-angle $R_{ZZ}$ and CPhase, each one two-qubit gate (available on
hardware with a native $e^{-i\theta ZZ}$ interaction, e.g.\ the trapped-ion
devices noted in Ref.~\cite{halimeh2026}); (ii) CNOT only, with arbitrary
single-qubit rotations. Counts are logical unless stated otherwise. On an
all-to-all device, such as the trapped-ion hardware with native $ZZ$ that is also the
natural native-$R_{ZZ}$ platform, no routing is needed, so the logical counts
\emph{are} the routed counts there; heavy-hex and square-lattice counts are in
Sec.~\ref{sec:routing}.

The like-for-like rule matters here in a way that is easy to get wrong.
Reference~\cite{halimeh2026} compiles every Pauli-$Z$ string as a CNOT ladder,
one $R_z$ and the reverse ladder, and counts CNOTs as the NISQ metric, i.e.\
gate set (ii); it notes, but does not use, native arbitrary-angle $ZZ$ gates.
Comparing a native-$R_{ZZ}$ count against a CNOT count would flatter our circuit
by up to a factor of two. Here every comparison is made within one gate set, and
in each gate set our circuit is emitted in the better of two forms
(Sec.~\ref{sec:nisq}). The floor of Lemma~\ref{lem:twoq} covers both gate sets,
so it is the right yardstick for either.

\subsection{Fault-tolerant metrics and primitive costs}
\label{sec:costmodel}
We count T gates in the Gidney model~\cite{gidney2018}: a logical AND costs $4$
T to compute and $0$ to uncompute, and an $n$-bit ripple addition uses $n-1$
logical ANDs, so that
\begin{equation}
    T_{\rm add}(n)=T_{\rm incr}(n)=T_{\rm condneg}(n)=4(n-1) .
  \label{eq:Tprim}
\end{equation}
The circuits themselves use Cuccaro adders~\cite{cuccaro2004}
(Fig.~\ref{fig:adder}), which makes them exhaustively verifiable as classical
permutations; the Gidney count is charged on the cost-event log. The windowed
multiplier and the squarer, and the derivation of their costs
$T_{\rm mul}(w_a,w_b,w_o)$ and $T_{\rm sq}(w,w_o)$, are given in
Appendix~\ref{app:circuits}, Eqs.~\eqref{eq:Tmul} and~\eqref{eq:Tsq}. One
consequence deserves emphasis, because it is built into every formula below and
is the reason a compute--phase--uncompute oracle is not automatically cheap:
\emph{uncomputation is not free}. Only the ANDs inside a primitive uncompute for
free; running an adder backwards computes and uncomputes the same $n-1$ ANDs, so
its inverse costs the same $T_{\rm add}(n)$, and a compute--phase--uncompute
oracle pays a factor two that a phase polynomial does not. (Where a
\emph{standalone} two's-complement subtraction is needed, only in the
Gram-route oracle of Appendix~\ref{app:secondary}, it is built as a negation
followed by an addition and charged $8(n-1)$.)

\subsection{Precision and the dyadic ladder}
\label{sec:precision}
Our fault-tolerant oracle adds $A\cdot V_{\rm int}$ into an $L$-qubit catalytic
phase-gradient register, imprinting $e^{-2\pi iAV_{\rm int}/2^L}$ exactly. With
$A=2^k$ (each injection shifted by $k$, cost $\le T_{\rm add}(L)$) the Trotter
step is chosen on the dyadic ladder $\theta=2\pi\,2^{k-L}$, and the potential
phase carries \emph{no} synthesis error. A general $b$-bit $A$ costs one
addition per set bit and, because the injection is made per squared entry (F1,
Sec.~\ref{sec:ft}), per entry: at $\mathrm{SU}(2)$, $Q=2$, $A=5$, $11$ and $183$
raise the T-count from $1500$ to $1680$, $1860$ and $2292$ ($+53\%$ for six set
bits), with the payload $A\,V_{\rm int}\bmod 2^L$ verified on $60$ random basis
states each and the built counts equal to the closed form. This is why the step
is chosen on the ladder, and all tables use $A=1$.

At the same $\theta$ every incumbent rotation angle $\theta\hat\phi(y)$ is
dyadic too, and the same catalytic register applies it exactly by one
addition~\cite{gidney2018}. We therefore charge such a rotation by its own
angle, not by the register width: a phase $2\pi(\text{odd})/2^b$ costs
\begin{equation}
  T_d(b)=\begin{cases}0 & b\le2\quad (Z\text{ or }S,\ \text{Clifford}),\\
  1 & b=3\quad (T),\\ 4(b-1) & b\ge4\quad(\text{$b$-bit addition into }|\gamma_L\rangle),\end{cases}
  \label{eq:Td}
\end{equation}
and the number of significant bits of the incumbent's rotation on the character
$\chi_y$ is $b=L-1-\nu_2(\hat\phi(y))$: the factor $2^{\nu_2}$ of the
coefficient is Draper triangularity applied to the incumbent, and the extra $-1$
is phasing the parity \emph{bit} rather than the $\pm1$ character
($\chi=1-2b_{\rm par}$ doubles the angle). In Hamming-weight phasing the
rotation attached to weight bit $k$ gains a further $2^k$, i.e.\
$b=L-1-\nu_2-k$. Charging every incumbent rotation the flat
$T_{\rm add}(L)=4(L-1)$ instead would overcharge it by $30$--$40\%$ at
$\mathrm{SU}(2)$; that weaker price is reported in Table~\ref{tab:ftrobust}, and
every claim of a win is made against the valuation-aware one. Our own oracle is
unaffected: it has no rotation, and its injection of $\mu_eC_e^2$ costs
$T_{\rm add}(L-\nu_2(\mu_e))$ per entry.

For orientation, Ross--Selinger synthesis of a non-dyadic angle costs
$3\log_2(1/\varepsilon)+O(\log\log(1/\varepsilon))$ T~\cite{rossselinger2016}, so
the value $T_{\rm typ}=10$ quoted in Ref.~\cite{halimeh2026} corresponds to
$\varepsilon\approx10^{-1}$, and $T_{\rm rot}=50$, $100$ to
$\varepsilon\approx10^{-5}$, $10^{-10}$. This is one reason the published
estimate is treated as a secondary baseline: at a precision anyone would
actually use, it is several times more expensive than the incumbent our headline
numbers are measured against.

Because $A=1$ and $L>\log_2\max V_{\rm int}$, every $|\hat\phi(y)|\le\max
V_{\rm int}<2^{L}$, so no character's angle is a multiple of $2\pi$ on the
ladder: the only ladder effect is the valuation saving just described. This also
settles a tension a careless reading might find between Secs.~\ref{sec:opt}
and~\ref{sec:ft}. The lower bound of Sec.~\ref{sec:opt} is a statement about one
gate class at generic angle; the fault-tolerant method escapes it by using a
gate outside that class (Toffoli, hence $T$) and by living on the ladder, where
the phase is exact by construction rather than synthesised. No bound is
contradicted, and none of the tables compares a bound in one setting with a
circuit in another.

\subsection{The primary fault-tolerant incumbent}
\label{sec:incumbents}
All incumbents compile the \emph{same} exact Walsh expansion of $V_{\rm int}$
(integer coefficients). The primary one is \textbf{Hamming-weight phasing with
valuation-aware pricing}, $B_{\rm HWP}^{\rm match}$. Characters with equal
$|\hat\phi(y)|$ share one angle up to sign (a sign is a Clifford $X$ on the
parity copy). A class of $m$ characters costs $m-w(m)$ Toffolis, $w$ the binary
weight, and $\lfloor\log_2m\rfloor+1$
rotations~\cite{gidney2018,kivlichan2020,campbell2021}, with $m$ parity copies
and $m-w(m)$ clean ancillas; each rotation is priced by Eq.~\eqref{eq:Td}. It is
evaluated in two forms, and the two have distinct roles, which we fix here once
so that the denominator never shifts silently. \emph{Unbatched} Hamming-weight
phasing is the cheaper of the two in T gates, so it is the reference for every
\textbf{T-count} claim; it is also the more demanding in width, needing far more
ancillas than our oracle. \emph{Width-matched} Hamming-weight phasing, batches
restricted so that its ancilla demand does not exceed ours, costs more in T
but is the only fair comparison once width is priced, so it is the reference for
every \textbf{width and surface-code spacetime} claim. Both are reported in every
fault-tolerant table, and each claim below names which it uses. It is \emph{built}: parity copies by CNOT
fan-in, an $X$ for negative coefficients, a Hamming-weight tree of full adders
(carry $((a\oplus c)(b\oplus c))\oplus c$, one AND) and half adders, one phase
gate per output bit, and the exact inverse. It is verified by exact simulation
(system restored, ancillas clean, and the integer identity
$\sum_k2^{k+1}c\,h_k=\sum_c c\,m_c-(V_{\rm int}-c_0)$ exactly) at ten
configurations up to $\mathrm{SU}(4)$, and its T-count and width equal the
formula at all $17$ points, unbatched and width-matched. This is the strongest
incumbent we know how to construct, and it is the one every fault-tolerant claim
below is measured against.

Three secondary fault-tolerant baselines are carried for orientation: the
published estimate $B_{\rm pub}=T_{\rm typ}\,d(d-1)N^4Q^4$ of
Ref.~\cite{halimeh2026} at $T_{\rm typ}=10$; one rotation per Walsh character,
$B_{\rm W}$, priced by Eq.~\eqref{eq:Td}; and Hamming-weight phasing with the
flat price $4(L-1)$ per rotation. On the NISQ side the secondary baselines are
the published grouping estimate $(2+4/Q)\,m_{\min}$ CNOTs (and
$(1+4/Q)\,m_{\min}$ with a native $R_{ZZ}$, our evaluation) and GraySynth as
implemented in Qiskit~\cite{amy2018}. All of these are weaker than the primary
incumbents; they are collected in Appendix~\ref{app:secondary} and, where they
appear in the main text, they are labelled as secondary.

\section{The phase-polynomial lower bound}
\label{sec:opt}

\begin{proposition}[circuit--polynomial correspondence~\cite{amy2018,welch2014}]
\label{prop:cp}
A circuit over $\{\mathrm{CNOT},X,R_z\}$ on $n$ qubits implements
$|x\rangle\mapsto e^{if(x)}|Ax+c\rangle$ with
$f(x)=\sum_i\theta_i(-1)^{y_i\cdot x}$, one Walsh character per $R_z$. The
Walsh expansion of a real function on $\mathbb{F}_2^n$ is unique.
\end{proposition}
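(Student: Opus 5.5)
Both claims of Proposition~\ref{prop:cp} follow by induction on the number of gates. The one thing to fix at the outset is that $R_z(\theta)$ means $\mathrm{diag}(e^{i\theta},e^{-i\theta})$ up to a global phase, i.e.\ $|b\rangle\mapsto e^{i\theta(-1)^{b}}|b\rangle$. Global phases are discarded, or absorbed as the constant character $y=0$.

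\textbf{Path-sum form.} The induction hypothesis is that after $k$ gates the circuit acts as
\[
|x\rangle\mapsto e^{if_k(x)}|A_kx+c_k\rangle,\qquad f_k(x)=\sum_{i}\theta_i(-1)^{y_i\cdot x},
\]
with $A_k\in\mathrm{GL}_n(\mathbb{F}_2)$ and $c_k\in\mathbb{F}_2^n$. The base case is the empty circuit, with $A_0=I$, $c_0=0$ and $f_0=0$. Now add one gate.
\begin{itemize}
\item A CNOT with control $a$ and target $t$ replaces row $t$ of $(A_k,c_k)$ by its sum with row $a$. This is affine and invertible, and $f$ is unchanged.
\item An $X$ on qubit $t$ flips $(c_k)_t$, and $f$ is unchanged.
\item An $R_z(\theta)$ on qubit $t$ sees the current bit $b_t=(A_k)_t\cdot x+(c_k)_t$. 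It multiplies the amplitude by $e^{i\theta(-1)^{b_t}}=e^{i\theta'(-1)^{y\cdot x}}$, where $y=(A_k)_t^{\top}$ and $\theta'=(-1)^{(c_k)_t}\theta$. This appends exactly one Walsh character to $f$, with the sign absorbed into the angle.
\end{itemize}
Hence each $R_z$ contributes one term, and the map has the stated form.

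\textbf{Uniqueness.} The $2^n$ characters $\chi_y(x)=(-1)^{y\cdot x}$ are mutually orthogonal under $\langle f,g\rangle=2^{-n}\sum_x f(x)g(x)$, because
\[
\sum_x(-1)^{(y\oplus y')\cdot x}=2^n\delta_{yy'} .
\]
There are $2^n$ of them in the $2^n$-dimensional space of real functions on $\mathbb{F}_2^n$, so they form a basis. The coefficients are therefore determined uniquely by $\hat f(y)=\langle f,\chi_y\rangle$. This is the Walsh--Hadamard transform, which is invertible since $H^{\otimes n}$ squared is $2^nI$.

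One caveat is needed when the uniqueness is used later for counting. Several $R_z$ gates may land on the same $y$, so the representation of $f$ as a sum over gates is not unique; only the merged coefficients $\hat f(y)$ are. The number of $R_z$ gates is therefore at least the number of $y$ with $\hat f(y)\neq0$. This support statement, rather than the form of the representation, is what Lemmas~\ref{lem:anc} and~\ref{lem:twoq} need.

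\textbf{Main obstacle.} There is no real difficulty here. The only point requiring care is the phase convention: the phase $f$ is defined only modulo $2\pi$ and modulo a global constant. So $f$ must be kept as the real-valued sum of angles, not reduced as a phase, and for generic angles no cancellation mod $2\pi$ occurs, which is what licenses reading off the support.
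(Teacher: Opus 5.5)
Your proof is correct and follows the same route the paper relies on. The paper cites this proposition from Refs.~\cite{amy2018,welch2014} rather than proving it, but the mechanism it uses in the proof of Lemma~\ref{lem:anc} (under $\{\mathrm{CNOT},X\}$ every wire carries an affine form, and each $R_z$ phases exactly one character) is your induction, and uniqueness is the orthogonality of the characters. One normalisation nit: for the normalised Hadamard gate $(H^{\otimes n})^2=I$, and the factor $2^n$ belongs to the unnormalised $\pm1$ matrix. Your closing caveat, that $f$ must be kept as a real sum and only the merged coefficients $\hat f(y)$ are counted, is exactly the mod-$2\pi$ genericity remark the paper makes right after the statement.
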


Phases are defined modulo $2\pi$, so uniqueness constrains the circuit only up
to integer-valued shifts $2\pi k(x)$, whose Walsh coefficients lie in
$2\pi\,2^{-n}\mathbb{Z}$. We therefore state the bounds for a \emph{generic}
Trotter step: $\phi=\theta P$ with $P$ an integer polynomial and $\theta$ outside
a countable set (e.g.\ $\theta/\pi\notin\mathbb{Q}$). Then
$\theta\hat P(y)\notin2\pi\,2^{-n}\mathbb{Z}$ for every $y$ in the support, and
no shift can remove a character.

\begin{lemma}[rotations; ancillas do not help]
\label{lem:anc}
Let $U=\mathrm{diag}(e^{i\theta P(x)})$ on $n$ qubits with $\theta$ generic, and
$C$ any $\{\mathrm{CNOT},X,R_z\}$ circuit on $n+k$ qubits implementing $U$ up to
global phase, ancillas in and out in $|0\rangle$. Then $C$ has at least
$m_{\min}=|\{y\ne0:\hat P(y)\ne0\}|$ rotations. The bound is attained.
\end{lemma}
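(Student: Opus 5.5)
The plan is to apply Proposition~\ref{prop:cp} to the whole circuit $C$ on $n+k$ qubits. Write inputs as $(x,a)$, with $x\in\mathbb{F}_2^n$ the system and $a\in\mathbb{F}_2^k$ the ancillas. Then
\[
C|x,a\rangle=e^{iF(x,a)}|A(x,a)+c\rangle,\qquad F(x,a)=\sum_{i=1}^{r}\theta_i(-1)^{y_i\cdot(x,a)},
\]
where $r$ is the number of $R_z$ gates. The hypothesis only constrains $C$ on inputs with $a=0$. There $C|x,0\rangle=e^{i\theta P(x)+i\alpha}|x,0\rangle$, so $A(x,0)+c=(x,0)$ and
\[
F(x,0)\equiv\theta P(x)+\alpha \pmod{2\pi}\quad\text{for every }x .
\]

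Next, restrict to $a=0$. Each character $(-1)^{y_i\cdot(x,a)}$ becomes $(-1)^{y_i'\cdot x}$, where $y_i'$ is the projection of $y_i$ onto the system coordinates. Hence $f(x):=F(x,0)=\sum_i\theta_i(-1)^{y_i'\cdot x}$ is a real function on $\mathbb{F}_2^n$. Its Walsh support, excluding the constant character $y'=0$, has at most $r$ elements, since distinct rotations may collapse onto the same $y'$ but never create new ones. The congruence gives $f(x)=\theta P(x)+\alpha+2\pi k(x)$ with $k$ integer-valued. Take Walsh coefficients at each $y\neq0$: $\hat f(y)=\theta\hat P(y)+2\pi\hat k(y)$, where $\hat k(y)\in 2^{-n}\mathbb{Z}$ because $\hat k(y)=2^{-n}\sum_x k(x)(-1)^{y\cdot x}$. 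By genericity, $\theta\hat P(y)\notin 2\pi2^{-n}\mathbb{Z}$ whenever $\hat P(y)\ne0$. So $\hat f(y)\ne0$ on the whole nonconstant support of $\hat P$. By uniqueness of the Walsh expansion, every such $y$ must equal some $y_i'$, and therefore $r\ge m_{\min}$.

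The main point needing care is this second step, for two reasons. First, the phase is only defined modulo $2\pi$ and up to global phase; the global phase is absorbed into the constant character $y=0$, which is why that character is excluded from $m_{\min}$. Second, the ancilla part of the linear map $A$ is unconstrained on $a\ne0$. Only the restriction to $a=0$ is used, and restriction can merge characters but never increases their number, which is exactly why ancillas do not help. I would also check the genericity claim. $\hat P(y)$ is rational with denominator dividing $2^n$, so $\theta\hat P(y)\in2\pi2^{-n}\mathbb{Z}$ forces $\theta/\pi\in\mathbb{Q}$; it is therefore excluded by the stated condition.

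For attainment, give an ancilla-free construction. For each $y$ in the support, compute the parity $y\cdot x$ onto one qubit by a CNOT ladder, apply $R_z$ with angle proportional to $\theta\hat P(y)$, and undo the ladder. Ordering the parities (for instance as a Gray-code or GraySynth parity network~\cite{amy2018}) and returning the final linear map to the identity gives a $\{\mathrm{CNOT},X,R_z\}$ circuit with exactly one rotation per $y\ne0$ with $\hat P(y)\ne0$. Up to the global phase from $\hat P(0)$, it implements $U$ with $m_{\min}$ rotations.
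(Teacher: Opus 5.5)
Your proposal is correct and follows essentially the same route as the paper's proof. The paper phrases the ancilla restriction as each wire holding $\ell_q(x,0)\oplus\varepsilon_q$, whereas you project the characters of the full circuit to $a=0$; from there both use the fact that the realised phase differs from $\theta P$ by Walsh coefficients in $2\pi\,2^{-n}\mathbb{Z}$, so genericity forces every support character to be realised, and attainment is one rotation per character in a parity network. Your explicit check that $\hat P(y)\in 2^{-n}\mathbb{Z}$ for integer-valued $P$, which makes $\theta/\pi\notin\mathbb{Q}$ sufficient for genericity, is a detail the paper leaves implicit.
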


\begin{lemma}[two-qubit floor]
\label{lem:twoq}
Let
$\mathcal G=\{\mathrm{CNOT},\mathrm{CZ},\mathrm{CPhase},R_{ZZ},\mathrm{SWAP},X,R_z\}$.
Under the same hypotheses, every $\mathcal G$ circuit implementing $U$, with
ancillas, contains at least $m_{\min}-m_1$ two-qubit gates, $m_1$ the number of
weight-one characters in the support.
\end{lemma}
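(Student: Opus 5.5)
The plan is to reduce every $\mathcal G$ circuit to a $\{\mathrm{CNOT},X,R_z\}$ circuit and then count, using Lemma~\ref{lem:anc} to bound the rotations and a charging argument to relate rotations to two-qubit gates.

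\textbf{Step 1: rewrite each gate of $\mathcal G$.}
\begin{itemize}
\item $\mathrm{CZ}$ and $\mathrm{CPhase}(\alpha)$ are diagonal, with phase $\alpha\,x_ax_b=\tfrac{\alpha}{4}\bigl(1-(-1)^{x_a}-(-1)^{x_b}+(-1)^{x_a\oplus x_b}\bigr)$. Each one is therefore two single-qubit $R_z$'s plus one weight-two phase $e^{i\frac{\alpha}{4}(-1)^{x_a\oplus x_b}}$.
\item $R_{ZZ}$ is exactly one weight-two phase, $\mathrm{CNOT}\,R_z\,\mathrm{CNOT}$.
\item $\mathrm{SWAP}$ is three CNOTs. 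Better, it can be absorbed into a relabelling of wires and so contribute no rotation at all.
\end{itemize}
After this rewriting, $C$ becomes a $\{\mathrm{CNOT},X,R_z\}$ circuit $C'$ on the same $n+k$ qubits that implements $U$ with the same ancilla conditions. By Lemma~\ref{lem:anc} (equivalently Proposition~\ref{prop:cp} plus genericity), $C'$ contains at least $m_{\min}$ rotations.

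\textbf{Step 2: charge each rotation to a gate.} The point is that every rotation in $C'$ sits either in an original single-qubit $R_z$ of $C$ or inside one two-qubit gate of $C$. Work along the circuit with the affine state map $x\mapsto A_tx+c_t$ of Proposition~\ref{prop:cp}.
\begin{itemize}
\item An original $R_z$ on wire $w$ contributes the character given by row $w$ of $A_t$.
\item A two-qubit gate on wires $a,b$ contributes at most three characters: rows $a$, $b$ and $a\oplus b$. Of these, only the $a\oplus b$ term is genuinely new relative to what single-qubit rotations on those wires could produce.
\end{itemize}

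\textbf{Step 3: the key counting claim.} I would show that the number of distinct support characters $y$ of weight $\ge2$, restricted to the system coordinates, is at most the number of two-qubit gates in $C$. By uniqueness of the Walsh expansion, each such $y$ must appear as some rotation's character in $C'$. So it suffices to show that a \emph{new} parity, i.e.\ a row of $A_t$ that has never appeared before, can be created only by a two-qubit gate, and that each two-qubit gate creates at most one new row.
\begin{itemize}
\item Single-qubit $X$ and $R_z$ gates never change the rows of $A_t$.
\item A CNOT replaces row $b$ by $a\oplus b$: one new row.
\item A SWAP permutes rows: no new row.
\item CZ, CPhase and $R_{ZZ}$ leave $A_t$ unchanged, but their internal $a\oplus b$ character is not a row. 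I would count it as one "new character created" by that gate.
\end{itemize}
So each two-qubit gate makes at most one character available that was not already a current row. Every current row, however, must have been created by some CNOT earlier (initial rows are unit vectors, i.e.\ weight one), or is a weight-one system character. Ancilla unit vectors restrict to zero on the system coordinates and carry no system character.

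\textbf{Step 4: combine.} Steps 2 and 3 give
\[
\#\{\text{characters of weight}\ge2\ \text{available}\}\le \#\{\text{two-qubit gates}\},
\]
while the support needs $m_{\min}-m_1$ distinct such characters. This yields the lemma.

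\textbf{Main obstacle.} The delicate point is the ancillas. A row restricted to the system coordinates can coincide with a previously available character after ancilla components are added. Also, the affine shifts $c_t$ and the ancilla initial values $0$ mean that characters involving ancilla rows evaluate to system characters times constants. I would handle this as in Lemma~\ref{lem:anc}, by projecting every row onto the system coordinates (ancillas start at $0$) and noting that projection cannot create additional distinct weight-$\ge2$ characters beyond one per two-qubit gate.

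The genericity of $\theta$ is used exactly as before: it rules out cancellations modulo $2\pi$, so each support character must genuinely be produced by some gate.
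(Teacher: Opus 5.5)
Your proposal is correct and is essentially the paper's argument: you track the set of (data-projected) linear forms ever held by a wire, note that each CNOT adds at most one form and each CZ/CPhase/$R_{ZZ}$ realises at most one further character $\ell_a\oplus\ell_b$ while $X$, SWAP and $R_z$ add none, and use genericity to force every weight-$\ge2$ support character into this set. Two parts can be dropped: the detour through $C'$ and Lemma~\ref{lem:anc} in Step~1, since the paper works directly with $\mathcal G$ (every gate in it maps basis states to basis states up to phase), and your ancilla ``obstacle'', which disappears if, as the paper does, you record forms in the data bits from the outset so that ancilla wires start at the zero form.
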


\begin{proposition}[no weight-one characters]
\label{prop:even}
Let $P$ be a polynomial in the coordinates~\eqref{eq:grid} all of whose monomials
have even total degree.
Then every character in the Walsh support of $P$ has even weight; in particular
$m_1=0$ and the floor of Lemma~\ref{lem:twoq} is exactly $m_{\min}$.
\end{proposition}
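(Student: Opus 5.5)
The plan is to prove the parity statement monomial by monomial and then pass to the sum, using the exact identity~\eqref{eq:mZ}. Each coordinate is, up to the factor $\delta/2$, $m_p=-\sum_j 2^{\,j}Z_{p,j}$, a linear combination of weight-one characters with no constant term. Walsh characters form a group under multiplication, $\chi_y\chi_{y'}=\chi_{y\oplus y'}$, and the weight of $y\oplus y'$ satisfies $|y\oplus y'|\equiv|y|+|y'|\pmod 2$, since $|y\oplus y'|=|y|+|y'|-2|y\wedge y'|$. Parity of weight is therefore a grading of the character group.

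The first step handles a single monomial $m_{p_1}m_{p_2}\cdots m_{p_k}$ of total degree $k$, with repeated bosons allowed. Expanding the product gives a sum of products of $k$ weight-one characters, each coming from one factor. By the grading, every such product has weight $\equiv k\pmod 2$. Coincidences are harmless: if $Z_{p,j}^2=I$, the weight drops by exactly $2$, so the parity is preserved. Hence every character that appears, with whatever coefficient after cancellations, has weight of the same parity as $k$. If $k$ is even, each such character has even weight.

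The second step uses linearity and the uniqueness of the Walsh expansion from Proposition~\ref{prop:cp}. $P$ is a finite linear combination of even-degree monomials, and each of these lies in the span of even-weight characters, so $P$ lies in that span too. Because the expansion is unique, $\hat P(y)=0$ whenever $|y|$ is odd, whatever cancellations occur between monomials. In particular no character of weight one survives, so $m_1=0$. Substituting into Lemma~\ref{lem:twoq} turns the floor $m_{\min}-m_1$ into exactly $m_{\min}$. As an aside, the constant character ($y=0$, weight zero) is allowed by the argument and is already excluded from $m_{\min}$.

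The main obstacle is making sure the absence of a constant term in~\eqref{eq:mZ} is really used. On the $n$-grid, $n$ has a constant offset $(\Lambda-1)/2$, which would mix degrees and parities; the centred variable $m$ is what makes each coordinate a pure odd element. Beyond that the argument is a clean $\mathbb{Z}_2$-grading, so the only care needed is to note that the statement also covers scalar prefactors such as $(\delta/2)^k$ and $g^2$, and the integer normalisation of Corollary~\ref{cor:integer}. These rescale coefficients without moving any character.
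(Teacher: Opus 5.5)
Your proof is correct and follows the paper's own argument. You expand each monomial via Eq.~\eqref{eq:mZ} into products of single-qubit $Z$'s, use $Z^2=I$ (your $\mathbb{Z}_2$-grading of weights) to get weight congruent to total degree mod $2$, and note that summing monomials can only cancel characters, never create odd-weight ones; your explicit appeal to uniqueness of the Walsh expansion and your remark on the constant-free centred coordinate are slightly more careful statements of the same steps.
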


\begin{remark}
	\label{rem:odd}
	Odd-degree monomials do produce odd-weight characters, the mini-BMN Myers term
	of Sec.~\ref{sec:bmn} produces weight-three ones, but they do not restore the
	weight-one escape when every such monomial involves at least two distinct bosons,
	since by Theorem~\ref{thm:lb} each character then names at least two of them. This
	is the case for mini-BMN, where $m_1=0$ still holds (Appendix~\ref{app:bmn}).
\end{remark}

\begin{theorem}[$\Omega(Q^\kappa)$ for polynomial potentials on the grid]
\label{thm:lb}
Let $P$ have degree $\kappa$ in the coordinates~\eqref{eq:grid}. A monomial in
$\kappa$ \emph{distinct} bosons with collected coefficient $c\ne0$ contributes
exactly $Q^\kappa$ non-zero Walsh coefficients, and these sets are pairwise
disjoint for different boson multisets. Hence
$m_{\min}\ge Q^\kappa\cdot\#\{\text{distinct-boson }\kappa\text{-tuples with
non-zero coefficient}\}$.
\end{theorem}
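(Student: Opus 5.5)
The plan is to expand everything through the identity~\eqref{eq:mZ}, which writes each boson coordinate as $m_p=-\sum_{j=0}^{Q-1}2^{\,j}Z_{p,j}$, a pure weight-one Walsh operator with $Q$ non-zero coefficients and no constant term. For a monomial $c\,m_{p_1}m_{p_2}\cdots m_{p_\kappa}$ in $\kappa$ distinct bosons, the registers of the $p_i$ are disjoint sets of qubits. The product therefore expands as
\begin{equation}
c\,(-1)^\kappa\sum_{j_1,\dots,j_\kappa}2^{\,j_1+\cdots+j_\kappa}\,Z_{p_1,j_1}\cdots Z_{p_\kappa,j_\kappa}.
\end{equation}
Each term is a distinct Pauli-$Z$ string, i.e.\ a distinct Walsh character $\chi_y$, with $y$ having exactly one bit set in each of the $\kappa$ registers and none elsewhere. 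Different index tuples give different $y$, so no two terms merge. Each coefficient $c(-1)^\kappa 2^{\sum j}$ is non-zero. This gives exactly $Q^\kappa$ non-zero Walsh coefficients, with the collected monomial coefficient $c$ used so that equal monomials have already been summed.

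Next comes disjointness. For a character $y$ arising this way, the set of boson registers on which $y$ is supported recovers the boson set $\{p_1,\dots,p_\kappa\}$. Since the bosons are distinct, this is exactly the multiset. Two different distinct-boson $\kappa$-tuples therefore produce disjoint sets of characters.

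The main obstacle is the final inequality: other monomials of $P$ could in principle cancel these coefficients. These are monomials of lower degree, or of degree $\kappa$ with repeated bosons such as $m_p^2m_q\cdots$. The plan is to show that no other monomial of degree $\le\kappa$ can contribute a character whose support meets $\kappa$ distinct registers with one bit each. A monomial $\prod_p m_p^{e_p}$ only produces characters supported inside the registers of bosons with $e_p\ge1$, because $Z^2=I$ only reduces support. Its number of such bosons is at most $\kappa$, with equality only if it is itself a product of $\kappa$ distinct bosons, all $e_p=1$. Among those, only the one with the same boson set touches all $\kappa$ registers. The only non-linear reduction is $Z_{p,j}Z_{p,k}\to I$ or a single $Z$ when $j=k$, which never creates support in a new register.

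Hence the Walsh coefficient of every character counted above equals the one computed from the single collected monomial, and is non-zero. Distinct tuples give disjoint character sets, so their counts add, giving $m_{\min}\ge Q^\kappa\cdot\#\{\text{tuples}\}$ by the definition of $m_{\min}$ in Lemma~\ref{lem:anc}.
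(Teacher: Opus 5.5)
Your proof is correct and follows the paper's own argument: expand via Eq.~\eqref{eq:mZ}, observe that distinct bosons sit on disjoint qubits, so no $Z^2=I$ reduction occurs and the $Q^\kappa$ strings are distinct characters with non-zero coefficients, and use the fact that a character names the registers it touches to get both disjointness and non-cancellation. Your explicit degree count makes precise what the paper's one-line ``different multisets cannot cancel'' leaves implicit: a monomial of degree at most $\kappa$ touches all $\kappa$ registers only if it is the same distinct-boson monomial, so this is a sharpening of the paper's argument rather than a different route.
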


Proofs are in Appendix~\ref{app:proofs}. The mechanism in each case is the same
and is worth stating once: over $\{\mathrm{CNOT},X\}$ every wire carries an
affine form of the input bits, so the set of linear forms available to be phased
grows by at most one per two-qubit gate, and by~\eqref{eq:mZ} an even-degree
polynomial in the grid coordinates produces only even-weight characters, which
kills the weight-one escape. Proposition~\ref{prop:even} applies to the bosonic
potential~\eqref{eq:H} for any $N$, $d$
and either coordinate system, and is confirmed numerically at every point: the
supports contain weight-two and weight-four characters only.

For $\mathrm{SU}(2)$, $d=2$ the support is exactly the $3Q^4$ characters of the
three distinct quadruples, the $6\binom{Q}{2}^2$ weight-four characters of the
squares $m_{1a}^2m_{2b}^2$ ($a\ne b$), and the $6\binom Q2$ weight-two
characters $Z_{a,j}Z_{a,j'}$:
\begin{equation}
  m_{\min}^{\mathrm{SU}(2),\,d=2}=3Q^4+6\tbinom Q2^2+6\tbinom Q2
  =\tfrac92Q^4-3Q^3+\tfrac92Q^2-3Q,
  \label{eq:mminsu2}
\end{equation}
which reproduces the exact expansion at $Q=2,\dots,10$
(Table~\ref{tab:competitor}); the generalisation to all $N$ is
Eq.~\eqref{eq:mminN}. Table~\ref{tab:support} in Appendix~\ref{app:proofs} gives
$m_{\min}$ against Theorem~\ref{thm:lb} at twelve points, where the bound holds
with ratio $1.07$--$1.37$. The supports are computed, not estimated: the fast
Walsh--Hadamard transform of the exact phase table agrees with symbolic
Pauli-string enumeration ($60/60$, $315/315$, $171/171$), an independent
symbolic expansion from the squared bilinears reproduces every $m_{\min}$ of
Tables~\ref{tab:support}, \ref{tab:competitor} and~\ref{tab:ftmatrix}, and an
explicit $m_{\min}$-rotation phase polynomial reproduces the target unitary to
$4\times10^{-15}$.

\begin{corollary}
\label{cor:wall}
Within the gate class of Lemma~\ref{lem:twoq}, phase-polynomial compilation of
$e^{-i\tau V}$, as used in Refs.~\cite{halimeh2026,hanada2025}, cannot be
improved below $m_{\min}$ rotations or $m_{\min}$ two-qubit gates by any parity
network, ordering, ZX rewrite or ancilla allocation. Improvement requires a gate
outside that class: $T$ (Toffoli) or $H$.
\end{corollary}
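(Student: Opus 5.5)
The plan is to obtain Corollary~\ref{cor:wall} by combining Lemmas~\ref{lem:anc} and~\ref{lem:twoq} with Proposition~\ref{prop:even}. The key preliminary step is to show that every item in the list ``parity network, ordering, ZX rewrite or ancilla allocation'' yields, in the end, a circuit over $\mathcal G$ (with ancillas) implementing the same diagonal unitary. Once that is established, the lemmas apply directly and there is nothing further to prove.

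\textbf{Step 1: fix the target.} Take $U=e^{-i\tau V}=\mathrm{diag}(e^{i\theta P(x)})$ with $P=V_{\rm int}$ an integer polynomial (Corollary~\ref{cor:integer}, or the integer adjoint form at $\mathrm{SU}(2)$) and $\theta$ proportional to $\tau$. Assume $\theta$ is generic in the sense fixed before Lemma~\ref{lem:anc}. Since $V$ has only quartic and quadratic monomials, all of even degree, Proposition~\ref{prop:even} gives $m_1=0$. Hence Lemma~\ref{lem:twoq} says that every $\mathcal G$ circuit, with any number of clean ancillas, uses at least $m_{\min}$ two-qubit gates. Lemma~\ref{lem:anc} gives at least $m_{\min}$ rotations for $\{\mathrm{CNOT},X,R_z\}$ circuits.

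\textbf{Step 2: extend the rotation count to $\mathcal G$.} Here I rewrite each gate of $\mathcal G$ using $\{\mathrm{CNOT},X,R_z\}$ plus global phase:
\begin{itemize}
\item $\mathrm{CZ}$ and $\mathrm{CPhase}$ become CNOT--$R_z$--CNOT plus single-qubit $R_z$;
\item $R_{ZZ}$ becomes CNOT--$R_z$--CNOT;
\item SWAP becomes three CNOTs.
\end{itemize}
Each rotation of the rewritten circuit comes from an angle-carrying gate of the original. A $\mathrm{CPhase}$ contributes up to three weight-labelled characters, so for the corollary's rotation statement I would count rotations as Walsh characters phased. The bound on Walsh characters then follows from Proposition~\ref{prop:cp} and uniqueness.

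\textbf{Step 3: cover each named optimisation.}
\begin{itemize}
\item A parity network and any reordering of it are $\{\mathrm{CNOT},X,R_z\}$ circuits by definition.
\item ZX rewrites of phase gadgets preserve the implemented unitary. Once extracted into a circuit over the phase-gadget fragment, the output lies in $\mathcal G$, so Step~1 applies.
\item Ancilla allocation is exactly the $n+k$-qubit clause of the lemmas.
\end{itemize}
The subtle point is whether a ZX optimisation could extract a circuit containing $H$. If it does, that circuit has left the class, which is precisely the escape the corollary names.

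\textbf{Step 4: the converse clause.} I would argue that $\mathcal G$ is closed under composition and always acts as $|x\rangle\mapsto e^{if(x)}|Ax+c\rangle$. Any circuit using fewer than $m_{\min}$ of the costed resources must therefore include a non-affine basis-state map or a basis change. Within Clifford$+T$ this means a Toffoli/$T$ gate, which makes the permutation nonlinear, or an $H$. That is the informal content of ``requires $T$ or $H$''.

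\textbf{Main obstacle.} I expect the hardest step to be making the converse precise. The natural reading is ``outside $\mathcal G$'', and I would state it as that, pointing to the constructions of Secs.~\ref{sec:nisq} and~\ref{sec:ft} as witnesses rather than proving that $T$ or $H$ is strictly necessary.
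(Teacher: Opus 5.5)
Your proposal is correct and follows the paper's route. The corollary is Lemmas~\ref{lem:anc} and~\ref{lem:twoq} with $m_1=0$ from Proposition~\ref{prop:even}, every named optimisation yields a $\mathcal G$ circuit with ancillas, and the final sentence is the contrapositive. Your Step~2 is a genuine refinement the paper glosses over: a single CPhase phases three characters, so the ``$m_{\min}$ rotations'' claim holds within $\mathcal G$ only if each CPhase is counted as the $R_z$'s of its decomposition. Step~4 has a harmless slip, harmless because you state the rigorous content as ``outside $\mathcal G$'', exactly as the paper does. $T$ is itself an $R_z\in\mathcal G$, so within Clifford$+T$ the only escape is $H$, with Toffoli a composite. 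Your dichotomy ``non-affine permutation or basis change'' also misses multi-qubit diagonal gates such as CCZ, which lie outside $\mathcal G$ too.
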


The statement is about a gate class, not about the problem: Secs.~\ref{sec:nisq}
and~\ref{sec:ft} do reduce the cost, by leaving the class in the two ways
Corollary~\ref{cor:wall} permits. Before building them, though, the bound has to
be shown to be tight, or it is only a bound.

\subsection{The competitor at full strength}
\label{sec:competitor}
A floor is not a circuit, and a comparison against a floor is only fair if some
circuit nearly attains it. It does. Our structure-aware parity network
(i)~rewrites every boson register into the prefix-difference basis $q_0=b_0$,
$q_k=b_{k-1}\oplus b_k$ ($Q-1$ CNOTs per boson), so that
$b_j=q_0\oplus\cdots\oplus q_j$ and moving a bit index by one changes a parity
by one qubit; (ii)~visits all characters on one ancilla target $T$ in a greedy
Hamming tour, one CNOT per unit of Hamming distance and one $R_z$ per character;
(iii)~clears $T$ and undoes the basis change (Fig.~\ref{fig:competitor}). It is
verified by exact simulation of the permutation-times-phase circuit: system
restored, target clean, and the integer identity
$\sum_y\hat P(y)\chi_y(x)=V_{\rm int}(x)-\mathrm{const}$ holds exactly
(exhaustive at $Q=2$, $2000$ random basis states for $Q=3,\dots,10$), plus a
Qiskit Aer statevector check at $Q=2$ ($3.3\times10^{-16}$).

It needs $1.009$--$1.096\,m_{\min}$ CNOTs for $Q\ge4$ ($1.50$ and $1.20$ at
$Q=2,3$), whereas GraySynth as implemented in Qiskit needs $1.90$, $2.02$ and
$2.27$ per character at $Q=2,3,4$ (Table~\ref{tab:competitor}). In matrix-entry
coordinates the built network needs $1.04$--$1.29\,m_{\min}$
(Table~\ref{tab:nisqsun}; not built at the three largest points, where only the
floor is used, which favours the incumbent). \textbf{All NISQ comparisons below
are therefore against $m_{\min}$ or against the built network, never against a
formula lying above the floor.}

\begin{table}[htbp]\centering\small
\caption{The NISQ competitor at full strength, $\mathrm{SU}(2)$, $d=2$, potential term per Trotter step. $m_{\min}$: exact Walsh support, which is the two-qubit floor of Lemma~\ref{lem:twoq} for \emph{every} parity network. ``Network'': the structure-aware parity network of Sec.~\ref{sec:competitor}, built and verified (exhaustively at $Q=2$, $2000$ random basis states otherwise; exact integer phase identity). GraySynth: Qiskit's \texttt{synth\_cnot\_phase\_aam}. ``Published'': the CNOT estimate of Ref.~\cite{halimeh2026}, $(2+4/Q)\,m_{\min}$. Depth: two-qubit depth of the emitted network.}
\label{tab:competitor}
\begin{tabular}{rrrrrrr}
\hline
$Q$ & $m_{\min}$ & network & network$/m_{\min}$ & GraySynth & published & depth\\
\hline
2 & 60 & 90 & 1.500 & 114 & 240 & 80\\
3 & 315 & 378 & 1.200 & 635 & 1050 & 358\\
4 & 1020 & 1118 & 1.096 & 2320 & 3060 & 1086\\
5 & 2535 & 2662 & 1.050 & -- & 7098 & 2619\\
6 & 5328 & 5496 & 1.032 & -- & 14208 & 5442\\
7 & 9975 & 10186 & 1.021 & -- & 25650 & 10121\\
8 & 17160 & 17408 & 1.014 & -- & 42900 & 17332\\
9 & 27675 & 27992 & 1.011 & -- & 67650 & 27905\\
10 & 42420 & 42802 & 1.009 & -- & 101808 & 42704\\
\hline
\end{tabular}\end{table}

\section{The NISQ construction: Fourier-register squaring}
\label{sec:nisq}

\subsection{Which primitive to leave the gate class with}
\label{sec:exchange}
Corollary~\ref{cor:wall} says that improvement needs $T$ or $H$. Which of the
two to use is decided by hardware, and the two eras decide it oppositely. A
logical AND costs $4$ T fault-tolerantly but about $6$ CNOTs on NISQ hardware
(three with a relative-phase Toffoli, three more to uncompute). A dyadic
rotation costs up to $T_{\rm add}(L)=68$--$140$ T fault-tolerantly, less for
angles of high $2$-adic valuation, Sec.~\ref{sec:precision}, but one native
$R_{ZZ}$ on NISQ hardware, or two CNOTs, about one with the register frames of
S5 below. Arithmetic is therefore cheap exactly where rotations are expensive,
and expensive where they are cheap. This is not a heuristic: building the
Toffoli-arithmetic route and charging it with NISQ costs shows it needs
$15$--$6.4\times$ the two-qubit gates of the Fourier route
(Appendix~\ref{app:negative}). The NISQ exit is therefore Hadamard, and it is
available because Fourier-basis arithmetic computes into a register using phase
rotations and Hadamards alone.

\subsection{Construction}
\label{sec:nisqconstr}
Let $v=\sum_{(p,q)}c_{pq}m_pm_q$ be one squared quantity: $K^c_{IJ}$ for
$\mathrm{SU}(2)$~\eqref{eq:cross}, one commutator entry for $N\ge3$. By
Eq.~\eqref{eq:mZ}, $v=\sum_ta_tP_t$ with $P_t=Z_{p,j}Z_{q,k}$ and
$a_t=c_{pq}2^{\,j+k}$. The gadget (Fig.~\ref{fig:nisqgadget}) applies
$e^{-i\theta v^2}$:
\begin{enumerate}
\item $H^{\otimes w}|0\rangle=|\phi(0)\rangle$, the Fourier state of $0$ (the
first QFT is free).
\item \textbf{Fourier addition} $\mathcal A_+$ of $v/2^r$ ($r$ below). Adding
$aP$ to the Fourier register multiplies the level of significance $2^{\,l}$ by
$e^{i\alpha_lP(1-Z_l)/2}=e^{i\alpha_lP/2}\,e^{-i\alpha_lPZ_l/2}$, with
$\alpha_l=2\pi a2^{\,l}/2^w$. The parity $P_t$ is carried by one fold wire $F$
and the second factor is $R_{ZZ}(\alpha_l)$ on $(F,y_l)$.
\item Inverse QFT without swaps: the register holds $v/2^r$ in two's complement.
\item \textbf{Square in place:} $e^{-i\theta4^rv_{\rm reg}^2}$ is $w$
single-qubit phases and $w(w-1)/2$ $R_{ZZ}$ on the register --- no arithmetic.
\item QFT, $\mathcal A_-$ (subtract), $H^{\otimes w}$: the register and $F$
return to $|0\rangle$.
\end{enumerate}
Steps 1--3 are the known Fourier-basis polynomial
evaluation~\cite{gilliam2021,bucher2025,draper2000}; the new element is step 4,
the squaring of a weight-two bilinear inside the register, which replaces the
$\sim Q^4$ characters of $v^2$ by $\sim Q^3$ register rotations. The whole
advantage of the method is that the squaring is free of arithmetic: a quadratic
form on $w=2Q+O(1)$ register qubits, whose cost is $O(Q^2)$ per entry rather
than $O(Q^4)$.

\begin{figure}[t]\centering
\fignisqgadget
\caption{The NISQ gadget for one squared quantity $v$. $\mathcal A_\pm$: the
Fourier addition of $\pm v/2^r$, applied through the fold wire $F$ (steps 2
and 5, Fig.~\ref{fig:weight3}); the system is in the prefix-difference basis
($\Delta$) for the whole potential step. $\mathrm{QFT}_{\rm ns}$: QFT without
swaps (the bit reversal is absorbed into the level assignment,
Remark~\ref{rem:trap}). The square is a quadratic form in the register bits. One
fold wire and one $w$-qubit register serve all gadgets sequentially.}
\label{fig:nisqgadget}
\end{figure}

\subsection{The two simplifications that move the crossover}
\label{sec:simpl}
Applied naively, the gadget would apply all $w$ level rotations to every term
and, in the CNOT-only gate set, pay two CNOTs per rotation. Five exact rewrites
remove most of that; none changes a term of the Hamiltonian and none is claimed
as new. Two of them move the crossover and are stated here; the remaining
three: S2, cancellation of the state-independent factors
$e^{i\alpha_lP_t/2}$ between the two passes; S3, the boustrophedon fold tour in
the prefix-difference basis; and S4, halving the register when $\sum c_{pq}$ is
even, are bookkeeping and are given in Appendix~\ref{app:circuits}.
\begin{description}
\item[S1, Draper triangularity.] If $\alpha_l/\pi$ is an even integer the level
factor is the identity; if it is an odd integer $x$ it equals
$-i(-1)^{(x-1)/2}Z_FZ_l$, two single-qubit $Z$ gates. An $R_{ZZ}$ is needed only
when $\alpha_l/\pi\notin\mathbb Z$, i.e.\ on
$\min\!\big(w,\,w+r-1-\nu_2(a_t)\big)$ levels, $\nu_2$ the 2-adic valuation,
instead of all $w$ (Fig.~\ref{fig:weight3}). Together with S2--S4 this removes
$34$--$45\%$ of the unsimplified gadget's two-qubit gates at every $Q$
(Table~\ref{tab:nisqrobust}).
\item[S5, register frames (CNOT only).] Without a native $R_{ZZ}$, each level
rotation $e^{-i\alpha P_tZ_l/2}$ would cost two CNOTs. Instead, a register qubit
whose level is active for the current term is kept in the frame
$y_l\oplus P_t$, so that its rotation is a single-qubit $R_z$. Moving to the next
term, $P_{t+1}=P_t\oplus q$ for a one-qubit move, costs one CNOT $q\to y_l$ per
level that stays active, one CNOT $F\to y_l$ per level that enters or leaves, and
the fold move, about one CNOT per rotation. For a jump of Hamming distance
$d>2$ the staying levels are unframed and reframed through $F$ instead (two
CNOTs each). Frames are cleared before the QFT, so the register is plain between
the passes; QFT phases and the square use CX--$R_z$--CX. Figure~\ref{fig:frames}
shows the same two terms as Fig.~\ref{fig:weight3} in this emission. Register
frames cut the CNOT-only count by $18$--$40\%$ relative to the naive translation
of every $R_{ZZ}$ into two CNOTs, and with the naive translation the CNOT-only
crossover would move from $Q=7$ to $Q=9$ at $\mathrm{SU}(2)$.
\end{description}
In each gate set we use the cheaper emission: native $R_{ZZ}$ where it exists,
register frames otherwise.

\begin{remark}[a trap]
\label{rem:trap}
The inverse QFT's bit-reversal swaps cannot simply be dropped: without them the
transform acts on the bit-reversed state, which is not a Fourier state. We
absorb the reversal into the addition by placing significance $2^{\,l}$ on
register qubit $w-1-l$; the register then emerges in normal bit order.
\end{remark}

\subsection{Correctness}
\label{sec:nisqcorrect}
Every gate that touches a system qubit is a CNOT into $F$ or diagonal, so for a
system basis state the system and $F$ stay classical and the register can be
simulated exactly with $2^w$ amplitudes. Chaining the gadgets of a whole
potential step, we check that each gadget returns the register to $|0\rangle$
and $F$ to $0$, that the system is restored, and that the accumulated phase
equals $-\theta V_{\rm int}(x)$ up to a state-independent constant. Register
frames are tracked exactly as a classical bit per basis state and must be zero
before every non-diagonal register gate and at the end. Both emissions are
checked. Leakage is $<10^{-14}$ and the phase error $\le2.1\times10^{-12}$ at
every checked point: $\mathrm{SU}(2)$ exhaustively at $Q=2$ and on random basis
states for $Q=3,\dots,8$; matrix-entry route at
$(N,Q)=(2,2),(3,2),(3,3),(3,4),(4,2),(5,2)$. Independently, the complete
$\mathrm{SU}(2)$, $Q=2$ potential circuit (18 qubits) run on the Qiskit Aer
statevector simulator from a random system state returns
$e^{-i\theta V_{\rm int}}$ times the state, ancillas in $|0\rangle$, to
$2.1\times10^{-15}$ (native emission) and $4.0\times10^{-15}$ (register frames).
The method is exact: it changes no term of the Hamiltonian. Complete Trotter
evolutions, with the kinetic layer, are run on the same simulator in
Sec.~\ref{sec:aer}.

\subsection{Closed-form resources}
\label{sec:nisqclosed}
For the $\mathrm{SU}(2)$ cross-product route, S4 gives $w=2Q+1$ ($Q\ge2$), and
S1 leaves $2Q+1-j-k$ $R_{ZZ}$ levels for term $(j,k)$, so
$\sum_{j,k}(2Q+1-j-k)=Q^2(Q+2)$ per product and pass. With two products, two
passes, three components, the tour
$\tau(Q)=2[2+2(Q^2-1)+(Q+k_{\rm e}+3)]$, $k_{\rm e}=Q-1$ (odd $Q$) or $0$ (even
$Q$), the two QFTs, the square and the basis change on six bosons,
\begin{equation}
  \boxed{\;N_{2q}^{\rm NISQ}=
  \begin{cases}
  12Q^3+54Q^2+27Q+6 & Q\ \text{even},\\
  12Q^3+54Q^2+33Q & Q\ \text{odd},
  \end{cases}
  \qquad \text{qubits}=8Q+2\;}
  \label{eq:nisqcount}
\end{equation}
with native $R_{ZZ}$/CPhase. The naive CNOT translation costs
$24Q^3+96Q^2+36Q+6$ (even) or $24Q^3+96Q^2+42Q$ (odd). With register frames
(S5) every move inside a block changes the active-level count by one, so it
costs $2+\min(n_t,n_{t+1})$ CNOTs with $n=2Q+1-j-k$; each gadget costs
$2[\,2+(2Q+1)+2W+(4Q+6)+(Q+2-k_{\rm e})\,]+3w(w-1)$ with
$W=2(Q^2-1)+\sum_{\rm moves}\min(n_t,n_{t+1})$, and the parity-dependent terms
cancel:
\begin{equation}
  \boxed{\;N_{\rm CNOT}^{\rm NISQ}=12Q^3+78Q^2+54Q+18\qquad(\text{all }Q\ge2).\;}
  \label{eq:nisqcnot}
\end{equation}
on the same $8Q+2$ qubits. For arbitrary bilinears (matrix-entry route) the same
rules give explicit sums over blocks. All these formulas reproduce the built
circuits exactly at the $21$ points of Tables~\ref{tab:nisqsu2}
and~\ref{tab:nisqsun}. Against Eq.~\eqref{eq:mminsu2}, $m_{\min}/N\to3Q/8$ in
both gate sets as $Q\to\infty$: the advantage grows linearly in $Q$, with no
free parameter to tune.

\subsection{Results}
\label{sec:nisqresults}
\emph{$\mathrm{SU}(2)$, adjoint route.} Table~\ref{tab:nisqsu2} and
Fig.~\ref{fig:nisq}. With native $R_{ZZ}$ the circuit uses fewer two-qubit gates
than any parity network from $Q=6$ ($1.13\times$ the floor, $1.17\times$ the
built network) to $Q=10$ ($2.40\times$); with CNOT only and register frames,
from $Q=7$ ($1.20\times$) to $Q=10$ ($2.08\times$). Below these points it loses:
at $Q=4$ the built network needs $1118$ two-qubit gates against our $1746$
(native) or $2250$ (CNOT). The two-qubit depth of the emitted sequences crosses
earlier, at $Q=5$ in both emissions, and reaches $2.68\times$ (native) and
$4.21\times$ (CNOT) at $Q=10$, relevant because depth, not count, is what a
coherence budget buys. The circuit needs $2Q+1$ more qubits than the network
(fold wire and register, reused by all gadgets).

\begin{table}[htbp]\centering\footnotesize
\setlength{\tabcolsep}{3.2pt}
\caption{NISQ cost per Trotter step, potential term, $\mathrm{SU}(2)$, $d=2$, adjoint coordinates (the cross-product entries $K^c_{12}$ of Eq.~\eqref{eq:cross}). Qubits: network/ours (ours adds a fold wire and a $w=2Q+1$ register, reused by every gadget). $m_{\min}$ is the two-qubit floor of \emph{every} parity network (Lemma~\ref{lem:twoq}); ``network'' is the built structure-aware network of Sec.~\ref{sec:competitor}. Ours: the gadget of Sec.~\ref{sec:nisq}, built and verified, in the better emission for each gate set --- native $R_{ZZ}$/CPhase (RZZ, Eq.~\eqref{eq:nisqcount}) and CNOT only with register frames (CNOT, Eq.~\eqref{eq:nisqcnot}). Ratios $>1$: ours cheaper; $m_{\min}/$ours$>1$ means ours beats every parity network in that gate set. Depth: two-qubit depth of the emitted sequence (ASAP, commuting gates not reordered). Weaker baselines are in Table~\ref{tab:nisqrobust}.}
\label{tab:nisqsu2}
\begin{tabular}{rrrrrrrrrrrr}
\hline
$Q$ & qubits & $m_{\min}$ & network & ours & ours & net/ & $m_{\min}/$ & $m_{\min}/$ & depth & depth & depth\\ & net/ours & (floor) & & RZZ & CNOT & ours & ours & ours & net & ours & ours\\ & & & & & & RZZ & RZZ & CNOT & & RZZ & CNOT\\
\hline
2 & 13/18 & 60 & 90 & 372 & 534 & 0.24 & 0.16 & 0.11 & 80 & 320 & 359\\
3 & 19/26 & 315 & 378 & 909 & 1206 & 0.42 & 0.35 & 0.26 & 358 & 763 & 676\\
4 & 25/34 & 1020 & 1118 & 1746 & 2250 & 0.64 & 0.58 & 0.45 & 1086 & 1482 & 1221\\
5 & 31/42 & 2535 & 2662 & 3015 & 3738 & 0.88 & 0.84 & 0.68 & 2619 & 2579 & 1916\\
6 & 37/50 & 5328 & 5496 & 4704 & 5742 & 1.17 & 1.13 & 0.93 & 5442 & 4084 & 2947\\
7 & 43/58 & 9975 & 10186 & 6993 & 8334 & 1.46 & 1.43 & 1.20 & 10121 & 6123 & 4164\\
8 & 49/66 & 17160 & 17408 & 9822 & 11586 & 1.77 & 1.75 & 1.48 & 17332 & 8702 & 5825\\
9 & 55/74 & 27675 & 27992 & 13419 & 15570 & 2.09 & 2.06 & 1.78 & 27905 & 11971 & 7708\\
10 & 61/82 & 42420 & 42802 & 17676 & 20358 & 2.42 & 2.40 & 2.08 & 42704 & 15912 & 10143\\
\hline
\end{tabular}\end{table}

\begin{figure}[tbp]\centering
\includegraphics[width=\textwidth]{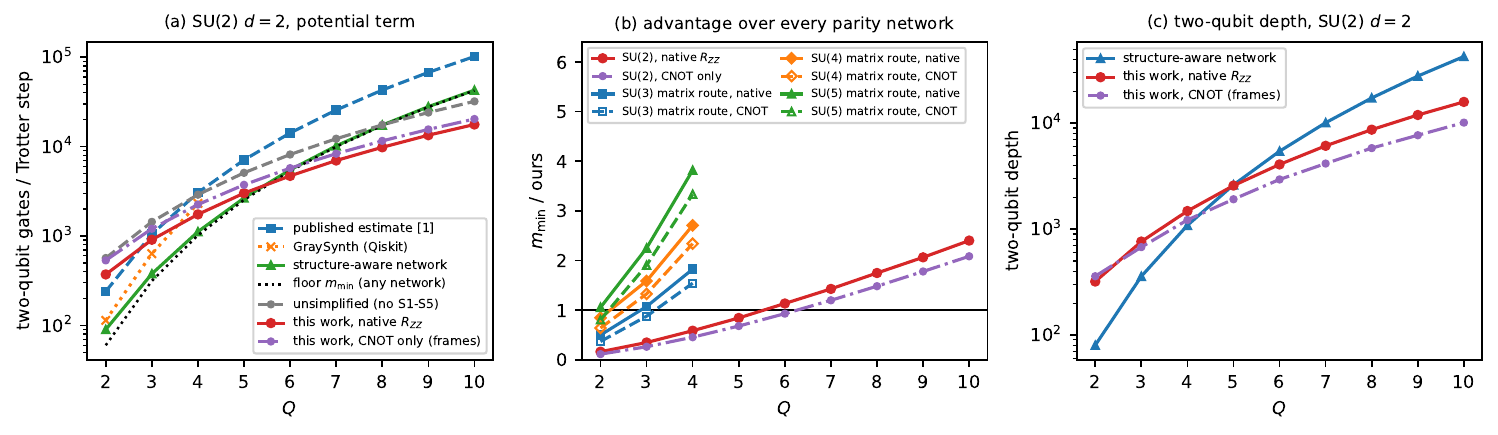}
\caption{NISQ, potential term. (a) Two-qubit gates per Trotter step,
$\mathrm{SU}(2)$, $d=2$, adjoint route: the floor $m_{\min}$ of every parity
network, the built structure-aware network, ours with native $R_{ZZ}$ and with
CNOT only (register frames), and the secondary baselines of
Appendix~\ref{app:secondary}. (b) $m_{\min}$ divided by our count in each gate
set ($>1$: fewer two-qubit gates than any parity network in that gate set),
$\mathrm{SU}(2)$ adjoint and the matrix-entry route at $N=3,4,5$. (c) Two-qubit
depth.}
\label{fig:nisq}
\end{figure}

\emph{$\mathrm{SU}(N)$: the matrix-entry route.}
\label{sec:nisqsun}
For $N\ge3$ one gadget is applied per squared commutator entry
(Sec.~\ref{sec:identities}), with angle $\theta$ times its Frobenius weight.
Table~\ref{tab:nisqsun}: with native $R_{ZZ}$ the circuit beats every parity
network from $Q=3$ at $\mathrm{SU}(3)$ ($1.07\times$, $1.82\times$ at $Q=4$) and
$\mathrm{SU}(4)$ ($1.59\times$, $2.71\times$) and from $Q=2$ at
$\mathrm{SU}(5)$ ($1.05\times$, $2.25\times$, $3.82\times$). With CNOT only and
register frames it wins from $Q=4$ at $\mathrm{SU}(3)$ ($1.54\times$) and from
$Q=3$ at $\mathrm{SU}(4)$ ($1.33\times$, $2.34\times$) and $\mathrm{SU}(5)$
($1.90\times$, $3.33\times$). The advantage grows with $N$ at fixed $Q$ because
the floor grows with the number of distinct quartic vertices while our count
grows with the number of squared entries times their bilinear terms,
quantified in Sec.~\ref{sec:scaling}. On the $\mathrm{SU}(2)$ matrix route the
crossover is not reached within $Q\le4$ ($0.86\times$ at $Q=4$), which is why
$\mathrm{SU}(2)$ is compiled on the adjoint route.

\begin{table}[htbp]\centering\small
\setlength{\tabcolsep}{4pt}
\caption{NISQ, matrix-entry route (Sec.~\ref{sec:nisqsun}), $d=2$, potential term per Trotter step: one gadget per squared commutator entry, in the better emission for each gate set. $m_{\min}$ is the two-qubit floor of every parity network; ``network'' is the built structure-aware network (verified on $300$ random basis states, exhaustively at $N=Q=2$). Our counts are read off the built circuits and equal the closed-form evaluation exactly; the exact-simulation checks are listed in Appendix~\ref{sec:verif}; the $\mathrm{SU}(2)$ adjoint route is Table~\ref{tab:nisqsu2}.}
\label{tab:nisqsun}
\begin{tabular}{rrrrrrrrrr}
\hline
$N$ & $Q$ & qubits & $m_{\min}$ & network & ours & ours & $m_{\min}$/ours & $m_{\min}$/ours & depth\\ & & & & & RZZ & CNOT & RZZ & CNOT & ours\\
\hline
2 & 2 & 16+7 & 186 & 240 & 796 & 1136 & 0.23 & 0.16 & 680\\
2 & 3 & 24+9 & 969 & 1068 & 1912 & 2480 & 0.51 & 0.39 & 1646\\
2 & 4 & 32+11 & 3128 & 3312 & 3620 & 4536 & 0.86 & 0.69 & 3172\\
3 & 2 & 36+8 & 2112 & 2566 & 4245 & 5748 & 0.50 & 0.37 & 3857\\
3 & 3 & 54+10 & 10881 & 11782 & 10152 & 12420 & 1.07 & 0.88 & 9370\\
3 & 4 & 72+12 & 34788 & 36160 & 19071 & 22572 & 1.82 & 1.54 & 17823\\
4 & 2 & 64+8 & 9644 & 11738 & 11440 & 15088 & 0.84 & 0.64 & 10750\\
4 & 3 & 96+11 & 49398 & 53094 & 31136 & 37280 & 1.59 & 1.33 & 29376\\
4 & 4 & 128+13 & 157296 & -- & 58112 & 67280 & 2.71 & 2.34 & 55426\\
5 & 2 & 100+9 & 28950 & 35362 & 27490 & 35990 & 1.05 & 0.80 & 25987\\
5 & 3 & 150+11 & 147840 & -- & 65755 & 77690 & 2.25 & 1.90 & 63004\\
5 & 4 & 200+13 & 469820 & -- & 123080 & 140910 & 3.82 & 3.33 & 118881\\
\hline
\end{tabular}\end{table}

\subsection{Routed counts}
\label{sec:routing}
\emph{All-to-all.} On a device with all-to-all connectivity, such as the trapped-ion
hardware with native $ZZ$ that is the natural platform for the native-$R_{ZZ}$
emission, no SWAP is ever needed, so the logical counts of
Tables~\ref{tab:nisqsu2} and~\ref{tab:nisqsun} are the routed counts (checked:
transpiling every circuit without a coupling map adds no two-qubit gate). The
native-$R_{ZZ}$ crossovers therefore stand as they are for that architecture:
the ``logical counts'' caveat is exact for ions and concerns only
nearest-neighbour devices.

\emph{Heavy-hex and square lattices.} Every NISQ circuit --- our gadget in both
emissions and the built structure-aware network --- is mapped with the same
router (Qiskit SabreLayout and SabreSwap, three seeds, best kept, optimisation
level 1; a SWAP costs three two-qubit gates) onto a distance-7 heavy-hex lattice
(115 qubits) and a $10\times12$ square lattice (120 qubits), at every point that
fits (Table~\ref{tab:routing}, Fig.~\ref{fig:routing}). Routing multiplies every
count, by $3.7$--$7.6\times$ on heavy-hex and $2.2$--$4.7\times$ on the square
lattice, and ours somewhat more than the network, because both our fold wire and
our register interact with many system qubits. Against the routed built network
the crossovers move by one step in $Q$, the same on both lattices and in both
gate sets: $\mathrm{SU}(2)$ from $Q=7$ (native $1.24\times$ heavy-hex,
$1.18\times$ square; CNOT $1.07\times$, $1.01\times$; $2.13$/$2.09\times$ and
$1.92$/$1.84\times$ at $Q=10$), $\mathrm{SU}(3)$ from $Q=4$ (native
$1.42$/$1.45\times$, CNOT $1.20$/$1.21\times$) and $\mathrm{SU}(4)$ from $Q=3$
(native $1.24$/$1.32\times$, CNOT $1.06$/$1.12\times$). At $\mathrm{SU}(5)$,
$Q=2$ the native emission wins after routing ($1.19$/$1.25\times$) and the CNOT
emission ties ($1.00$/$1.02\times$). The routed comparison is against the built
network, not the floor $m_{\min}$, which is not a circuit; routing is heuristic,
but the same router, seeds and settings are applied to every circuit.

\begin{figure}[tbp]\centering
\includegraphics[width=\textwidth]{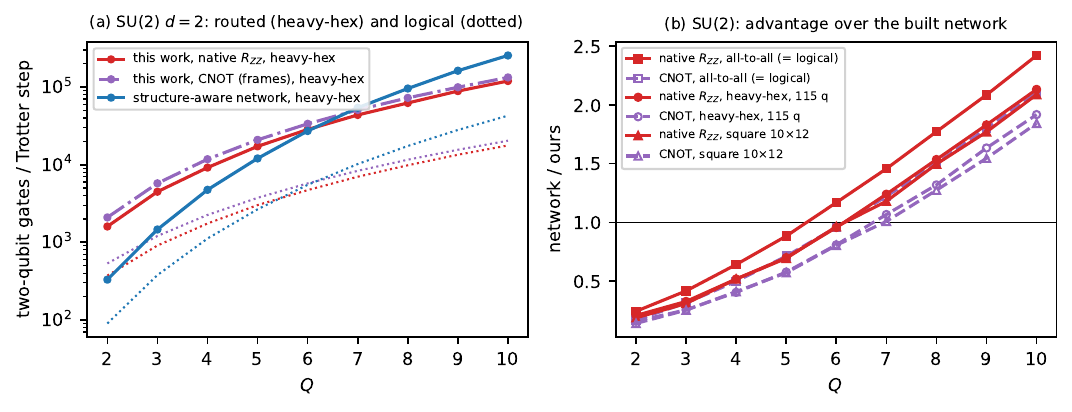}
\caption{Routing, $\mathrm{SU}(2)$, $d=2$, potential term. (a) Two-qubit gates
after routing onto a distance-7 heavy-hex lattice (solid) and before routing
(dotted), for our two emissions and the built network. (b) Built network divided
by ours, on all-to-all hardware (the logical counts), heavy-hex and the
$10\times12$ square lattice, for both gate sets. Counts are in
Table~\ref{tab:routing}.}
\label{fig:routing}
\end{figure}

\section{The fault-tolerant construction: sum-of-squares arithmetic}
\label{sec:ft}

\subsection{Construction}
\label{sec:ftconstr}
Both eras compile the same data: the squared entries
$v_e=\sum_{(p,q)}c_{pq}m_pm_q$ with Frobenius weights $\mu_e$, so that
$V_{\rm int}=\sum_e\mu_ev_e^2$, the $3P$ cross-product entries $K^c_{IJ}$ for
$\mathrm{SU}(2)$ in adjoint coordinates, the $N^2$ commutator entries per pair
for the matrix route. The oracle (Fig.~\ref{fig:ftskeleton}) proceeds as
follows:
\begin{enumerate}
\item build $|m|$ for each boson (Fig.~\ref{fig:absm}); zero ANDs;
\item for each entry $e$: (a) compute every product $|m_p||m_q|$ with the
windowed multiplier and keep it live; (b) add it into the entry register $C_e$
with its sign $\mathrm{MSB}(n_p)\oplus\mathrm{MSB}(n_q)\oplus[c_{pq}<0]$;
(c) $|C_e|$ by one conditional negation, squared into $C_e^2$; (d) add
$\mu_eC_e^2$ directly into the catalytic phase-gradient register
$|\gamma_L\rangle\propto\sum_ke^{2\pi ik/2^L}|k\rangle$
(Fig.~\ref{fig:inject}); (e) invert (c)--(a);
\item uncompute the magnitudes.
\end{enumerate}
It contains no arbitrary-angle rotation anywhere, and the phase is exact on the
dyadic ladder of Sec.~\ref{sec:precision} ($K^c$ and, by
Corollary~\ref{cor:integer}, the matrix entries are integers in $m$-units). This
is the structural difference from the incumbent: the incumbent's cost is one
rotation per Walsh character and therefore scales with
$m_{\min}=\Theta(N^4Q^4)$, while ours is one multiplication per product term and
one squaring and one injection per entry, which is $\Theta(N^3Q^2)$.

One point of the arithmetic bookkeeping is worth a factor and is therefore
stated here; the other two: F2, signed accumulation by carry-in, and F3,
merged products with per-entry register widths, are in
Appendix~\ref{app:circuits}. None is claimed as new.
\begin{description}
\item[F1, per-entry injection with live products.] The natural organisation,
compute each product, add it into $C$, uncompute it inside the compute stage,
sum all squares into a register $V$, inject $V$ once, then mirror the whole
compute stage, multiplies every product four times. Injecting $\mu_eC_e^2$
per entry and inverting only the entry block, with the products kept live until
then, multiplies each product twice and removes the $V$ register; the price is
one $T_{\rm add}(L)$ per entry instead of one per step. The ablation of
Sec.~\ref{sec:ablation} quantifies the same effect independently: its
single-injection variant costs about twice its per-monomial one.
\end{description}

\begin{figure}[t]\centering
\figftskeleton
\caption{Skeleton of the fault-tolerant oracle. Magnitudes are extracted once
(Fig.~\ref{fig:absm}); then, entry by entry, products are formed and kept live
($\times_e$), accumulated with their signs by carry-in ($\Sigma^\pm_e$), the
magnitude of $C_e$ is squared ($\mathrm{sq}_e$), $\mu_eC_e^2$ is added into
$|\gamma_L\rangle$ (Fig.~\ref{fig:inject}), and the entry block is inverted
before the next entry. The factor two of Eq.~\eqref{eq:Tsos} is the block
inversion.}
\label{fig:ftskeleton}
\end{figure}

\subsection{Closed-form resources}
\label{sec:ftclosed}
With $T_{\rm condneg}(n)=T_{\rm add}(n)$ and the primitives of
Sec.~\ref{sec:costmodel},
\begin{equation}
\begin{aligned}
  T=\sum_e\Big\{&2\Big[\sum_{(p,q)\in e}\big(T_{\rm mul}(Q,Q,2Q)
  +T_{\rm add}(w_e-\nu_2(c_{pq}))\big)+T_{\rm condneg}(w_e)
  +T_{\rm sq}(w_e{-}1,2w_e{-}2)\Big]\\
  &+T_{\rm add}(L-\nu_2(\mu_e))\Big\},
\end{aligned}
  \label{eq:Tsos}
\end{equation}
on $n_bQ+L+{}$scratch qubits, where $w_e$ is the entry register width (F3). The
formula equals the T-count read off the built circuit at all $17$ points. Its
structure is the whole point of Sec.~\ref{sec:ablation}: the cost is linear in
the number of \emph{products} $\sum_en_e$, not in the number of quartic
monomials.

\subsection{Results against the primary incumbent}
\label{sec:ftresults}
Tables~\ref{tab:ftsu2}, \ref{tab:ftmatrix} and Fig.~\ref{fig:ft}. Against
precision-matched Hamming-weight phasing the oracle loses only at
$\mathrm{SU}(2)$, $Q=2$ on the adjoint route ($0.77\times$; on the matrix route
the same point is $1.13\times$) and wins at every other point: $1.42\times$ at
$\mathrm{SU}(2)$ adjoint, $Q=3$, and $3.94\times$ at $Q=6$; on the matrix route
$1.13$, $1.17$, $1.63$, $2.00\times$ at $Q=2$ for $N=2,\dots,5$, and $2.82$,
$5.20$, $8.16$, $11.91\times$ at $Q=4$.

\emph{Width.} Unbatched Hamming-weight phasing needs far more ancillas than the
oracle ($146\,289$ against $475$ at $\mathrm{SU}(5)$, $Q=4$). Restricted to our
ancilla budget it costs more, and the oracle wins at every point
($1.30$--$20.1\times$) except $\mathrm{SU}(2)$ adjoint, $Q=2$, where the
incumbent's whole demand ($34$ ancillas) is below ours ($52$) and the two
comparisons coincide. Our own ancillas, $W-\text{sys}-L$, range from $52$ to
$475$ against system registers of $12$ to $200$ qubits; this is a real cost of
the method, and it is what the width-matched comparison and the surface-code
estimate of Sec.~\ref{sec:surface} are for. The Toffoli depth of the as-built
oracle is in Table~\ref{tab:ftrobust}.

\emph{Secondary baselines.} Against the published estimate at $T_{\rm typ}=10$
the oracle wins at every point, from $1.73\times$ to $47.2\times$; against one
precision-matched rotation per Walsh character, $2.06$--$301\times$. Both are
weaker than the primary incumbent, the first because its rotation price
corresponds to $\varepsilon\approx10^{-1}$ (Sec.~\ref{sec:precision}), the second
because it does not share angles, and neither is used for a headline claim.

\emph{Full Trotter step.} The optimised kinetic layer (Sec.~\ref{sec:kinopt}) is
identical in every compilation and dilutes every ratio: $0.79\times$
($\mathrm{SU}(2)$ adjoint, $Q=2$), $1.37\times$ ($Q=3$), $3.33\times$ ($Q=6$),
$4.93\times$ ($\mathrm{SU}(3)$, $Q=4$), $7.85\times$ ($\mathrm{SU}(4)$, $Q=4$)
and $11.55\times$ ($\mathrm{SU}(5)$, $Q=4$) (Table~\ref{tab:fullstep}). No
crossover moves.

\begin{table}[htbp]\centering\small
\setlength{\tabcolsep}{4pt}
\caption{Fault-tolerant T-count per Trotter step, potential term, $\mathrm{SU}(2)$, $d=2$, adjoint coordinates (the three cross-product entries $K^c_{12}$). ``Ours'' (T-count) and the width $W$ (all logical qubits) are read off the built and verified sum-of-squares oracle of Sec.~\ref{sec:ft} (F1--F3). $L$ is the phase-gradient width. Ratios are incumbent/ours ($>1$: ours cheaper). $B_{\rm HWP}$, the primary incumbent, is Hamming-weight phasing of the equal-angle classes with each rotation charged what its own angle costs, $T_d(L-1-\nu_2(\hat\phi)-k)$ (Sec.~\ref{sec:precision}); it is built and verified, with T-count and width equal to the formula at every point. It is given unbatched --- the cheaper of the two in T, hence the reference for T-count claims --- and width-matched, ``$W$'', with batches restricted so that its ancilla demand does not exceed ours, which is the reference for width and spacetime claims (Sec.~\ref{sec:incumbents}). Our own ancillas are $W-\text{sys}-L$. ``Full step'': potential $+$ the identical optimised kinetic layer (Sec.~\ref{sec:kinopt}), against unbatched $B_{\rm HWP}$. $B_{\rm pub}$, placed last, is the cost formula of Ref.~\cite{halimeh2026}, a secondary baseline: its rotation price $T_{\rm typ}=10$ corresponds to $\varepsilon\approx10^{-1}$ (Sec.~\ref{sec:precision}), so it is the weaker comparison and carries no headline claim. Further baselines, widths and depths are in Table~\ref{tab:ftrobust}.}
\label{tab:ftsu2}
\begin{tabular}{rrrrrrrrrr}
\hline
$Q$ & sys & $W$ & $L$ & $m_{\min}$ & ours & $B_{\rm HWP}^{\rm match}$ & $B_{\rm HWP}^{{\rm match},W}$ & full step & $B_{\rm pub}^{10}$\\
\hline
2 & 12 & 82 & 18 & 60 & 1500 & 0.77 & 0.77 & 0.79 & 3.41\\
3 & 18 & 108 & 22 & 315 & 2748 & 1.42 & 1.72 & 1.37 & 9.43\\
4 & 24 & 135 & 26 & 1020 & 4380 & 2.14 & 3.17 & 1.97 & 18.70\\
5 & 30 & 163 & 30 & 2535 & 6396 & 2.96 & 4.99 & 2.59 & 31.27\\
6 & 36 & 191 & 34 & 5328 & 8796 & 3.94 & 7.07 & 3.33 & 47.15\\
\hline
\end{tabular}\end{table}

\begin{table}[htbp]\centering\small
\setlength{\tabcolsep}{4pt}
\caption{Fault-tolerant T-count per Trotter step, potential term, matrix route, $d=2$. ``Ours'' (T-count) and the width $W$ (all logical qubits) are read off the built and verified sum-of-squares oracle of Sec.~\ref{sec:ft} (F1--F3). $L$ is the phase-gradient width. Ratios are incumbent/ours ($>1$: ours cheaper). $B_{\rm HWP}$, the primary incumbent, is Hamming-weight phasing of the equal-angle classes with each rotation charged what its own angle costs, $T_d(L-1-\nu_2(\hat\phi)-k)$ (Sec.~\ref{sec:precision}); it is built and verified, with T-count and width equal to the formula at every point. It is given unbatched --- the cheaper of the two in T, hence the reference for T-count claims --- and width-matched, ``$W$'', with batches restricted so that its ancilla demand does not exceed ours, which is the reference for width and spacetime claims (Sec.~\ref{sec:incumbents}). Our own ancillas are $W-\text{sys}-L$. ``Full step'': potential $+$ the identical optimised kinetic layer (Sec.~\ref{sec:kinopt}), against unbatched $B_{\rm HWP}$. $B_{\rm pub}$, placed last, is the cost formula of Ref.~\cite{halimeh2026}, a secondary baseline: its rotation price $T_{\rm typ}=10$ corresponds to $\varepsilon\approx10^{-1}$ (Sec.~\ref{sec:precision}), so it is the weaker comparison and carries no headline claim. Further baselines, widths and depths are in Table~\ref{tab:ftrobust}. The $N=4,5$ oracles are verified on $5$--$40$ sampled basis states; $N=2,3$ as in Appendix~\ref{sec:verif}.}
\label{tab:ftmatrix}
\begin{tabular}{rrrrrrrrrrr}
\hline
$N$ & $Q$ & sys & $W$ & $L$ & $m_{\min}$ & ours & $B_{\rm HWP}^{\rm match}$ & $B_{\rm HWP}^{{\rm match},W}$ & full step & $B_{\rm pub}^{10}$\\
\hline
2 & 2 & 16 & 120 & 24 & 186 & 2952 & 1.13 & 1.30 & 1.12 & 1.73\\
2 & 3 & 24 & 150 & 26 & 969 & 5096 & 1.83 & 2.54 & 1.76 & 5.09\\
2 & 4 & 32 & 185 & 30 & 3128 & 7912 & 2.82 & 4.61 & 2.61 & 10.35\\
3 & 2 & 36 & 201 & 24 & 2112 & 11796 & 1.17 & 1.89 & 1.17 & 2.20\\
3 & 3 & 54 & 265 & 29 & 10881 & 20184 & 2.79 & 5.19 & 2.72 & 6.50\\
3 & 4 & 72 & 328 & 33 & 34788 & 31032 & 5.20 & 10.13 & 4.93 & 13.36\\
4 & 2 & 64 & 297 & 26 & 9644 & 28432 & 1.63 & 2.91 & 1.62 & 2.88\\
4 & 3 & 96 & 409 & 30 & 49398 & 53072 & 4.05 & 7.53 & 3.96 & 7.81\\
4 & 4 & 128 & 509 & 34 & 157296 & 80720 & 8.16 & 14.88 & 7.85 & 16.24\\
5 & 2 & 100 & 422 & 26 & 28950 & 62180 & 2.00 & 3.48 & 1.99 & 3.22\\
5 & 3 & 150 & 565 & 30 & 147840 & 104740 & 5.82 & 9.86 & 5.71 & 9.67\\
5 & 4 & 200 & 711 & 36 & 469820 & 160620 & 11.91 & 20.14 & 11.55 & 19.92\\
\hline
\end{tabular}\end{table}

\begin{figure}[tbp]\centering
\includegraphics[width=\textwidth]{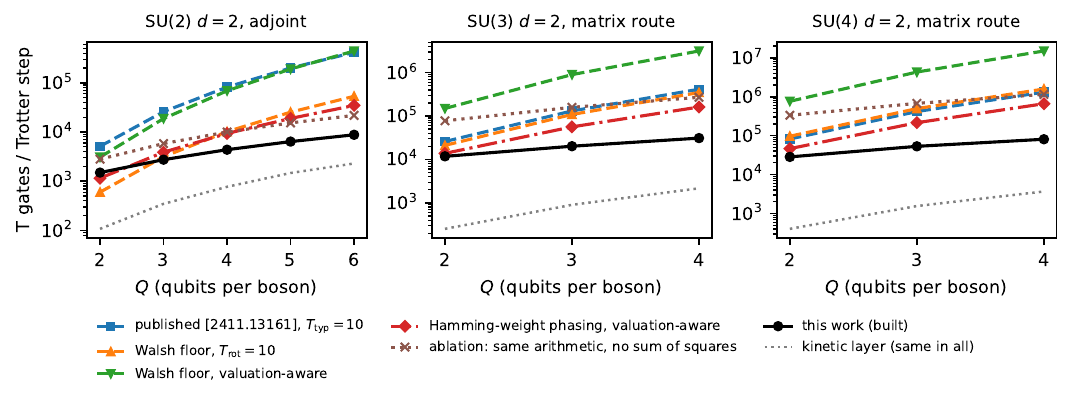}
\caption{Fault-tolerant T-count per Trotter step, potential term: the primary
incumbent (Hamming-weight phasing, valuation-aware), the built oracle, the
ablation of Sec.~\ref{sec:ablation} (same arithmetic, no sum of squares), the
optimised kinetic layer common to all of them (Sec.~\ref{sec:kinopt}), and the
secondary baselines of Sec.~\ref{sec:incumbents}.}
\label{fig:ft}
\end{figure}

\section{Scaling with \texorpdfstring{$N$}{N}}
\label{sec:scaling}
The measured advantages grow with $N$; this section derives why, and checks the
derivation against the built circuits. Per matrix pair, in matrix-entry
coordinates with merged products, the number of product terms summed into the
$N^2$ squared entries is
\begin{equation}
  \sum_en_e=2N(N-1)(2N-1),
  \label{eq:sumne}
\end{equation}
since each of the $N(N-1)/2$ entries $\mathrm{Re}_{i<j}$ and $\mathrm{Im}_{i<j}$
has $4(N-1)$ products (Eq.~\eqref{eq:nterms}) and each of the $N$ entries
$\mathrm{Im}_{ii}$ has $2(N-1)$ merged ones. The distinct quartic monomials,
by multiplicity type, are
\begin{equation}
  n_{1111}=\tfrac12N(N-1)(12N^2-28N+17),\qquad n_{22}=N(N-1)(4N-3),\qquad
  n_{112}=2N(N-1);
  \label{eq:nmono}
\end{equation}
these three polynomials were interpolated at $N=2,\dots,6$ and checked exactly
against the built monomial lists at $N=7,\dots,10$
(Table~\ref{tab:scalepoly}); Eq.~\eqref{eq:sumne} is derived. Extending
Theorem~\ref{thm:lb} to repeated bosons (a squared boson contributes the
$\binom Q2$ characters $Z_{a,j}Z_{a,j'}$),
\begin{equation}
  m_{\min}^{d=2}=n_{1111}Q^4+n_{22}\tbinom Q2^2+n_{112}\tbinom Q2Q^2
  +2N^2\tbinom Q2+N(N-1)Q^2,
  \label{eq:mminN}
\end{equation}
which equals the exact Walsh expansion at all twelve points $N=2,\dots,5$,
$Q=2,3,4$; for $d>2$ pairs sharing a matrix share only weight-two characters, so
the leading term is $P\,n_{1111}Q^4$ with $P=d(d-1)/2$. To leading order in $Q$,
with every factor of $N$ exact,
\begin{equation}
\begin{aligned}
  \text{floor:}\quad&m_{\min}\simeq P\,n_{1111}Q^4=\Theta(d^2N^4Q^4),\\
  \text{NISQ, ours:}\quad&2P\textstyle\sum_en_e\,Q^2(w_e-Q)=\Theta(d^2N^3Q^3),\\
  \text{FT, HWP:}\quad&\simeq4\,m_{\min}=\Theta(d^2N^4Q^4),\\
  \text{FT, ours:}\quad&2P\textstyle\sum_en_e\,[8Q^2+T_{\rm add}(w_e)]=\Theta(d^2N^3Q^2),\\
  \text{ablation:}\quad&2P\,n_{1111}\cdot48Q^2=\Theta(d^2N^4Q^2),
\end{aligned}
\label{eq:leading}
\end{equation}
where $Q^2(w_e-Q)$ counts the $R_{ZZ}$ levels left by S1 per product and pass,
$T_{\rm mul}(Q,Q,2Q)=8Q^2$, and $w_e=2Q+O(\log N)$. Every advantage therefore
grows at fixed $Q$ like
\begin{equation}
  \frac{n_{1111}}{\sum_en_e}=\frac{12N^2-28N+17}{4(2N-1)}\sim\frac{3N}{2},
  \label{eq:growth}
\end{equation}
i.e.\ linearly in $N$, and in $Q$ like $Q$ (NISQ), $Q^2$ (fault tolerance
against Hamming-weight phasing) or not at all (the ablation). The last entry is
the sharpest form of the statement made in the next section: the arithmetic
alone buys a constant, the factorisation buys a power of $N$.

Figure~\ref{fig:scaling}(a) compares the measured ratios with these
leading-order predictions, and Table~\ref{tab:scaleratio} gives them at every
point. The NISQ prediction is within $0.73$--$1.00$ of the measurement
everywhere; the fault-tolerant one within $0.81$--$0.85$ at $Q=4$ for $N\ge3$ and
less accurate at small $N$ and $Q$ (a factor $2.6$ at $N=Q=2$), where
subleading terms dominate; the measured ablation ratio is $0.52$--$0.83$ of its
prediction, rising with $Q$, because our additions, subleading in $Q$, are
not in the leading term. Both the growth in $N$ and the growth in $Q$ are
reproduced.

\subsection{The crossovers and the physics lie in the same direction}
\label{sec:whichway}
Every advantage established here is an increasing function of $N$ and of $Q$,
and every crossover in Table~\ref{tab:crossover} is a lower bound on one of
them. Read pessimistically, that says the methods are useless at small $Q$;
read correctly, it says they become useful exactly where the physics does,
because $N$ and $Q$ are the two axes along which the target regime lies and they
are not independent of one another.

The holographic content of these models is a large-$N$ statement: the thermal
large-$N$ states of BFSS at strong coupling are dual to a black
zero-brane~\cite{hanada2025}, and the supergravity description requires
$N\to\infty$, which is why Ref.~\cite{hartnett2026} identifies the $O(N^2)$
growth of the qubit register as what places the holographic regime beyond
near-term reach. Our advantage grows \emph{linearly in $N$}
(Eq.~\eqref{eq:growth}): the direction that makes the problem hard is the
direction in which the compilation gains.

Large $N$ then forces large $Q$. The authors of the coordinate-basis formulation
identify the large-$N$ limit of matrix models as one of the two cases they know
of that demand a large grid $\Lambda=2^Q$, on the ground that a probe in a
black-hole background has position components of typical size $\sqrt N$ that
must still be resolved at variance of order one, so that the number of points
per boson has itself to grow with $N$~\cite{hanada2025}. The two axes are
therefore traversed together, and our advantage grows along both: linearly in
$N$, and in $Q$ like $Q$ on the NISQ side and $Q^2$ fault-tolerantly. At
$\mathrm{SU}(5)$, $Q=4$, the largest point built, and still far from the
holographic regime, this already compounds to $3.8\times$ fewer two-qubit
gates than any parity network and $11.9\times$ fewer T gates than the primary
incumbent.

Two qualifications keep this honest. We do not quantify the $Q$ required at a
given $N$, coupling and target accuracy; that is a question about truncation
error, not about compilation, and it is not settled here. And a compilation that
is several times cheaper does not by itself bring the black-hole regime within
reach, the register still grows as $O(N^2)$, and that is the binding
constraint. What the constructions change is the constant in front of it, in the
direction where the constant matters most.

\begin{figure}[tbp]\centering
\includegraphics[width=\textwidth]{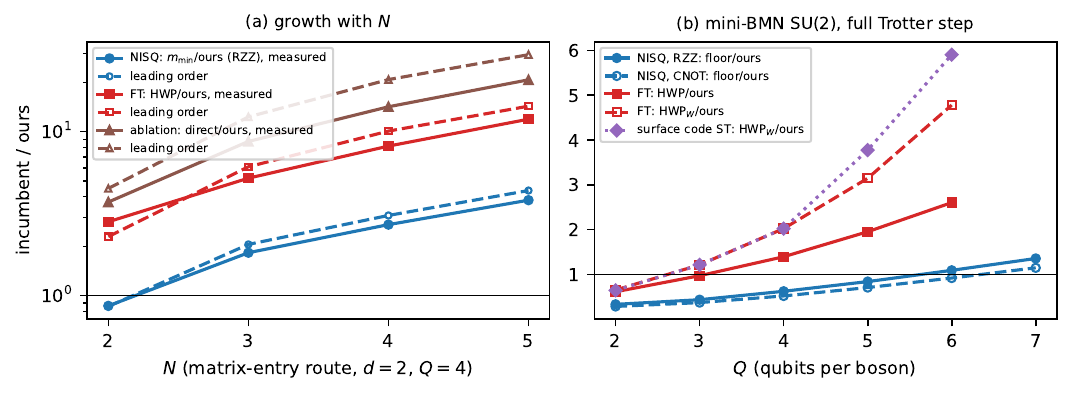}
\caption{(a) Incumbent divided by ours at $Q=4$ against $N$ (matrix-entry route,
$d=2$): NISQ floor $m_{\min}$/ours (native $R_{ZZ}$), fault-tolerant
Hamming-weight phasing/ours, and ablation/ours, measured (solid) and at the
leading order of Eq.~\eqref{eq:leading} (open, dashed). (b) Supersymmetric
mini-BMN, $\mathrm{SU}(2)$, full Trotter step (potential, kinetic and fermion
layers): NISQ floor/ours in both gate sets, fault-tolerant T-count ratios against
unbatched and width-matched Hamming-weight phasing, and the surface-code
spacetime ratio against the width-matched incumbent (Sec.~\ref{sec:bmn}).}
\label{fig:scaling}
\end{figure}

\section{The factorisation, not the arithmetic: two ablations}
\label{sec:ablation}
Reversible arithmetic with a phase-gradient injection is
known~\cite{kitaev2002,gidney2018}, also for a quartic lattice
potential~\cite{madhav2026}. The obvious deflation of Sec.~\ref{sec:ft} is
therefore that the win is just ``known arithmetic applied to a new
Hamiltonian''. It is not, and the point can be settled by experiment rather than
by argument, in two directions.

\paragraph{Same arithmetic, no factorisation.}
We built a control that evaluates the \emph{same} exact integer $V_{\rm int}$,
with the \emph{same} primitives, pricing and gradient register, from its quartic
monomials instead:
\begin{equation}
  V_{\rm int}=\sum_Mc_M\,m_{a_1}m_{a_2}m_{a_3}m_{a_4},
  \label{eq:mono}
\end{equation}
obtained by expanding $\sum_e\mu_ev_e^2$ symbolically ($9$, $23$, $189$, $762$
and $2150$ monomials at $\mathrm{SU}(2)$ adjoint and $N=2,\dots,5$ matrix
entries). Each monomial is a chain of unsigned products of $|m|$ registers (a
repeated boson is squared, not multiplied: types $(1,1,1,1)$, $(1,1,2)$,
$(2,2)$, each with its cheapest order); its sign is the XOR of the MSBs of the
bosons of odd multiplicity and of $\mathrm{sgn}\,c_M$; $|c_M|$ is applied by one
signed shifted addition per set bit. Two injection strategies are built: one
injection per monomial (``direct'', the cheaper), and all monomials accumulated
into one register with a single injection and the compute stage mirrored
(``single''). Both are verified (payload exact, system restored, ancillas clean;
$5$ points $\times$ $2$ modes, $0$ failures) and equal their closed forms.

The result (Table~\ref{tab:ablation}, Fig.~\ref{fig:ft}) is unambiguous. The
direct evaluation costs $1.87$--$2.48\times$ our oracle at $\mathrm{SU}(2)$
adjoint, $2.88$--$3.72\times$ on the $\mathrm{SU}(2)$ matrix entries, and
$6.6$--$8.7$, $11.7$--$14.2$ and $15.5$--$20.7\times$ at $N=3$, $4$, $5$; the
single-injection variant is always about twice the direct one. More importantly,
the direct evaluation \emph{loses to the Hamming-weight-phasing incumbent at
$15$ of the $17$ points} (incumbent/direct $=0.13$--$0.93$), winning only at
$\mathrm{SU}(2)$, $Q=5,6$. The arithmetic by itself does not beat the incumbent;
the factorisation does.

The mechanism is visible in Eq.~\eqref{eq:Tsos}. A monomial of type $(1,1,1,1)$
costs a three-multiplication chain, about $48Q^2$ T forward, while the sum of
squares pays one $Q\times Q$ product, $8Q^2$ T, per product term of an entry and
one square per entry --- and, decisively, the number of product terms grows one
power of $N$ more slowly than the number of monomials, Eq.~\eqref{eq:growth}.
This is also why the ablation's disadvantage grows with $N$ while its ratio to
the incumbent is nearly $Q$-independent: the ablation and the incumbent share
the same $\Theta(N^4)$ object count and differ only in the price per object.

\paragraph{Same factorisation, no advantage to gain.}
The converse control is the kinetic term, and it is at least as sharp. In the
momentum register the kinetic term is quadratic, so its Walsh support is only
$n_bQ(Q-1)/2$ characters and by Lemma~\ref{lem:anc} its phase polynomial is
\emph{already} optimal, there is nothing for a sum of squares to factor. If
arithmetic were what pays, it would still help there. It does not: the built
arithmetic kinetic oracle costs $5.4$--$9.5\times$ the phase-polynomial layer at
$Q=2,3,4$ even when the latter is charged the weak flat price of $10$ T per
rotation, and more against the valuation-aware price of Eq.~\eqref{eq:Td} that
the layer of Sec.~\ref{sec:kinopt} actually pays. Arithmetic is a tool for
turning a large factored object into a small number of evaluations; applied
where there is no factorisation to exploit, it loses. The two controls bracket
the claim from both sides.

\begin{table}[htbp]\centering\footnotesize
\setlength{\tabcolsep}{3.5pt}
\caption{The arithmetic ablation (Sec.~\ref{sec:ablation}): the same exact integer $V_{\rm int}$, primitives, pricing and gradient register as our oracle, but evaluated from its quartic monomials (no sum of squares). ``Direct'': one injection per monomial (the cheaper variant); ``single'': all monomials accumulated into one register, one injection, the compute stage mirrored. Both are built and verified and equal their closed forms. HWP: the precision-matched Hamming-weight-phasing incumbent. All counts are T gates per Trotter step, potential term, $d=2$.}
\label{tab:ablation}
\begin{tabular}{lrrrrrrrrr}
\hline
model & $Q$ & mono- & direct & direct & ours & direct/ & HWP & HWP/ & HWP/\\ & & mials & & single & & ours & & direct & ours\\
\hline
SU(2) adj. & 2 & 9 & 2808 & 5684 & 1500 & 1.87 & 1156 & 0.41 & 0.77\\
SU(2) adj. & 3 & 9 & 5832 & 11748 & 2748 & 2.12 & 3896 & 0.67 & 1.42\\
SU(2) adj. & 4 & 9 & 10008 & 20116 & 4380 & 2.28 & 9356 & 0.93 & 2.14\\
SU(2) adj. & 5 & 9 & 15336 & 30788 & 6396 & 2.40 & 18904 & 1.23 & 2.96\\
SU(2) adj. & 6 & 9 & 21816 & 43764 & 8796 & 2.48 & 34684 & 1.59 & 3.94\\
SU(2) matrix & 2 & 23 & 8508 & 17108 & 2952 & 2.88 & 3336 & 0.39 & 1.13\\
SU(2) matrix & 3 & 23 & 17172 & 34444 & 5096 & 3.37 & 9308 & 0.54 & 1.83\\
SU(2) matrix & 4 & 23 & 29412 & 58940 & 7912 & 3.72 & 22288 & 0.76 & 2.82\\
SU(3) matrix & 2 & 189 & 77652 & 155396 & 11796 & 6.58 & 13820 & 0.18 & 1.17\\
SU(3) matrix & 3 & 189 & 158952 & 318016 & 20184 & 7.88 & 56344 & 0.35 & 2.79\\
SU(3) matrix & 4 & 189 & 270360 & 540848 & 31032 & 8.71 & 161384 & 0.60 & 5.20\\
SU(4) matrix & 2 & 762 & 333240 & 666580 & 28432 & 11.72 & 46296 & 0.14 & 1.63\\
SU(4) matrix & 3 & 762 & 673560 & 1347236 & 53072 & 12.69 & 214744 & 0.32 & 4.05\\
SU(4) matrix & 4 & 762 & 1144824 & 2289780 & 80720 & 14.18 & 658820 & 0.58 & 8.16\\
SU(5) matrix & 2 & 2150 & 960520 & 1921140 & 62180 & 15.45 & 124400 & 0.13 & 2.00\\
SU(5) matrix & 3 & 2150 & 1945640 & 3891396 & 104740 & 18.58 & 609496 & 0.31 & 5.82\\
SU(5) matrix & 4 & 2150 & 3328440 & 6657020 & 160620 & 20.72 & 1913468 & 0.57 & 11.91\\
\hline
\end{tabular}\end{table}

\section{The full Trotter step and surface-code spacetime}
\label{sec:kinetic}

The potential term is where the compilations differ; the kinetic layer is the
same circuit in all of them. That makes it irrelevant to rankings and very
relevant to margins, because once the potential is cheap the common layer can
become the larger part of the step.

\subsection{The kinetic layer}
\label{sec:kinopt}
With $x_n=\delta(n-c)$, $p_k=\tfrac{2\pi}{\Lambda\delta}(k-c)$ and
$c=(\Lambda-1)/2$,
\begin{equation}
  F_{\rm centred}=e^{-2\pi ic^2/\Lambda}\,\Phi\cdot\mathrm{QFT}^\dagger\cdot\Phi,
  \qquad \Phi=\mathrm{diag}\big(e^{2\pi icn/\Lambda}\big)
  =\prod_j\exp\!\big[\tfrac{2\pi ic2^{\,j}}{\Lambda}b_j\big],
  \label{eq:centred}
\end{equation}
so the centring ramps $\Phi$ are products of single-qubit phases
(Fig.~\ref{fig:centredf}): $2Q$ rotations per boson and no two-qubit gates. Our
circuit reproduces $F_{\rm centred}$ to $\le2\times10^{-15}$ for
$Q=2,\dots,5$. The ramps are not optional: omitting $\Phi$ leaves $\hat p^2$
with off-diagonal elements of magnitude $1.7$--$2.0$ in the position basis
against $\|\hat p^2\|\sim10$, i.e.\ an $O(1)$ corruption of the kinetic operator
produced by dropping single-qubit gates.

Per Trotter step and per boson the direct layer is
$F^\dagger\,e^{-2\pi iA_{\rm kin}\tilde m^2/2^L}\,F$: two centred transforms and
one phase polynomial. It is built and verified, the transform reproduces the
analytic kernel to $2\times10^{-15}$ and the whole layer reproduces
$e^{-2\pi iA_{\rm kin}\hat p^2_{\rm int}/2^L}$ to $1.5\times10^{-15}$ for
$Q=2,\dots,5$ and $L=18,26,34$. Two transforms contribute $Q(Q-1)/2$ CPhase and
$\lfloor Q/2\rfloor$ SWAPs each, the phase polynomial $Q(Q-1)/2$ $R_{ZZ}$:
\begin{equation}
  N_{2q}^{\rm kin}=n_b\Big[\tfrac32Q(Q-1)+6\lfloor Q/2\rfloor\Big]
  \quad\text{(native)},\qquad
  n_b\Big[3Q(Q-1)+6\lfloor Q/2\rfloor\Big]\quad\text{(CNOT only)} .
  \label{eq:kin2q}
\end{equation}
Fault-tolerantly every angle in the layer is dyadic, so each rotation is priced
by $T_d$ of Eq.~\eqref{eq:Td}: a $\Phi$ phase $2\pi c2^{\,j}/\Lambda$ has
$b=Q+1-j$; a CPhase at separation $m$ is three phase gates of angle
$2\pi/2^{\,m+2}$ (from $ab=(a+b-a\!\oplus\!b)/2$), $b=m+2$, and there are $Q-m$
of them; the character $Z_jZ_k$ of $\tilde m^2$ carries the integer coefficient
$2A_{\rm kin}2^{\,j+k}$, so $b=L-2-j-k$. Hence
\begin{equation}
  T_{\rm kin}=n_b\Big[4\!\sum_{j=0}^{Q-1}\!T_d(Q{+}1{-}j)
  +6\!\sum_{m=1}^{Q-1}\!(Q{-}m)\,T_d(m{+}2)
  +\!\!\sum_{0\le j<k\le Q-1}\!\!T_d(L{-}2{-}j{-}k)\Big],
  \label{eq:kinT}
\end{equation}
which reproduces the built gate list exactly at all twelve $(Q,L)$ points
tested. At $\mathrm{SU}(2)$, $d=2$, $Q=4$, $L=26$ this is $5124$ T, more than
the $4380$ T of our potential oracle: with the potential this cheap, the layer
common to every compilation has become the larger part of the fault-tolerant
step.

Four exact rewrites fix that. They are standard phasing and
bookkeeping~\cite{gidney2018,kivlichan2020}, none is claimed as new, and they are
stated in full in Appendix~\ref{app:circuits}: K1, the ramps commute through the
momentum-basis phase; K2, the outer ramps telescope across the Trotter product
and cost nothing per step; K3, the bit-reversal swaps are absorbed into a wire
relabelling; and K4, every rotation of the layer carries the same angle on every
boson, so the whole layer becomes one weighted Hamming-weight tree plus a single
subtraction into the gradient register, with the bosons batched to fit our
oracle's ancillas, which are free while the layer runs. K1--K3 are verified on
dense unitaries for $Q=2,\dots,6$ and K2 over three Trotter steps with random
diagonal potentials (deviation $\le1.7\times10^{-15}$); K4 is verified by
reversible simulation, with the gradient payload equal to $A\sum_a\tilde m_a^2$
up to a constant for batches of $6$--$25$ bosons at $Q=2,\dots,6$, every CPhase
class verified for both signs, every qubit restored, and the built T-count equal
to the formula at every point.

The result (Table~\ref{tab:kinopt}) is a layer $3.7$--$9.8\times$ cheaper in T
gates ($5124\to772$ at $\mathrm{SU}(2)$, $Q=4$; $54\,700\to5592$ at
$\mathrm{SU}(5)$, $Q=4$). The kinetic phase polynomial, which dominated the
direct layer with its $L$-bit rotations, now costs one subtraction and a small
tree per batch. The NISQ layer drops to $n_b\,\tfrac32Q(Q-1)$ native gates, or
$n_b\,3Q(Q-1)$ with CNOT only, because the swaps are gone ($180\to108$ at
$\mathrm{SU}(2)$, $Q=4$). This layer is used in every compilation, including the
incumbents, in all full-step tables and in the surface-code estimate.

Table~\ref{tab:fullstep} gives the whole step. Wherever our oracle wins, the
fault-tolerant margins of the full step now lie at most $16\%$ below the
potential-term ones, for instance $5.20\times\to4.93\times$ at
$\mathrm{SU}(3)$, $Q=4$, against $\to3.64\times$ with the direct layer. Wherever
our NISQ circuit wins, its full-step ratios move by less than $3\%$. At the
smallest $Q$, where it loses, the smaller common layer moves them towards the
potential-term values ($0.27\to0.20$ at $\mathrm{SU}(2)$, $Q=2$). No crossover
moves in either era.

\begin{table}[htbp]\centering\footnotesize
\setlength{\tabcolsep}{3.5pt}
\caption{The kinetic layer per Trotter step, all $n_b$ bosons: the direct compilation of Eqs.~\eqref{eq:kin2q} and~\eqref{eq:kinT} against the optimised layer of Sec.~\ref{sec:kinopt} (both built and verified). T gates on the gradient ladder $L$ of each point; ``batches'': number of subtractions into the gradient register for the kinetic phase polynomial (bosons are batched to fit the ancilla budget of our oracle, which is free while the layer runs); ancillas of the optimised layer. Two-qubit gates (direct/optimised) with native $R_{ZZ}$ and with CNOT only. The same layer is used in every compilation compared.}
\label{tab:kinopt}
\begin{tabular}{lrrrrrrrrrr}
\hline
model & $Q$ & $n_b$ & $L$ & T dir. & T opt. & ratio & batches & anc. & 2q RZZ & 2q CNOT\\
\hline
SU(2) adj. & 2 & 6 & 18 & 396 & 108 & 3.67 & 1 & 23 & 54/18 & 72/36\\
SU(2) adj. & 3 & 6 & 22 & 2040 & 348 & 5.86 & 1 & 47 & 90/54 & 144/108\\
SU(2) adj. & 4 & 6 & 26 & 5124 & 772 & 6.64 & 1 & 84 & 180/108 & 288/216\\
SU(2) adj. & 5 & 6 & 30 & 10008 & 1472 & 6.80 & 2 & 93 & 252/180 & 432/360\\
SU(2) adj. & 6 & 6 & 34 & 17052 & 2300 & 7.41 & 2 & 105 & 378/270 & 648/540\\
SU(2) matrix & 2 & 8 & 24 & 720 & 156 & 4.62 & 1 & 33 & 72/24 & 96/48\\
SU(2) matrix & 3 & 8 & 26 & 3104 & 460 & 6.75 & 1 & 63 & 120/72 & 192/144\\
SU(2) matrix & 4 & 8 & 30 & 7600 & 1020 & 7.45 & 1 & 112 & 240/144 & 384/288\\
SU(3) matrix & 2 & 18 & 24 & 1620 & 252 & 6.43 & 1 & 51 & 162/54 & 216/108\\
SU(3) matrix & 3 & 18 & 29 & 7632 & 900 & 8.48 & 1 & 122 & 270/162 & 432/324\\
SU(3) matrix & 4 & 18 & 33 & 18396 & 2164 & 8.50 & 2 & 220 & 540/324 & 864/648\\
SU(4) matrix & 2 & 32 & 26 & 3136 & 404 & 7.76 & 1 & 81 & 288/96 & 384/192\\
SU(4) matrix & 3 & 32 & 30 & 13952 & 1556 & 8.97 & 1 & 209 & 480/288 & 768/576\\
SU(4) matrix & 4 & 32 & 34 & 33472 & 3680 & 9.10 & 2 & 340 & 960/576 & 1536/1152\\
SU(5) matrix & 2 & 50 & 26 & 4900 & 576 & 8.51 & 1 & 115 & 450/150 & 600/300\\
SU(5) matrix & 3 & 50 & 30 & 21800 & 2344 & 9.30 & 1 & 313 & 750/450 & 1200/900\\
SU(5) matrix & 4 & 50 & 36 & 54700 & 5592 & 9.78 & 2 & 474 & 1500/900 & 2400/1800\\
\hline
\end{tabular}\end{table}

\begin{table}[htbp]\centering\small
\setlength{\tabcolsep}{4pt}
\caption{The full Trotter step: potential $+$ kinetic. The optimised kinetic layer (Sec.~\ref{sec:kinopt}) is identical in every compilation compared here, so it dilutes every ratio. NISQ columns: two-qubit gates with native $R_{ZZ}$ (the floor $m_{\min}$ of the potential term divided by our count, before and after adding the kinetic layer to both). FT columns: T gates, against the precision-matched Hamming-weight-phasing incumbent. For $\mathrm{SU}(2)$ both eras use the same cross-product entries $K^c_{12}$ on the same $6Q$ system qubits. The mini-BMN full step (potential, kinetic and fermion layers) is in Table~\ref{tab:bmnft}.}
\label{tab:fullstep}
\begin{tabular}{lrrrrrrrrr}
\hline
model & $Q$ & kin. 2q & pot. 2q & pot. & full & kin. T & pot. T & pot. & full\\ & & & ours & ratio & ratio & & ours & ratio & ratio\\
\hline
SU(2) adj. & 2 & 18 & 372 & 0.16 & 0.20 & 108 & 1500 & 0.77 & 0.79\\
SU(2) adj. & 3 & 54 & 909 & 0.35 & 0.38 & 348 & 2748 & 1.42 & 1.37\\
SU(2) adj. & 4 & 108 & 1746 & 0.58 & 0.61 & 772 & 4380 & 2.14 & 1.97\\
SU(2) adj. & 5 & 180 & 3015 & 0.84 & 0.85 & 1472 & 6396 & 2.96 & 2.59\\
SU(2) adj. & 6 & 270 & 4704 & 1.13 & 1.13 & 2300 & 8796 & 3.94 & 3.33\\
SU(2) matrix & 2 & 24 & 796 & 0.23 & 0.26 & 156 & 2952 & 1.13 & 1.12\\
SU(2) matrix & 3 & 72 & 1912 & 0.51 & 0.52 & 460 & 5096 & 1.83 & 1.76\\
SU(2) matrix & 4 & 144 & 3620 & 0.86 & 0.87 & 1020 & 7912 & 2.82 & 2.61\\
SU(3) matrix & 2 & 54 & 4245 & 0.50 & 0.50 & 252 & 11796 & 1.17 & 1.17\\
SU(3) matrix & 3 & 162 & 10152 & 1.07 & 1.07 & 900 & 20184 & 2.79 & 2.72\\
SU(3) matrix & 4 & 324 & 19071 & 1.82 & 1.81 & 2164 & 31032 & 5.20 & 4.93\\
SU(4) matrix & 2 & 96 & 11440 & 0.84 & 0.84 & 404 & 28432 & 1.63 & 1.62\\
SU(4) matrix & 3 & 288 & 31136 & 1.59 & 1.58 & 1556 & 53072 & 4.05 & 3.96\\
SU(4) matrix & 4 & 576 & 58112 & 2.71 & 2.69 & 3680 & 80720 & 8.16 & 7.85\\
SU(5) matrix & 2 & 150 & 27490 & 1.05 & 1.05 & 576 & 62180 & 2.00 & 1.99\\
SU(5) matrix & 3 & 450 & 65755 & 2.25 & 2.24 & 2344 & 104740 & 5.82 & 5.71\\
SU(5) matrix & 4 & 900 & 123080 & 3.82 & 3.80 & 5592 & 160620 & 11.91 & 11.55\\
\hline
\end{tabular}\end{table}

All compilations compared here implement the \emph{same} product formula, so
they require the same number of Trotter steps for a given accuracy and the step
count cancels from every ratio in this paper. For the absolute figures in the
surface-code estimate below we fix $10^3$ steps for every circuit alike; the
leading-order, interference-aware error estimate of
Refs.~\cite{layden2022,tran2020} is the appropriate tool for converting that
into a simulated time, and a validation of it for this model at
$\mathrm{SU}(2)$, $d=2$, $Q=2$ is included in the code package.

\subsection{Surface-code spacetime}
\label{sec:surface}
T counts are a proxy; what a machine actually spends is physical qubits
multiplied by wall-clock time. We therefore price whole Trotter steps with the
Azure Quantum Resource Estimator (the \texttt{qdk} Python package,
logical-count interface), whose model is documented in
Ref.~\cite{beverland2022}: rotated surface code with $2d^2$ physical qubits per
logical qubit and logical cycle $(4t_{\rm gate}+2t_{\rm meas})\,d$, a fixed
layout of the algorithmic logical qubits, $15$-to-$1$ distillation factories,
and a logical depth that schedules every T gate, measurement, rotation and
CCZ/CCiX sequentially. The same model, error budget ($0.01$ in total) and number
of Trotter steps ($10^3$, potential $+$ kinetic) are applied to every circuit;
only the logical counts differ:
\begin{description}
\item[Ours.] Logical qubits: the built width. Every T gate of the oracle belongs
to a logical AND, so it enters as $T/4$ CCiX gates with $T/4$
measurement-based uncomputations.
\item[Hamming-weight phasing.] System $+L+$ its ancillas (unbatched, or
width-matched); CCiX $=$ the adder-tree ANDs plus the $b-1$ ANDs of each
$b\ge4$ dyadic rotation; T $=$ the $b=3$ rotations; $b\le2$ rotations are Clifford
(Eq.~\eqref{eq:Td}).
\item[Published (secondary).] System $+L$ logical qubits and $d(d-1)N^4Q^4$
arbitrary-angle rotations per step~\cite{halimeh2026}, synthesised by the
estimator itself at the budget's precision ($20$--$25$ T each here, against
$T_{\rm typ}=10$ in the T-count tables).
\end{description}
The optimised kinetic layer is split into ANDs and T gates in the same way and
added to all three; every circuit's width is at least system $+L+$ that layer's
ancillas. For the narrow published incumbent the ancilla-free direct layer of
Eq.~\eqref{eq:kinT} is used instead whenever it gives less spacetime, which it
always does. The splits are asserted to reproduce every T-count used elsewhere.
Three of the estimator's predefined profiles are used:
\texttt{qubit\_gate\_ns\_e3} and \texttt{ns\_e4} (gate $50$\,ns, measurement
$100$\,ns, error rates $10^{-3}$ and $10^{-4}$) and \texttt{qubit\_gate\_us\_e3}
($100\,\mu$s, $10^{-3}$, T-gate error $10^{-6}$), as documented by the
estimator.

Table~\ref{tab:surface} gives the physical qubits, runtime and code distance of
our oracle and of the width-matched incumbent for \texttt{ns\_e4}, and the
spacetime ratios over all three profiles. Two things change relative to a
width$\times$T proxy. First, magic-state factories are a large part of every
footprint, so physical-qubit counts are close between circuits (often within a
few per cent) and the comparison is carried by the runtime, i.e.\ by the logical
depth. Second, the unbatched incumbent's width now costs it code distance as
well as qubits ($d=13$--$15$ against our $11$--$13$ at the large points). The
outcome: our oracle uses less spacetime than the width-matched incumbent at
every point except $\mathrm{SU}(2)$ adjoint, $Q=2$ ($0.69$--$0.73\times$):
$1.28\times$ at the $\mathrm{SU}(2)$ matrix entries, $Q=2$, $1.64\times$ at
$\mathrm{SU}(2)$, $Q=3$, and up to $25.5\times$ ($17.1$--$25.5\times$ at
$\mathrm{SU}(4)$ and $20.3$--$25.4\times$ at $\mathrm{SU}(5)$, $Q=4$). In
absolute terms, $10^3$ steps at $\mathrm{SU}(2)$, matrix entries, $Q=4$ need
$1.35\times10^5$ physical qubits for $39$\,s at $d=11$ (\texttt{ns\_e4});
$\mathrm{SU}(5)$, $Q=4$ needs $5.7\times10^5$ qubits for $864$\,s, against
$68$ and $1120$\,s with the direct kinetic layer. These are estimates within one
estimator's model; the caveats are in Discussion and Conclusions.

\begin{table}[htbp]\centering\footnotesize
\setlength{\tabcolsep}{3.5pt}
\caption{Surface-code estimate (Sec.~\ref{sec:surface}), $1000$ Trotter steps (potential $+$ kinetic), total error budget $0.01$, Azure Quantum Resource Estimator (model of Ref.~\cite{beverland2022}). Physical qubits, runtime (s) and code distance $d$ of our oracle, and of the width-matched Hamming-weight-phasing incumbent (HWP$_W$), for the \texttt{qubit\_gate\_ns\_e4} profile. Last three columns: spacetime (physical qubits $\times$ runtime) of each incumbent divided by ours, range over the three profiles \texttt{ns\_e3}, \texttt{ns\_e4}, \texttt{us\_e3}. The unbatched incumbent, whose ancilla demand is far larger than ours, is omitted here; its ratios are correspondingly larger and are in the code package. ``Published'': $d(d-1)N^4Q^4$ arbitrary-angle rotations synthesised by the estimator itself, with whichever kinetic layer --- the optimised one, or the direct ancilla-free one of Eq.~\eqref{eq:kinT} --- gives it less spacetime.}
\label{tab:surface}
\begin{tabular}{lrrrrrrrr}
\hline
model & $Q$ & phys. & time & $d$ & phys. & time & ST HWP$_W$/ & ST publ./\\ & & ours & ours & & HWP$_W$ & HWP$_W$ & ours & ours\\
\hline
SU(2) adj. & 2 & 53k & 6 & 9 & 46k & 5 & 0.69--0.73 & 5.7--9.7\\
SU(2) adj. & 3 & 85k & 11 & 9 & 85k & 18 & 1.64--1.64 & 16.3--27.1\\
SU(2) adj. & 4 & 95k & 19 & 9 & 109k & 65 & 3.47--4.02 & 43.1--58.9\\
SU(2) adj. & 5 & 124k & 35 & 11 & 123k & 147 & 4.22--6.26 & 55.2--75.9\\
SU(2) adj. & 6 & 138k & 49 & 11 & 137k & 284 & 5.78--9.02 & 86.9--108.0\\
SU(2) matrix & 2 & 89k & 11 & 9 & 89k & 14 & 1.28--1.28 & 2.6--4.7\\
SU(2) matrix & 3 & 117k & 24 & 11 & 116k & 59 & 2.40--2.65 & 6.0--12.1\\
SU(2) matrix & 4 & 135k & 39 & 11 & 133k & 165 & 4.16--6.04 & 18.1--25.1\\
SU(3) matrix & 2 & 143k & 53 & 11 & 141k & 99 & 1.85--2.41 & 2.4--4.4\\
SU(3) matrix & 3 & 176k & 93 & 11 & 207k & 465 & 5.91--6.62 & 10.5--11.6\\
SU(3) matrix & 4 & 207k & 146 & 11 & 325k & 1646 & 11.22--17.70 & 18.0--28.2\\
SU(4) matrix & 2 & 191k & 127 & 11 & 191k & 366 & 2.88--2.88 & 3.4--4.9\\
SU(4) matrix & 3 & 248k & 240 & 11 & 381k & 2085 & 8.72--13.35 & 10.2--15.6\\
SU(4) matrix & 4 & 298k & 371 & 11 & 450k & 6267 & 17.13--25.53 & 26.8--31.6\\
SU(5) matrix & 2 & 254k & 276 & 11 & 369k & 1127 & 4.04--5.92 & 4.1--6.4\\
SU(5) matrix & 3 & 358k & 471 & 11 & 492k & 5380 & 11.69--15.69 & 13.7--17.0\\
SU(5) matrix & 4 & 570k & 864 & 13 & 592k & 16846 & 20.26--25.41 & 25.0--30.0\\
\hline
\end{tabular}\end{table}

\section{Correctness certificate: complete Trotter evolutions}
\label{sec:aer}
The verification so far is layer by layer: the potential circuits as diagonal
phase maps on basis states (Secs.~\ref{sec:competitor} and~\ref{sec:nisq}), the
kinetic layer on dense unitaries (Sec.~\ref{sec:kinopt}). Here the pieces are
run together as a real-time evolution: complete Trotter circuits, kinetic layer
and every ancilla included, executed on the Qiskit Aer statevector simulator and
compared with exact diagonalisation (ED) of the same truncated Hamiltonian. The
question this section answers is whether any compilation changes the physics,
not whether it is cheaper, which at the one point reachable by statevector
simulation it is not. Such benchmarks are standard, and nothing here is claimed
as new; the set-up and the measured tables are in Appendix~\ref{app:aer}.

At $\mathrm{SU}(2)$, $d=2$, $Q=2$ (six bosons, $12$ system qubits) the potential
term is compiled three ways, the structure-aware parity network of
Sec.~\ref{sec:competitor} and our circuit in both emissions, each combined
with the optimised kinetic layer, and each run as a complete quench evolution
from a product of single-boson ground states. The reference Hamiltonian is built
by an independent code path, from the Gram form~\eqref{eq:gram} rather than from
the bilinears the circuits compile, and the two paths meet in one identity
checked on all $4096$ basis states.

\begin{figure}[tbp]\centering
\includegraphics[width=\textwidth]{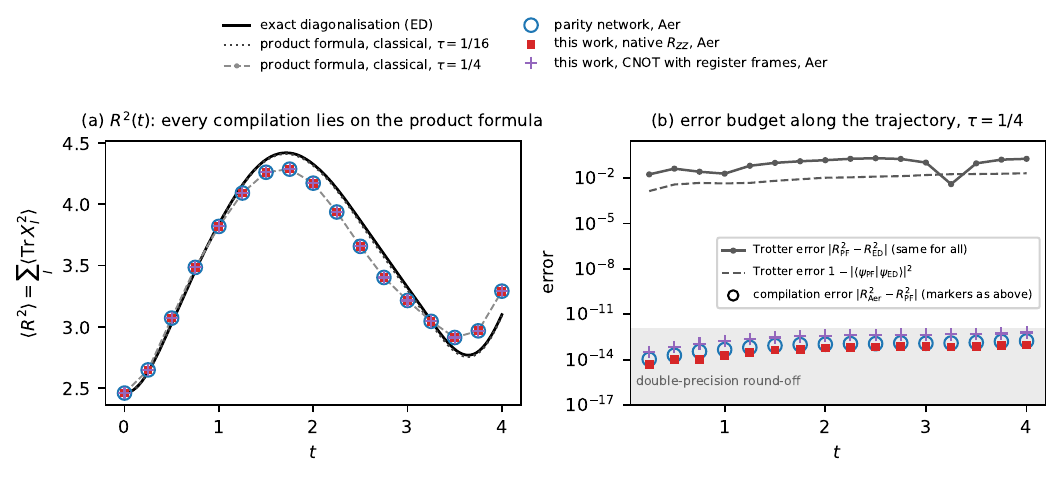}
\caption{Full-Trotter dynamics of the $\mathrm{SU}(2)$, $d=2$ matrix model at
$Q=2$ on the Qiskit Aer statevector simulator ($g=1$,
$\delta=\sqrt{2\pi/\Lambda}$; quench from the ground state of
$\sum_a\frac12(p_a^2+4x_a^2)$). Each compiled circuit, the structure-aware
parity network and our circuit in both emissions, each with the optimised
kinetic layer and all ancillas, is run as a complete time evolution and
compared with exact diagonalisation (ED) of the same truncated Hamiltonian,
built by an independent code path. (a)~$\langle
R^2\rangle=\sum_I\langle\mathrm{Tr}\,X_I^2\rangle$: the three circuits coincide
with the classical product formula at $\tau=1/4$, and the visible gap to ED is
Trotter error, which closes at $\tau=1/16$ (dotted). (b)~The error budget: the
compilation error sits about twelve orders of magnitude below the Trotter error,
which is identical for every compilation.}
\label{fig:aer}
\end{figure}

The outcome (Fig.~\ref{fig:aer}, Tables~\ref{tab:aer} and~\ref{tab:aerconv}) is
that the compilation error is at double-precision round-off and the only
departure from exact diagonalisation is Trotter error. For all three circuits
$\langle R^2\rangle$ equals the product formula to $6.3\times10^{-13}$ along the
trajectory and to $1.9\times10^{-12}$ in the convergence runs, $\langle
H\rangle$ to $6.3\times10^{-13}$, and the state infidelity with the
product-formula state is at most $6.0\times10^{-13}$; after every step every
ancilla is back in $|0\rangle$ to within $2.4\times10^{-14}$, and the three
trajectories agree with one another to $5.3\times10^{-13}$. The Trotter error at
$\tau=1/4$ is at most $0.21$ in $\langle R^2\rangle$ and $2.1\times10^{-2}$ in
infidelity, identical for every compilation and about twelve orders of magnitude
above the compilation error, and it converges with the expected orders $2$ (in
$\langle R^2\rangle$ and the state distance) and $4$ (in infidelity). The
compiled circuits implement the product formula exactly; the parity network and
both emissions of our circuit yield the same states and therefore the same
physics.

\paragraph{What this does not show.} It does not show the gate advantage. At
$Q=2$ our circuit costs $390$ two-qubit gates per merged step with native
$R_{ZZ}$ and $570$ with CNOT only, against $108$ and $126$ for the parity
network: it is $3.6$--$4.5\times$ \emph{more} expensive here, as
Table~\ref{tab:crossover} records. The advantage begins at $Q=6$ (native
$R_{ZZ}$) and $Q=7$ (CNOT only), that is on $8Q+2=50$ and $58$ qubits, beyond
statevector simulation. There the claim rests on two facts established above.
First, the potential step acts diagonally on the system and returns its ancillas
to $|0\rangle$, so it is fixed by its phases on system basis states; these equal
$-\theta V_{\rm int}$ up to a constant for our circuit on random basis states up
to $Q=8$ (Sec.~\ref{sec:nisqcorrect}) and for the network up to $Q=10$
(Sec.~\ref{sec:competitor}), exhaustively at $Q=2$. Second, the kinetic layer is
the same circuit in every compilation. Random sampling is evidence, not proof,
of equality on all basis states; exhaustive checks are possible only at $Q=2$.

\section{Supersymmetric mini-BMN with fermions}
\label{sec:bmn}
Both constructions were built for a quartic commutator potential. The obvious
objection is that they work because that potential happens to be a pure
commutator square. The supersymmetric mini-BMN model is the direct test: it adds
a cubic Myers term, a mass term and adjoint fermions, and the compilation unit
survives all three.

The model~\cite{asplund2016}, in the conventions of
Ref.~\cite{hanhartnoll2020}, has three bosonic matrices and one two-component
adjoint fermion,
\begin{equation}
  H=\mathrm{tr}\Big(\tfrac12\Pi^i\Pi^i-\tfrac14[X^i,X^j]^2+\tfrac12\nu^2X^iX^i
  +i\nu\epsilon_{ijk}X^iX^jX^k\Big)
  +\mathrm{tr}\Big(\lambda^\dagger\sigma^k[X^k,\lambda]+\tfrac32\nu\lambda^\dagger\lambda\Big)
  -\tfrac32\nu(N^2-1),
  \label{eq:bmn}
\end{equation}
which we take at $N=2$ with $T_A=\sigma_A/\sqrt2$: nine bosons $X^i_A$ and six
fermion modes, mapped by Jordan--Wigner. (This is not the two-matrix ``minimal
BMN'' model of Ref.~\cite{rinaldi2022}; it has the cubic Myers term and four
supercharges.) Han and Hartnoll note that its potential is a total
square~\cite{hanhartnoll2020}: $V=\tfrac12\sum_{i,A}(E^i_A)^2$ with
$E^i_A=\nu X^i_A-\tfrac12\epsilon_{ijk}f_{ABC}X^j_BX^k_C$. On the
grid~\eqref{eq:grid} this becomes
\begin{equation}
  V=\frac{\delta^4}{16}\sum_{i,A}\big(K^A_{jk}-\kappa\,m^i_A\big)^2,\qquad
  K^A_{jk}=m^j_Bm^k_C-m^j_Cm^k_B,\qquad \kappa=\frac{\sqrt2\,\nu}{\delta},
  \label{eq:bmnsos}
\end{equation}
with $(i,j,k)$ and $(A,B,C)$ cyclic. \textbf{The cubic Myers term is the cross
term and the mass term is the square of the linear part: both live inside the
same nine squares as the quartic term, so the compilation unit is unchanged and
only the bilinear becomes affine.} On the NISQ side an affine term is one extra
fold-wire parity per gadget, visited after the product blocks at one CNOT per
move, with S1--S5 unchanged; fault-tolerantly it is one signed shifted addition
of $|m_p|$ into the entry register before squaring. A dyadic $\kappa=K_\nu/2^p$
keeps $V_{\rm int}$ an exact integer, which a choice of grid spacing secures;
where it cannot be chosen, a split variant --- our gadget on $\sum_eK_e^2$ and
the parity network on the $\kappa$-dependent remainder --- is valid for any real
$\kappa$ and is the cheaper of the two at larger $Q$ (Table~\ref{tab:kappa}).

The fermion layer is identical in every compilation, so it cannot change a
ranking; it is built, verified against the exact exponential for all nine
Yukawa groups ($\le4.4\times10^{-16}$) and costed in both eras in
Appendix~\ref{app:bmn}, together with the potential-term counts. All identities
above are verified: the matrix, component and grid forms of $V$ agree at $200$
random configurations to $1.8\times10^{-12}$; $V=0$ at the fuzzy sphere
$X^i=\nu J^i$; and the one-body fermion spectrum there is
$\nu\{\tfrac52\ (\times4),-\tfrac12\ (\times2)\}$, the $N=2$ case of the
spectrum of Ref.~\cite{hanhartnoll2020}, which fixes the relative sign of the
Yukawa and mass terms.

The result is in Table~\ref{tab:bmnstep} and Fig.~\ref{fig:scaling}(b), at
$\kappa=1$. \textbf{The NISQ crossovers are exactly those of the bosonic
$\mathrm{SU}(2)$, $d=2$ model}, $Q=6$ with native $R_{ZZ}$ and $Q=7$ with
CNOT only, and the full step reaches $1.36\times$ and $1.15\times$ at $Q=7$.
Fault-tolerantly the full step reaches $2.60\times$ against unbatched and
$4.78\times$ against width-matched Hamming-weight phasing at $Q=6$, and
$5.90\times$ in surface-code spacetime; here the crossover against the
width-matched incumbent is again $Q=3$, as in the bosonic model, while against
the unbatched one it moves by one step, to $Q=4$
(Table~\ref{tab:bmnft}). The supersymmetric deformation therefore costs at most one step
in $Q$, and nothing at all on the NISQ side, which is the generality claim this
section is for. $\mathrm{SU}(2)$ is also the least favourable $N$
(Sec.~\ref{sec:scaling}), so these are the weakest margins the method produces.

\begin{table}[htbp]\centering\footnotesize
\setlength{\tabcolsep}{3.5pt}
\caption{Supersymmetric mini-BMN, $\mathrm{SU}(2)$, $\kappa=1$ (Sec.~\ref{sec:bmn}): the full Trotter step in both eras. NISQ (two-qubit gates): $m_{\min}$ is the floor of every parity network on the full bosonic potential (quartic, Myers and mass terms); ``ours'' is the better of the affine gadget and the split circuit in each gate set; ``pot.'' is $m_{\min}$/ours for the potential term alone --- the quantity Table~\ref{tab:crossover} reports --- and ``step'' adds the identical kinetic and fermion layers to both sides. FT (T gates): our affine sum-of-squares oracle against precision-matched Hamming-weight phasing, unbatched (HWP) and width-matched (HWP$_W$), again with the identical common layers; ST is the surface-code spacetime ratio against HWP$_W$ ($10^3$ steps, profile \texttt{ns\_e4}). Every crossover coincides with that of the bosonic $\mathrm{SU}(2)$ model. Potential-term counts, the fermion layer and the $\kappa$ dependence are in Tables~\ref{tab:bmnnisq}--\ref{tab:kappa}.}
\label{tab:bmnstep}
\begin{tabular}{rrrrrrrrrrrrr}
\hline
$Q$ & $m_{\min}$ & ours & ours & pot. & pot. & step & step & FT & FT & FT step & FT step & ST\\ & (NISQ) & RZZ & CNOT & RZZ & CNOT & RZZ & CNOT & ours & HWP$_W$ & HWP & HWP$_W$ & HWP$_W$\\
\hline
2 & 219 & 1190 & 1676 & 0.18 & 0.13 & 0.33 & 0.29 & 4716 & 2388 & 0.61 & 0.65 & 0.65\\
3 & 1080 & 2931 & 3822 & 0.37 & 0.28 & 0.44 & 0.37 & 8640 & 11288 & 0.97 & 1.21 & 1.21\\
4 & 3390 & 5710 & 7222 & 0.59 & 0.47 & 0.62 & 0.52 & 13680 & 33628 & 1.40 & 2.02 & 2.02\\
5 & 8265 & 9943 & 12112 & 0.83 & 0.68 & 0.84 & 0.71 & 19908 & 81184 & 1.95 & 3.15 & 3.78\\
6 & 17145 & 15630 & 18744 & 1.10 & 0.91 & 1.09 & 0.92 & 27252 & 175100 & 2.60 & 4.78 & 5.90\\
7 & 31794 & 23166 & 27448 & 1.37 & 1.16 & 1.36 & 1.15 & -- & -- & -- & -- & --\\
\hline
\end{tabular}\end{table}

One limitation is specific to this model and is not hidden. For $N\ge3$ the
fermion layer is not built. With Jordan--Wigner the adjoint fermions give strings
of length $O(N^2)$ and $O(N^3)$ Yukawa terms, hence $O(N^5Q)$ CNOTs, as noted in
Ref.~\cite{halimeh2025ferm}, which would outgrow our $O(N^3Q^3)$ potential at
large $N$. That is a statement about the fermion encoding, not about the
potential compilation this paper is concerned with, and fermionic swap
networks~\cite{kivlichan2018} are the obvious remedy; they are not evaluated
here.

\section{Discussion}
\label{sec:limits}

The binding constraint on quantum simulation of matrix models is circuit cost:
coordinate-basis compilation costs $O(d^2N^4Q^4)$ rotations per Trotter
step~\cite{halimeh2026,hanada2025}, and already at $\gtrsim10^3$ two-qubit gates
the first hardware simulation of a single-matrix $\mathrm{SU}(2)$ model was
noise-dominated~\cite{hartnett2026}. We asked how far that can be pushed down,
in each era, without changing the Hamiltonian. The answer has three parts: the
standard compilation is almost optimal in its own gate class; an exact
factorisation of the potential gives a cheap construction outside it, once per
era; and it is the factorisation, not the arithmetic, that pays.

The bound came first, because it decides what an improvement can mean here and
because a resource claim is only as strong as the incumbent it is measured
against. For a generic Trotter step the Walsh support $m_{\min}$ bounds the
rotation count of every $\{\mathrm{CNOT},X,R_z\}$ circuit and the two-qubit
count of every
$\{\mathrm{CNOT},\mathrm{CZ},\mathrm{CPhase},R_{ZZ},\mathrm{SWAP},X,R_z\}$
circuit, ancillas included (Lemmas~\ref{lem:anc} and~\ref{lem:twoq}); and
because an even-degree potential on the centred grid produces only even-weight
characters, the weight-one escape route is closed and the floor is exactly
$m_{\min}$ (Proposition~\ref{prop:even}). We compute that support exactly, give it in
closed form for every $N$ (Eqs.~\eqref{eq:mminsu2} and~\eqref{eq:mminN}), and
bound it below by $Q^4$ per distinct quartic vertex (Theorem~\ref{thm:lb}); and
since a floor is not a circuit, we built a structure-aware parity network
attaining $1.009$--$1.096\,m_{\min}$ for $Q\ge4$.
Two things follow. No ordering heuristic, ZX rewrite, synthesis strategy or
ancilla allocation can improve the published compilation, so improvement
requires a gate outside the class (Corollary~\ref{cor:wall}). And the comparison
is fixed on terms that cannot be gamed: our NISQ circuit is measured against a
quantity no parity network can go below, in the gate set in which it is emitted,
and since the floor covers both gate sets alike, the like-for-like rule is
enforced rather than promised.

What leaves the gate class cheaply is an algebraic statement before it is a
circuit. Hermiticity alone gives $V=\tfrac{g^2}{2}\sum_{I<J}\|[X_I,X_J]\|_F^2$
(Eq.~\eqref{eq:sos}); read as a compilation instruction, it says that the object
to evaluate is not the $\Theta(N^4)$ quartic monomials of the expanded potential
but the $\Theta(N^2)$ commutator entries, each a bilinear with $\Theta(N)$ terms
squared exactly once. The object count therefore grows one power of
$N$ more slowly than the number of Walsh characters a phase polynomial has to
phase, and Eq.~\eqref{eq:growth} makes this exact: the ratio of quartic
monomials to product terms is $(12N^2-28N+17)/[4(2N-1)]\sim3N/2$. It also
flattens the dependence on the grid resolution, but by a different mechanism in
each era, which is why the two constructions are different circuits and not one
circuit priced with two cost models. On
NISQ hardware the squared entry is added into a Fourier register and squared
there as a quadratic form, so the $\sim Q^4$ characters of a squared bilinear
become $\sim Q^2$ register rotations with no arithmetic at all; fault-tolerantly
each product is a single $Q\times Q$ multiplication and each entry one squaring
and one injection into a catalytic phase-gradient register. Every margin
reported here reduces to the scalings that follow, $\Theta(d^2N^3Q^3)$ two-qubit
gates and $\Theta(d^2N^3Q^2)$ T gates against a floor of $\Theta(d^2N^4Q^4)$
(Eq.~\eqref{eq:leading}).

Which primitive to leave the class with is decided by hardware, and the two eras
decide it oppositely. A logical AND costs four T gates fault-tolerantly but
about six CNOTs on NISQ hardware, whereas a dyadic rotation costs one native
$R_{ZZ}$ on NISQ hardware but up to $4(L-1)=68$--$140$ T gates at the
phase-gradient register widths $L$ used here: arithmetic is cheap exactly where
rotations are expensive, and expensive where they are cheap. This is not a plausibility
argument: the Toffoli-arithmetic route was built and charged with NISQ costs,
where it needs $15$--$6.4\times$ the two-qubit gates of the Fourier route
(Appendix~\ref{app:negative}). The two exits are not interchangeable, which is
why this paper carries two constructions.

The measured outcome is in Table~\ref{tab:crossover}, with four qualifications.
The NISQ advantage begins at $4.7\times10^3$ two-qubit gates per step with a
native $R_{ZZ}$ and $8.3\times10^3$ with CNOT only, beyond what present hardware
can run; below the crossover the parity network is cheaper. The counts are exact
on all-to-all hardware, where the depth crosses a step earlier than the count,
but on heavy-hex and square lattices every crossover moves one step later in $Q$
(Sec.~\ref{sec:routing}). The fault-tolerant incumbent is built and verified,
but its batching is one simple choice and the dyadic ladder may admit others, so
we do not claim it optimal. And the oracle carries $52$ to $475$ ancillas
against $12$ to $200$ system qubits, which is why every width claim is made
against a width-matched incumbent and checked in surface-code
spacetime~\cite{beverland2022}.

Three controls keep the central claim honest. The same arithmetic applied to the
quartic monomials instead of the squares costs $1.87$--$20.7\times$ more and
loses to the incumbent at $15$ of $17$ points, while the kinetic term, already
optimal as a phase polynomial, is $5.4$--$9.5\times$ \emph{more} expensive
compiled arithmetically; the mini-BMN potential is again a single sum of squares
with the Myers and mass terms inside it~\cite{asplund2016,hanhartnoll2020}, so
supersymmetry costs at most one step in $Q$; and complete Trotter evolutions on
a statevector simulator put the compilation error twelve orders of magnitude
below the Trotter error. Since
every margin grows with $N$ and with $Q$, and the holographic regime demands
large $N$ and with it a large grid per boson~\cite{hanada2025,hartnett2026}, the
compilation is cheapest exactly where the physics is, without removing the
$O(N^2)$ register growth, which remains the binding constraint.

\section{Conclusions and Outlook}
\label{sec:conclusions}

We have shown that coordinate-basis compilation of the $\mathrm{SU}(N)$
matrix-model potential is already essentially optimal within the gate class it
uses, and that leaving that class buys a large and growing saving in both
hardware eras. The optimality statement is a proved two-qubit lower bound, the
exactly computed Walsh support that sets it in closed form at every $N$, and a
parity network built to within $1$--$10\%$ of that floor. Both constructions that leave the
class rest on the same exact factorisation of the potential into a sum of
squared commutator entries, which replaces $\Theta(N^4)$ quartic monomials by
$\Theta(N^2)$ objects: a Fourier-register squaring giving $\Theta(d^2N^3Q^3)$
two-qubit gates, and reversible arithmetic with one phase-gradient injection per
entry~\cite{gidney2018} giving $\Theta(d^2N^3Q^2)$ T gates with no
arbitrary-angle rotation and hence no synthesis error; both are built, verified,
and equal to closed-form resource formulas at every point. Against the strongest
incumbents we could construct they use fewer two-qubit gates than \emph{every}
parity network from six qubits per boson at $\mathrm{SU}(2)$ and two at
$\mathrm{SU}(5)$, by up to $3.8\times$, and up to $11.9\times$ fewer T gates
and, at equal width, $25.5\times$ less surface-code
spacetime~\cite{beverland2022}, with both margins growing linearly in $N$ and,
in the grid resolution, like $Q$ and $Q^2$. Two controls identify the
factorisation rather than the arithmetic as the source of the gain, the
construction survives the supersymmetric deformation to mini-BMN with fermions,
and complete Trotter evolutions reproduce the exact product formula to
$2\times10^{-12}$.

Several things are deliberately left open. The circuits built here cover the
bosonic potential with two matrices and the supersymmetric model only at
$\mathrm{SU}(2)$: the fermion layer at larger $N$ and the $d=9$,
sixteen-supercharge BFSS and BMN models remain to be constructed, and a
Jordan--Wigner adjoint Yukawa layer would eventually outgrow the
potential~\cite{halimeh2025ferm}. The routed and surface-code figures rest on
one heuristic router and one estimator model~\cite{beverland2022}, and the
end-to-end evolutions are noiseless and at the smallest truncation. None of this
bears on the ranking, which is carried by the potential term and by layers
identical in each compilation.

The two constructions are not specific to this potential: what they use is that
the interaction is a sum of squares of low-degree forms in the grid coordinates,
which holds for any commutator-square potential and survives supersymmetric
deformation. Orbifold-lattice Yang--Mills, which uses the same encoding and a
potential again built from commutators~\cite{halimeh2026,bergner2026}, is the
obvious next target, as are the $d=9$ BFSS and BMN models once the fermion layer
is put on a swap network~\cite{kivlichan2018}. The two routes point at different
machines: the Fourier construction wants a native $ZZ$ interaction and
all-to-all connectivity, as trapped ions provide, while the arithmetic route
wants cheap Toffolis and a catalytic register, as early fault tolerance will.
Between them they lower the cost of a matrix-model Trotter step by a
factor that grows with the size of the matrix and with the resolution of the
grid: the two parameters that must both be increased to reach the black-hole
regime, and the two in which every crossover reported here is a floor rather
than a ceiling.

\section*{Acknowledgement}
The authors would like to thank Prof. Indrakshi Raychowdhury (BITS Pilani, Goa) for endorsing the submission.

\section*{Author Contributions}
A.D. conceptualized the study, designed the quantum circuits, the NISQ Fourier-register and fault-tolerant sum-of-squares constructions, developed the mini-BMN extension, and performed the complete Trotter-evolution and circuit-level validation. N.S. constructed the phase-polynomial lower bound, derived the exact Walsh supports, carried out the ablation studies and resource computations, and provided mathematical support.

\section*{Data and code availability}
The data that support the findings of this study are openly available in the
software package accompanying this work~\cite{codepackage}, archived at
\url{https://doi.org/10.5281/zenodo.22998235}. The package contains every circuit
construction reported here, the verification suite, the stored gate counts and logs,
and the scripts that regenerate each table and figure in this paper directly from
the built circuits. It is released under the MIT licence, so the community is free
to use, modify and build upon it with attribution. The repository is maintained on GitHub at \url{https://github.com/ayanendudutta/matrix-model-sos-compilation} for ongoing maintenance and contributions.


\appendix

\section{Lower-bound proofs and exact supports}
\label{app:proofs}

\begin{proof}[Proof of Lemma~\ref{lem:anc}]
CNOT and $X$ generate the affine group, so each wire holds
$\ell_q(x,0)\oplus\varepsilon_q$ with $\ell_q$ linear; ancilla inputs are $0$,
so $\ell_q$ depends on the data bits only and an $R_z$ contributes one character
of $x$. The realised phase $f$ satisfies $f\equiv\theta P+c\pmod{2\pi}$, so
$\hat f(y)-\theta\hat P(y)\in2\pi\,2^{-n}\mathbb{Z}$ for $y\ne0$; for generic
$\theta$ this forces $\hat f(y)\ne0$ on the support. Attainability: one
rotation per character in any parity network.
\end{proof}

\begin{proof}[Proof of Lemma~\ref{lem:twoq}]
Every gate of $\mathcal G$ maps basis states to basis states up to phase, so
each wire holds an affine form as above. Let $\mathcal P$ be the set of linear
forms ever held by some wire. Initially $\mathcal P=\{e_1,\dots,e_n,0\}$. A CNOT
changes the form on one wire, adding at most one element to $\mathcal P$; $X$
and SWAP add none; diagonal gates add none. An $R_z$ realises a character in
$\mathcal P$; a CZ, CPhase or $R_{ZZ}$ on wires $a,b$ realises characters
$\ell_a$, $\ell_b$ (both in $\mathcal P$) and $\ell_a\oplus\ell_b$ (at most one
new character). Hence the realised non-constant characters lie in a set of size
at most $m_1+\#\mathrm{CNOT}+\#\{\mathrm{CZ},\mathrm{CPhase},R_{ZZ}\}$, and by
the genericity argument of Lemma~\ref{lem:anc} this set contains the support.
\end{proof}

\begin{proof}[Proof of Proposition~\ref{prop:even}]
By~\eqref{eq:mZ}, $m_a^{\,n}=(-\sum_j2^{\,j}Z_{a,j})^n$ expands into products of
$n$ single-qubit $Z$'s, and $Z^2=I$, so every character it produces has weight
$\equiv n\pmod 2$. A monomial $\prod_am_a^{\,n_a}$ therefore produces only
characters of weight $\equiv\sum_an_a\pmod 2$, i.e.\ the parity of its total
degree; cancellations between monomials only remove characters. $V$ is quartic
and the kinetic term quadratic, so all their characters have even weight.
\end{proof}

\begin{proof}[Proof of Theorem~\ref{thm:lb}]
By~\eqref{eq:mZ} the monomial expands as
$(-1)^\kappa c\sum2^{\,j_1+\dots+j_\kappa}Z_{a_1j_1}\cdots Z_{a_\kappa j_\kappa}$;
distinct bosons give distinct qubits, no $Z^2=I$ reduction, and a character
names the bosons it touches, so different multisets cannot cancel.
\end{proof}

\begin{table}[htbp]\centering\small
\caption{Exact Walsh support $m_{\min}$ of $V$ (adjoint coordinates) against the lower bound of Theorem~\ref{thm:lb}. Every entry is computed, not estimated. No support contains a weight-one character, so by Lemma~\ref{lem:twoq} $m_{\min}$ is also a lower bound on the two-qubit gate count of every $\{\mathrm{CNOT},\mathrm{CZ},R_{ZZ},X,R_z\}$ circuit.}
\label{tab:support}
\begin{tabular}{llrrrr}
\hline
model & $Q$ & quadruples & bound $Q^4\,\#$ & $m_{\min}$ & ratio\\
\hline
SU(2) $d=2$ & 2 & 3 & 48 & 60 & 1.25\\
SU(2) $d=2$ & 4 & 3 & 768 & 1020 & 1.33\\
SU(2) $d=2$ & 6 & 3 & 3888 & 5328 & 1.37\\
SU(2) $d=3$ & 2 & 9 & 144 & 171 & 1.19\\
SU(2) $d=3$ & 4 & 9 & 2304 & 3006 & 1.30\\
SU(2) $d=3$ & 6 & 9 & 11664 & 15849 & 1.36\\
SU(3) $d=2$ & 2 & 85 & 1360 & 1458 & 1.07\\
SU(3) $d=2$ & 4 & 85 & 21760 & 24424 & 1.12\\
SU(3) $d=2$ & 6 & 85 & 110160 & 126012 & 1.14\\
SU(3) $d=3$ & 2 & 255 & 4080 & 4350 & 1.07\\
SU(3) $d=3$ & 4 & 255 & 65280 & 73128 & 1.12\\
SU(3) $d=3$ & 6 & 255 & 330480 & 377628 & 1.14\\
\hline
\end{tabular}\end{table}

\begin{figure}[htbp]\centering
\figbaselinewfour
\caption{The incumbent's elementary unit as drawn in Ref.~\cite{halimeh2026}:
one weight-four character $Z_{a,j}Z_{b,k}Z_{c,l}Z_{d,m}$ realised as a CNOT
ladder, one $R_z(\theta_{jklm})$, and the unfolding ladder. Theorem~\ref{thm:lb}
says a quartic vertex in four distinct bosons needs $Q^4$ characters;
Fig.~\ref{fig:competitor} shows how a good parity network visits them at about
one CNOT each.}
\label{fig:baselinew4}
\end{figure}

\begin{figure}[htbp]\centering
\figcompetitor
\caption{The first steps of the structure-aware parity network
($\mathrm{SU}(2)$, $d=2$, $Q=2$), transcribed from the built circuit: the
prefix-difference basis change on the bosons touched, then the tour on the
target $T$ --- one CNOT and one $R_z$ per character. Wire labels $n_{a,j}$:
boson $a$, bit $j$.}
\label{fig:competitor}
\end{figure}

\section{Circuit constructions, bookkeeping and primitive costs}
\label{app:circuits}

\subsection{The remaining NISQ simplifications}
S1 and S5, which move the crossover, are in Sec.~\ref{sec:simpl}; the other
three are exact rewrites of the same gadget and are recorded here.
\begin{description}
\item[S2, cancelling system phases.] The factors $e^{i\alpha_lP_t/2}$ of step 2
are functions of the system bits only. Every gate of the gadget either leaves
the system bits unchanged or is diagonal, so these factors commute with the
whole gadget and cancel between $\mathcal A_+$ and $\mathcal A_-$; they are
dropped.
\item[S3, fold tour.] System registers are in the prefix-difference basis of
Sec.~\ref{sec:competitor} (once per potential step); terms are visited in a
boustrophedon over $(j,k)$ in each $(p,q)$ block, so each move costs one CNOT
into $F$. The subtraction $\mathcal A_-$ retraces the tour, so $F$ is never
cleared between the passes.
\item[S4, halving.] Every $m_pm_q$ is odd, so $v\equiv\sum c_{pq}\pmod2$. When
$\sum c_{pq}$ is even the register stores $v/2$ ($r=1$): one qubit fewer and
smaller QFTs and square. For $\mathrm{SU}(2)$, $K^c$ is a difference of two odd
products and is always even, which is why $w=2Q+1$ in
Eq.~\eqref{eq:nisqcount}.
\end{description}

\subsection{The remaining fault-tolerant bookkeeping}
F1 is in Sec.~\ref{sec:ftconstr}.
\begin{description}
\item[F2, signed accumulation by carry-in.] $C\mathrel{+}=(-1)^sP$ is done as
$C\mathrel{+}=(P\oplus s\cdots s)+s$: CNOTs from the sign bit, then one Cuccaro
addition whose borrowed carry qubit \emph{is} the sign bit. The AND count stays
$n-1$, and a per-product conditional negation disappears.
\item[F3, merged products and per-entry widths.] Products sharing a boson pair
are merged (so $|c_{pq}|=2$ occurs, on the $\mathrm{Im}_{ii}$ entries, and is
added once into $C[1{:}]$), and each entry register has its own width
$w_e=\lceil\log_2(\sum|c_{pq}|m_{\max}^2+1)\rceil+1$.
\end{description}

\subsection{The kinetic-layer rewrites}
\label{app:kinopt}
In application order the direct layer is, per boson,
$\Phi\,S\,\mathcal L^\dagger\,\Phi\,P\,\Phi^\dagger\,\mathcal L\,S\,\Phi^\dagger$,
with $\Phi$ the centring ramp, $S$ the bit-reversal swaps, $\mathcal L$ the
Hadamard/CPhase ladder and $P=e^{-2\pi iA_{\rm kin}\tilde m^2/2^L}$. None of the
four rewrites changes the simulated Hamiltonian.
\begin{description}
\item[K1.] $\Phi P\Phi^\dagger=P$: all three are diagonal in the momentum basis.
\item[K2.] The outer ramps telescope over the Trotter product: $\Phi^\dagger$ of
one step and $\Phi$ of the next meet across $e^{-i\tau V}$, which is diagonal in
the same (position) basis, so $\Phi^\dagger e^{-i\tau V}\Phi=e^{-i\tau V}$. One
$\Phi$ remains at the start of the evolution and one $\Phi^\dagger$ at the end,
a phase for a position-basis measurement; the per-step cost is zero. For
mini-BMN the fermion layer is block-diagonal in the boson position basis (the
bosons only control phases), so the ramps telescope through it too.
\item[K3.] $SXS$ equals $X$ with the wires relabelled $j\mapsto Q-1-j$, so the
swaps are removed.
\item[K4.] (Fault tolerance.) Every rotation of the layer has the same angle on
every boson. The phase polynomial of all bosons, whose characters $Z_jZ_k$ carry
the weight $2^{\,j+k}$, is one weighted Hamming-weight tree over the parity bits
followed by \emph{one} subtraction of the integer
$y'=\sum_a\sum_{j<k}2^{\,j+k-1}(b_{a,j}\oplus b_{a,k})$ from the gradient
register ($\tilde m^2=(\Lambda-1)^2-8y'$). Each QFT CPhase $(j,k)$ of separation
$m$ is a class of $3n_b$ equal angles, via
$e^{i\theta ab}=e^{i\frac\theta2(a+b+\overline{a\oplus b}-1)}$, which is phased
by one tree and one addition into the top $m+2$ gradient bits whenever that is
cheaper than the gates (for $m\ge2$). Bosons are batched so that the layer's
ancillas stay within our oracle's, which are free while it runs.
\end{description}

\subsection{Primitive costs}
\paragraph{Windowed multiplier.} For $\mathrm{out}\mathrel{+}=a\cdot b$ with
$\mathrm{out}$ initially zero the window $\mathrm{out}[j:j{+}w_a{+}1]$ holds a
value $<2^{\,w_a}$ before row $j$, so no carry escapes and
\begin{equation}
  T_{\rm mul}(w_a,w_b,w_o)=\!\!\sum_{j=0}^{\min(w_b,w_o)-1}\!\!
  \big[4k_j+T_{\rm add}(v_j)\big],
  \label{eq:Tmul}
\end{equation}
with $h_j=\min(j{+}w_a{+}1,w_o)$, $v_j=h_j-j$, $k_j=\min(w_a,v_j)$.

\paragraph{Squarer.} Using $a_i^2=a_i$,
\begin{equation}
  T_{\rm sq}(w,w_o)=\sum_{j=1}^{w-1}\big[4k_j+T_{\rm add}(v_j)\big],
  \label{eq:Tsq}
\end{equation}
with $s_j=j{+}1$, $h_j=\min(s_j{+}j{+}2,w_o)$, $v_j=h_j-s_j$,
$k_j=\min(j,v_j)$.

\subsection{Transcribed circuits}
Every circuit below is transcribed gate by gate from the built op list, not
redrawn by hand.

\begin{figure}[htbp]\centering
\figabsm
\caption{The free absolute value, $Q=3$. The top three wires hold $n$; the
fourth is a scratch qubit set to $\lnot\mathrm{MSB}(n)$ by the leading CNOT--$X$
pair and restored by the trailing one; the bottom three become $|m|$. The $X$ on
$|m|_0$ writes the always-one low bit of the odd integer $|m|$; each remaining
bit is $n_j$ copied up one position and conditionally complemented by the
scratch qubit. Six CNOTs, three $X$, zero AND gates.}
\label{fig:absm}
\end{figure}

\begin{figure}[htbp]\centering
\figadder
\caption{The $n=2$ Cuccaro ripple-carry adder $b\mathrel{+}=a$. The forward MAJ
sweep threads the carry through the high bits of $a$; the backward UMA sweep
writes the sum into $b$ and restores $a$ and the borrowed carry ancilla. Four
Toffolis ($2n$); the Gidney model charges this block $T_{\rm add}(n)=4(n-1)$.}
\label{fig:adder}
\end{figure}

\begin{figure}[htbp]\centering
\figinject
\caption{Phase-gradient injection: one addition of a value register
(zero-extended) into the catalytic register $|\gamma_L\rangle$ imprints
$e^{-2\pi iV/2^L}$ and returns $|\gamma_L\rangle$ unchanged. Here the value has
two bits and $L=3$; in the real circuits $L=18$--$36$, and the value is
$\mu_eC_e^2$.}
\label{fig:inject}
\end{figure}

\begin{figure}[htbp]\centering
\figweightthree
\caption{Two consecutive terms of the Fourier addition, from the built
$\mathrm{SU}(2)$ gadget ($Q=2$, component $c=1$, register holds $K/2$, $w=5$):
the terms $(j,k)=(0,1)$ and $(1,1)$ of one block, with the one-CNOT fold move
between them (control $q_{1,1}$, boson $1$ in the prefix-difference basis; boson
index $1$ counted from $0$). Each $R_{ZZ}$ box acts on $F$ and on the register
wire at its lower edge; wires strictly inside a box pass through it. Draper
triangularity (S1): four and then three $R_{ZZ}$ levels, the next level a pair of
single-qubit $Z$ gates, the rest the identity. Without S1 every term would carry
$w$ levels plus a weight-two rotation.}
\label{fig:weight3}
\end{figure}

\begin{figure}[htbp]\centering
\figframes
\caption{The same two terms in the CNOT-only emission (S5). The active register
qubits hold $y_l\oplus P_t$, so each level is a single-qubit $R_z$. Between the
terms: one CNOT from $F$ unframes the level that leaves ($y_1$), the fold move is
one CNOT $q_{1,1}\to F$, and one CNOT from $q_{1,1}$ retargets each of the three
staying levels; then three $R_z$ and the $\pi$-level pair of $Z$ gates. Five
CNOTs here, against one CNOT and seven $R_{ZZ}$ (fifteen CNOTs naively) in
Fig.~\ref{fig:weight3}.}
\label{fig:frames}
\end{figure}

\begin{figure}[htbp]\centering
\figcentredf
\caption{The centred-grid transform $\Phi\cdot\mathrm{QFT}^\dagger\cdot\Phi$ for
$Q=3$, $\phi_j=2\pi c\,2^{\,j}/\Lambda$. The phase layers are single-qubit gates,
yet omitting them corrupts $\hat p^2$ at $O(1)$.}
\label{fig:centredf}
\end{figure}

\section{Secondary baselines, routed counts and supporting tables}
\label{app:secondary}
The tables collected here are not used for any headline claim. The NISQ
baselines of Table~\ref{tab:nisqrobust} all lie above the floor $m_{\min}$ that
Table~\ref{tab:nisqsu2} compares against; the two extra fault-tolerant baselines
of Table~\ref{tab:ftrobust} are weaker than the precision-matched
Hamming-weight-phasing incumbent of Sec.~\ref{sec:incumbents}, the first because
it gives every character its own rotation and the second because it charges every
rotation the full register width instead of its own angle.
Table~\ref{tab:routing} gives the routed counts summarised in
Sec.~\ref{sec:routing}, and Tables~\ref{tab:scalepoly} and~\ref{tab:scaleratio}
the counting polynomials and the measured-against-leading-order comparison of
Sec.~\ref{sec:scaling}.

\begin{table}[htbp]\centering\small
\setlength{\tabcolsep}{4pt}
\caption{NISQ baselines weaker than the floor, $\mathrm{SU}(2)$, $d=2$, potential term per Trotter step (two-qubit gates); the headline comparisons of Table~\ref{tab:nisqsu2} do not use these. ``Published'': Ref.~\cite{halimeh2026} compiles with CNOT ladders and $R_z$ and counts CNOTs, $(2+4/Q)\,m_{\min}$; the same grouping with a native $R_{ZZ}$ in place of the last CNOT pair gives $(1+4/Q)\,m_{\min}$ (our evaluation). GraySynth: Qiskit's \texttt{synth\_cnot\_phase\_aam}. ``Unsimpl.'': our gadget without S1--S5, i.e.\ all $w$ level rotations per term. ``CNOT naive'': every $R_{ZZ}$/CPhase replaced by two CNOTs, against the register-frame emission (S5) actually used.}
\label{tab:nisqrobust}
\begin{tabular}{rrrrrrrr}
\hline
$Q$ & $m_{\min}$ & published & published & GraySynth & unsimpl. & ours & ours\\ & (floor) & CNOT & RZZ & & RZZ & CNOT naive & CNOT frames\\
\hline
2 & 60 & 240 & 180 & 114 & 567 & 654 & 534\\
3 & 315 & 1050 & 735 & 635 & 1440 & 1638 & 1206\\
4 & 1020 & 3060 & 2040 & 2320 & 2901 & 3222 & 2250\\
5 & 2535 & 7098 & 4563 & -- & 5094 & 5610 & 3738\\
6 & 5328 & 14208 & 8880 & -- & 8163 & 8862 & 5742\\
7 & 9975 & 25650 & 15675 & -- & 12252 & 13230 & 8334\\
8 & 17160 & 42900 & 25740 & -- & 17505 & 18726 & 11586\\
9 & 27675 & 67650 & 39975 & -- & 24066 & 25650 & 15570\\
10 & 42420 & 101808 & 59388 & -- & 32079 & 33966 & 20358\\
\hline
\end{tabular}\end{table}

\begin{table}[htbp]\centering\footnotesize
\setlength{\tabcolsep}{4pt}
\caption{Fault-tolerant robustness, width and depth (the headline comparisons of Tables~\ref{tab:ftsu2} and~\ref{tab:ftmatrix} do not use the first two columns). $B_{\rm W}^{\rm match}$: one rotation per Walsh character, each priced by its own angle. $B_{\rm HWP}^{\rm flat}$: Hamming-weight phasing with every rotation charged the full $4(L-1)$ instead of $T_d$; both are weaker than the precision-matched incumbent used in the main tables, and both ratios are incumbent/ours. ``Classes'': number of equal-$|$angle$|$ classes in the Walsh expansion (largest class size next). Ancillas: unbatched Hamming-weight phasing (parity copies $+$ adder tree), the batch size that fits our ancilla budget, and our oracle's ancillas. Toffoli depth: ASAP depth of the as-built oracle counting every CCX of the Cuccaro adders, an upper bound on its reaction depth.}
\label{tab:ftrobust}
\begin{tabular}{lrrrrrrrrr}
\hline
model & $Q$ & $B_{\rm W}^{\rm match}$ & $B_{\rm HWP}^{\rm flat}$ & classes & max & HWP & HWP & our & Tof.\ depth\\ & & /ours & /ours & & class & anc. & batch & anc. & ours\\
\hline
SU(2) adj. & 2 & 2.06 & 1.08 & 6 & 18 & 34 & 18 & 52 & 636\\
SU(2) adj. & 3 & 6.83 & 2.03 & 12 & 69 & 135 & 35 & 68 & 1116\\
SU(2) adj. & 4 & 15.70 & 3.01 & 18 & 168 & 333 & 44 & 85 & 1800\\
SU(2) adj. & 5 & 29.86 & 4.02 & 24 & 339 & 673 & 53 & 103 & 2592\\
SU(2) adj. & 6 & 50.43 & 5.19 & 30 & 594 & 1184 & 63 & 121 & 3528\\
SU(2) matrix & 2 & 4.45 & 1.54 & 11 & 53 & 102 & 41 & 80 & 1168\\
SU(2) matrix & 3 & 13.41 & 2.50 & 19 & 202 & 400 & 51 & 100 & 1972\\
SU(2) matrix & 4 & 31.00 & 3.66 & 27 & 506 & 1005 & 63 & 123 & 3120\\
SU(3) matrix & 2 & 12.69 & 1.43 & 16 & 615 & 1224 & 71 & 141 & 3982\\
SU(3) matrix & 3 & 44.64 & 3.19 & 28 & 2178 & 4353 & 93 & 182 & 6870\\
SU(3) matrix & 4 & 101.74 & 5.69 & 40 & 5241 & 10475 & 113 & 223 & 10915\\
SU(4) matrix & 2 & 26.75 & 1.79 & 17 & 2910 & 5812 & 105 & 207 & 9022\\
SU(4) matrix & 3 & 80.80 & 4.26 & 31 & 10092 & 20176 & 143 & 283 & 17100\\
SU(4) matrix & 4 & 184.70 & 8.42 & 45 & 24114 & 48220 & 175 & 347 & 26742\\
SU(5) matrix & 2 & 36.70 & 2.09 & 17 & 8930 & 17854 & 150 & 296 & 19853\\
SU(5) matrix & 3 & 122.50 & 5.95 & 29 & 30700 & 61389 & 194 & 385 & 32342\\
SU(5) matrix & 4 & 300.61 & 12.07 & 41 & 73150 & 146289 & 239 & 475 & 51404\\
\hline
\end{tabular}\end{table}

\begin{table}[htbp]\centering\footnotesize
\setlength{\tabcolsep}{3pt}
\caption{Routed two-qubit counts, potential term per Trotter step, with the same router for every circuit (Qiskit SabreLayout$+$SabreSwap, three seeds, best kept; a SWAP is three two-qubit gates). Heavy-hex: distance~7 (115 qubits); square: $10\times12$ lattice. ``Ours'': native-$R_{ZZ}$ emission (RZZ) and CNOT-only emission with register frames (CNOT); ``net'': the built structure-aware parity network (a CNOT circuit). Ratios net/ours $>1$: ours cheaper after routing. On an all-to-all device (trapped ions) no routing is needed and the counts are the logical ones of Tables~\ref{tab:nisqsu2} and~\ref{tab:nisqsun}.}
\label{tab:routing}
\begin{tabular}{lrrrrrrrrrrr}
\hline
& & \multicolumn{5}{c}{heavy-hex (115 q)} & \multicolumn{5}{c}{square $10\times12$}\\ model & $Q$ & RZZ & CNOT & net & net/ & net/ & RZZ & CNOT & net & net/ & net/\\ & & & & & RZZ & CNOT & & & & RZZ & CNOT\\
\hline
SU(2) & 2 & 1598 & 2091 & 330 & 0.21 & 0.16 & 1066 & 1436 & 199 & 0.19 & 0.14\\
SU(2) & 3 & 4485 & 5750 & 1462 & 0.33 & 0.25 & 2986 & 3658 & 921 & 0.31 & 0.25\\
SU(2) & 4 & 9173 & 11762 & 4744 & 0.52 & 0.40 & 6110 & 7704 & 3158 & 0.52 & 0.41\\
SU(2) & 5 & 17215 & 20891 & 12037 & 0.70 & 0.58 & 11204 & 13625 & 7767 & 0.69 & 0.57\\
SU(2) & 6 & 28409 & 33597 & 27159 & 0.96 & 0.81 & 18172 & 21768 & 17503 & 0.96 & 0.80\\
SU(2) & 7 & 43702 & 50789 & 54172 & 1.24 & 1.07 & 28440 & 33280 & 33545 & 1.18 & 1.01\\
SU(2) & 8 & 62364 & 72583 & 95742 & 1.54 & 1.32 & 40380 & 47430 & 60421 & 1.50 & 1.27\\
SU(2) & 9 & 88710 & 99411 & 162395 & 1.83 & 1.63 & 57504 & 65853 & 101806 & 1.77 & 1.55\\
SU(2) & 10 & 119673 & 133069 & 255211 & 2.13 & 1.92 & 77017 & 87271 & 160852 & 2.09 & 1.84\\
SU(3) m. & 2 & 25388 & 32053 & 12257 & 0.48 & 0.38 & 16282 & 20765 & 8213 & 0.50 & 0.40\\
SU(3) m. & 3 & 65379 & 77569 & 55102 & 0.84 & 0.71 & 40822 & 49170 & 36450 & 0.89 & 0.74\\
SU(3) m. & 4 & 123730 & 146967 & 175815 & 1.42 & 1.20 & 78741 & 94511 & 114255 & 1.45 & 1.21\\
SU(4) m. & 2 & 78009 & 94883 & 69336 & 0.89 & 0.73 & 49626 & 60971 & 46153 & 0.93 & 0.76\\
SU(4) m. & 3 & 222576 & 260150 & 276745 & 1.24 & 1.06 & 138475 & 163267 & 182782 & 1.32 & 1.12\\
SU(5) m. & 2 & 208177 & 247827 & 247476 & 1.19 & 1.00 & 128745 & 157485 & 161231 & 1.25 & 1.02\\
\hline
\end{tabular}\end{table}

\begin{table}[htbp]\centering\small
\caption{Exact counts per matrix pair, matrix-entry coordinates (Sec.~\ref{sec:scaling}): products summed into the squared entries, $\sum_en_e=2N(N-1)(2N-1)$, and distinct quartic monomials by multiplicity type, $n_{1111}=N(N-1)(12N^2-28N+17)/2$, $n_{22}=N(N-1)(4N-3)$, $n_{112}=2N(N-1)$. Every entry is counted from the built bilinears and equals the polynomial.}
\label{tab:scalepoly}
\begin{tabular}{rrrrrr}
\hline
$N$ & $\sum_en_e$ & $n_{1111}$ & $n_{22}$ & $n_{112}$ & $n_{1111}/\sum_en_e$\\
\hline
2 & 12 & 9 & 10 & 4 & 0.75\\
3 & 60 & 123 & 54 & 12 & 2.05\\
4 & 168 & 582 & 156 & 24 & 3.46\\
5 & 360 & 1770 & 340 & 40 & 4.92\\
6 & 660 & 4215 & 630 & 60 & 6.39\\
7 & 1092 & 8589 & 1050 & 84 & 7.87\\
8 & 1680 & 15708 & 1624 & 112 & 9.35\\
9 & 2448 & 26532 & 2376 & 144 & 10.84\\
10 & 3420 & 42165 & 3330 & 180 & 12.33\\
\hline
\end{tabular}\end{table}

\begin{table}[htbp]\centering\small
\caption{Measured ratios against the leading-order predictions of Eq.~\eqref{eq:leading} (matrix-entry route, $d=2$): NISQ $m_{\min}/$ours (native $R_{ZZ}$), FT HWP$/$ours, and ablation direct$/$ours. The predictions keep only the leading term in $Q$ of each count and every factor of $N$ exactly.}
\label{tab:scaleratio}
\begin{tabular}{rrrrrrrr}
\hline
$N$ & $Q$ & NISQ & NISQ & FT & FT & ablation & ablation\\ & & meas. & lead. & meas. & lead. & meas. & lead.\\
\hline
2 & 2 & 0.23 & 0.30 & 1.13 & 0.43 & 2.88 & 4.50\\
2 & 3 & 0.51 & 0.56 & 1.83 & 1.17 & 3.37 & 4.50\\
2 & 4 & 0.86 & 0.86 & 2.82 & 2.29 & 3.72 & 4.50\\
3 & 2 & 0.50 & 0.68 & 1.17 & 1.09 & 6.58 & 12.30\\
3 & 3 & 1.07 & 1.32 & 2.79 & 3.08 & 7.88 & 12.30\\
3 & 4 & 1.82 & 2.05 & 5.20 & 6.10 & 8.71 & 12.30\\
4 & 2 & 0.84 & 1.15 & 1.63 & 1.85 & 11.72 & 20.79\\
4 & 3 & 1.59 & 1.95 & 4.05 & 5.01 & 12.69 & 20.79\\
4 & 4 & 2.71 & 3.08 & 8.16 & 10.08 & 14.18 & 20.79\\
5 & 2 & 1.05 & 1.40 & 2.00 & 2.46 & 15.45 & 29.50\\
5 & 3 & 2.25 & 2.77 & 5.82 & 7.11 & 18.58 & 29.50\\
5 & 4 & 3.82 & 4.37 & 11.91 & 14.30 & 20.72 & 29.50\\
\hline
\end{tabular}\end{table}

\section{The mini-BMN fermion layer and mass coupling}
\label{app:bmn}
This appendix carries the parts of Sec.~\ref{sec:bmn} that cannot change a
ranking: the fermion layer, which is identical in every compilation, the
potential-term counts behind Table~\ref{tab:bmnstep}, and the dependence on the
mass coupling $\kappa$.

\paragraph{Absorbing the Myers and mass terms.} With $\kappa=K_\nu/2^p$ each
square of Eq.~\eqref{eq:bmnsos} is the affine bilinear $v_e=2^pK_e-K_\nu m_e$
and $V=(\delta^4/16)\,4^{-p}V_{\rm int}$ with $V_{\rm int}$ an exact integer.
\emph{NISQ:} a linear term $l\,m_p=-l\sum_j2^{\,j}Z_{p,j}$ contributes
weight-one Walsh terms; in the Fourier addition each is one more fold-wire
parity (a prefix mask of boson $p$), visited after the product blocks with one
CNOT per move. S1--S5 are unchanged; halving applies when
$\sum c_{pq}+\sum l_p$ is even. The \emph{split} variant, valid for any real
$\kappa$, applies the gadget to $\sum_eK_e^2$ (no $\kappa$ in it) and the
structure-aware network to the remainder
$-2\kappa\sum_eK_em_e+\kappa^2\sum_em_e^2$, whose support ($6Q^3$ weight-three
and $9\binom Q2$ weight-two characters) does not depend on $\kappa$.
\emph{Fault tolerance:} each linear term is one signed, shifted addition of
$|m_p|$ per set bit of $|l_p|$ into the entry register before squaring;
Eq.~\eqref{eq:Tsos} gains $\sum_p\sum_{s\in\mathrm{bits}(|l_p|)}
T_{\rm add}(w_e-s)$ per entry and pass. The Walsh support of the full bosonic
potential now has weight-two, -three and -four characters ($m_{\min}=219$,
$1080$, $3390$ at $Q=2,3,4$, $\kappa=1$; the cubic part is exactly the $6Q^3$
characters of the six distinct Myers triples, Theorem~\ref{thm:lb}); odd weights
come only from the Myers term, whose monomials have three distinct bosons, so
there is still no weight-one character and $m_{\min}$ remains the floor of
Lemma~\ref{lem:twoq}. Verification: both NISQ emissions on $150$ random basis
states at $Q=2,3$ (leakage $\le5\times10^{-15}$, phase error
$\le6\times10^{-13}$); the fault-tolerant payload exact on $40$ random basis
states at $Q=2,3$, with the built T-count equal to the closed form at every
point; the split network by its exact integer phase identity.

\paragraph{The fermion layer.} The fermionic part is
\begin{equation}
  H_F=\sum_{k,B}X^k_BO_{kB}+\tfrac32\nu\sum_fn_f+{\rm const},\qquad
  O_{kB}=i\sqrt2\sum_{A,C}\epsilon_{ABC}\,\sigma^k_{\alpha\beta}\,
  \lambda^\dagger_{A\alpha}\lambda_{C\beta},
  \label{eq:HF}
\end{equation}
with $\lambda_f=Z_0\cdots Z_{f-1}(X_f+iY_f)/2$, $f=2A+\alpha$, and $|1\rangle$
occupied. Each $O_{kB}$ is two Hermitian hoppings on disjoint mode pairs, i.e.\
four \emph{commuting} Jordan--Wigner strings $P_s$ with coefficients
$c_s=\pm1/\sqrt2$ (checked), so
\begin{equation}
  e^{-i\theta_Fm\,O_{kB}}=\prod_s\prod_{j}e^{\,i\theta_Fc_s2^jZ_{x,j}P_s}
  \qquad(\theta_F=\tau\delta/2)
\end{equation}
exactly; the only splitting is between different $(k,B)$, which is the product
formula itself and is common to every compilation. Per hopping pair, one CNOT
and one single-qubit Clifford map both strings to $Z$-strings (derived with the
stabiliser formalism and asserted); the gadget's fold wire $F$ (returned to $0$)
carries their parity, and each character is one $R_{ZZ}(F,x_j)$ (native) or
CX--$R_z$--CX; the mass term is six $R_z$. Every group circuit is verified
against $e^{-i\theta m O_{kB}}$ exactly, for all nine groups at $Q=2,3$
($\le4.4\times10^{-16}$). Cost: $36Q+168$ two-qubit gates (native) or $72Q+168$
(CNOT). Fault-tolerantly the Yukawa and mass angles are fixed by the physics and
cannot all be put on the potential's ladder, so each is represented to $L$ bits;
within a group the $4Q$ characters carry angles $\pm\varphi2^j$, so weighted
Hamming-weight phasing~\cite{gidney2018,kivlichan2020} applies: a
column-compression tree ($3Q$ ANDs, verified exhaustively) leaves $Q+2$
rotations per group, and the six mass rotations form one class ($4$ ANDs, $3$
rotations). This gives $1876$, $3232$, $4684$, $6652$ and $8644$ T at
$Q=2,\dots,6$, against $3600$--$24\,480$ with one rotation per character. The
fermion layer is the largest common part of the fault-tolerant step, which is
where further work on this model would pay.

\paragraph{The mass coupling.} Every fractional bit of $\kappa$ widens each
entry register: at $Q=3$ the fault-tolerant affine oracle grows from $8640$ T
($\kappa=1$) to $23\,472$ ($p=7$), while the incumbent barely moves. The split
variant has no such cost: its NISQ count is the same for every $\kappa$, and
fault-tolerantly it costs $11\,580$ T at $Q=3$ (a loss against $8300$) and
$32\,304$ at $Q=5$ (a win against $47\,140$--$50\,976$); at $Q=5$ even the $p=7$
affine oracle wins ($39\,924$). A dyadic $\kappa$ is therefore worth choosing
where the grid spacing permits, and where it does not, the split variant keeps
the win at the larger $Q$ (Table~\ref{tab:kappa}).

\begin{table}[htbp]\centering\footnotesize
\setlength{\tabcolsep}{2.6pt}
\caption{NISQ, mini-BMN, $\mathrm{SU}(2)$, $\kappa=1$ (Sec.~\ref{sec:bmn}): two-qubit gates per Trotter step. $m_{\min}$: Walsh support of the full bosonic potential (quartic, Myers and mass terms), the floor of every parity network; ``network'': the built structure-aware network on it. Ours: the affine gadget (``aff.'') and the split circuit (``split'': gadget on $\sum K^2$ plus the network on the cubic and quadratic remainder), in native $R_{ZZ}$ (RZZ) and CNOT-only emissions. Fermion and kinetic layers (RZZ/CNOT) are identical in every compilation. Ratios: floor divided by the better of our two circuits, potential only and full step.}
\label{tab:bmnnisq}
\begin{tabular}{rrrrrrrrrrrrr}
\hline
$Q$ & $m_{\min}$ & net & aff. & split & aff. & split & ferm. & kin. & pot. & pot. & step & step\\ & & & RZZ & RZZ & CNOT & CNOT & & & RZZ & CNOT & RZZ & CNOT\\
\hline
2 & 219 & 294 & 1485 & 1190 & 2196 & 1676 & 240/312 & 27/54 & 0.18 & 0.13 & 0.33 & 0.29\\
3 & 1080 & 1242 & 3366 & 2931 & 4572 & 3822 & 276/384 & 81/162 & 0.37 & 0.28 & 0.44 & 0.37\\
4 & 3390 & 3632 & 6129 & 5710 & 8046 & 7222 & 312/456 & 162/324 & 0.59 & 0.47 & 0.62 & 0.52\\
5 & 8265 & 8584 & 10350 & 9943 & 13014 & 12112 & 348/528 & 270/540 & 0.83 & 0.68 & 0.84 & 0.71\\
6 & 17145 & 17654 & 15741 & 15630 & 19440 & 18744 & 384/600 & 405/810 & 1.10 & 0.91 & 1.09 & 0.92\\
7 & 31794 & -- & 23166 & 23425 & 27864 & 27448 & 420/672 & 567/1134 & 1.37 & 1.16 & 1.36 & 1.15\\
\hline
\end{tabular}\end{table}

\begin{table}[htbp]\centering\footnotesize
\setlength{\tabcolsep}{2.4pt}
\caption{Fault tolerance, mini-BMN, $\mathrm{SU}(2)$, $\kappa=1$: T gates per Trotter step. Ours: the affine sum-of-squares oracle (built, verified, equal to its closed form), width $W$. Incumbents on the full bosonic potential: one precision-matched rotation per character ($B_{\rm W}$), Hamming-weight phasing unbatched (HWP) and width-matched (HWP$_W$). Kinetic and fermion layers (the latter by weighted Hamming-weight phasing) are identical in every compilation. Ratios incumbent/ours for the potential and the full step; last column: surface-code spacetime of the full step, HWP$_W$/ours, $1000$ steps, profile \texttt{ns\_e4}.}
\label{tab:bmnft}
\begin{tabular}{rrrrrrrrrrrrrr}
\hline
$Q$ & $L$ & $W$ & ours & $B_{\rm W}$ & HWP & HWP$_W$ & kin. & ferm. & pot. & pot. & step & step & ST\\ & & & & & & & & & HWP & HWP$_W$ & HWP & HWP$_W$ & HWP$_W$\\
\hline
2 & 14 & 92 & 4716 & 7944 & 2128 & 2388 & 122 & 1876 & 0.45 & 0.51 & 0.61 & 0.65 & 0.65\\
3 & 19 & 128 & 8640 & 52056 & 8300 & 11288 & 452 & 3232 & 0.96 & 1.31 & 0.97 & 1.21 & 1.21\\
4 & 23 & 163 & 13680 & 190200 & 21388 & 33628 & 1110 & 4684 & 1.56 & 2.46 & 1.40 & 2.02 & 2.02\\
5 & 28 & 199 & 19908 & 562260 & 47140 & 81184 & 1984 & 6652 & 2.37 & 4.08 & 1.95 & 3.15 & 3.78\\
6 & 32 & 234 & 27252 & 1302480 & 89980 & 175100 & 3214 & 8644 & 3.30 & 6.43 & 2.60 & 4.78 & 5.90\\
\hline
\end{tabular}\end{table}

\begin{table}[htbp]\centering\footnotesize
\setlength{\tabcolsep}{3.5pt}
\caption{The mass coupling $\kappa=\sqrt2\nu/\delta=K_\nu/2^p$ (Sec.~\ref{sec:bmn}). The affine circuits carry $p$ fractional bits in every entry register; the split circuits do not depend on $\kappa$ (valid for any real $\kappa$; their fault-tolerant remainder is priced at the full $L$ bits). HWP: the incumbent at the same dyadic $\kappa$ (with valuation savings, which a generic real $\kappa$ would not give it).}
\label{tab:kappa}
\begin{tabular}{rrrrrrrrrrr}
\hline
$Q$ & $\kappa$ & $m_{\min}$ & aff. & split & aff. & split & $L$ & FT & FT & FT\\ & & & RZZ & RZZ & CNOT & CNOT & & aff. & split & HWP\\
\hline
3 & $1/2^{0}$ & 1080 & 3366 & 2931 & 4572 & 3822 & 19 & 8640 & 11580 & 8300\\
3 & $3/2^{0}$ & 1080 & 3366 & 2931 & 4572 & 3822 & 19 & 9072 & 11580 & 8300\\
3 & $3/2^{1}$ & 1080 & 3636 & 2931 & 5076 & 3822 & 21 & 10440 & 11580 & 8532\\
3 & $5/2^{2}$ & 1080 & 3933 & 2931 & 5634 & 3822 & 23 & 11880 & 11580 & 8764\\
3 & $23/2^{4}$ & 1080 & 4608 & 2931 & 6912 & 3822 & 27 & 16560 & 11580 & 9228\\
3 & $181/2^{7}$ & 1080 & 5823 & 2931 & 9234 & 3822 & 33 & 23472 & 11580 & 9924\\
5 & $1/2^{0}$ & 8265 & 10350 & 9943 & 13014 & 12112 & 28 & 19908 & 32304 & 47140\\
5 & $3/2^{0}$ & 8265 & 10350 & 9943 & 13014 & 12112 & 28 & 20628 & 32304 & 47140\\
5 & $3/2^{1}$ & 8265 & 10764 & 9943 & 13770 & 12112 & 30 & 22572 & 32304 & 47688\\
5 & $5/2^{2}$ & 8265 & 11205 & 9943 & 14580 & 12112 & 32 & 24588 & 32304 & 48236\\
5 & $23/2^{4}$ & 8265 & 12168 & 9943 & 16362 & 12112 & 36 & 30996 & 32304 & 49332\\
5 & $181/2^{7}$ & 8265 & 13815 & 9943 & 19440 & 12112 & 42 & 39924 & 32304 & 50976\\
\hline
\end{tabular}\end{table}

\section{The statevector validation: set-up and measurements}
\label{app:aer}
This appendix gives the set-up and the measured tables behind
Sec.~\ref{sec:aer}.

\emph{Model and state.} $\mathrm{SU}(2)$, $d=2$, $Q=2$: six bosons on $12$
system qubits, $g=1$, and $\delta=\sqrt{2\pi/\Lambda}$, for which the momentum
spacing $2\pi/(\Lambda\delta)$ equals $\delta$ and the position and momentum
grids coincide. The initial state is the product over bosons of the ground state
of $\tfrac12(p^2+\omega^2x^2)$, $\omega=2$, on the same grid, evolved under $H$
of Eq.~\eqref{eq:H} (a quench). Its energy is $\langle H\rangle=3.110$, against
the ground-state energy $E_0=2.816$ and first excited level $E_1=3.610$ of the
truncated $H$. The observables are
$\langle R^2\rangle=\sum_I\langle\mathrm{Tr}\,X_I^2\rangle$, diagonal in the
position basis, and $\langle H\rangle$. Under ED, $\langle R^2\rangle$ rises
from $2.46$ to $4.42$ at $t=1.71$ and falls back to $2.77$ at $t=3.64$.

\emph{Reference, by an independent code path.} $H$ is built from the grid
coordinates with $V$ in the Gram form~\eqref{eq:gram}, not from the bilinears
that the circuits compile. It is real symmetric
($|\mathrm{Im}\,H|\le3\times10^{-17}$) and is diagonalised densely
($4096\times4096$); the resulting evolution agrees with an independent Krylov
exponential to $3.4\times10^{-14}$ at $t=2$. The two code paths meet in one
identity, checked on all $4096$ basis states to $2.8\times10^{-14}$:
$V_{\rm Gram}(x)=g^2(\delta/2)^4V_{\rm int}(m)$, the normalisation that fixes
the circuit angle $\theta=\tau g^2(\delta/2)^4$.

\emph{Circuits.} The potential term $e^{-i\theta V_{\rm int}}$ is compiled by
(i)~the structure-aware parity network of Sec.~\ref{sec:competitor} ($13$
qubits), (ii)~our circuit in the native-$R_{ZZ}$ emission and (iii)~in the
CNOT-only emission with register frames ($18$ qubits each: fold wire and a
$5$-qubit register), each generated by the same code as in
Secs.~\ref{sec:competitor} and~\ref{sec:nisq}. The kinetic layer is the
optimised core of Sec.~\ref{sec:kinopt} (K1, K3) with its $R_{ZZ}$ angles set to
$\kappa=\tfrac\tau2\,(\pi/(\Lambda\delta))^2$ instead of the dyadic value,
conjugated in every step by the centring ramp $\Phi$. K2 is not applied, so that
every saved statevector is the physical state; K2 removes single-qubit gates
only, so the two-qubit content per step is exactly that of
Sec.~\ref{sec:kinopt}. In the CNOT-only gate set, CPhase and $R_{ZZ}$ are
written as CNOT and $R_z$ gates. The built layer equals $e^{-i\tau p^2/2}$ to
$5.7\times10^{-16}$ for one boson (dense unitary) and the classical per-boson
evolution to $2.2\times10^{-15}$ for all six bosons on a random state. Only the
angles depend on $\tau$; every executed gate comes from the built emitters, and
the circuits are transpiled at optimisation level~$0$, so the simulator executes
the emitted gate sequence.

\emph{Protocols.} (a)~\emph{Trajectory:} $\tau=1/4$, $16$ Strang steps
$V(\tau/2)\,K(\tau)\,V(\tau/2)$, with the statevector saved after every step.
The half steps of consecutive steps are applied as separate circuits, so each
saved state is exactly the product-formula state at $t=k\tau$; merging them
gives the same unitary. (b)~\emph{Convergence:} at $T=2$, the deployable circuit
$V(\tau/2)\,[K(\tau)V(\tau)]^{r-1}K(\tau)\,V(\tau/2)$ with merged half steps,
for $r=8,16,32,64$; final state only. The classical product formula (PF) is
evaluated in both protocols and, in (b), up to $r=512$.

\emph{Convergence and energy.} At $T=2$ the error in $\langle R^2\rangle$ has
local orders $1.85$, $2.06$, $2.05$ between $r=8$ and $64$ and $2.01$, $2.00$,
$2.00$ up to $r=512$, and the state distance reaches $2.00$. The infidelity is
not yet asymptotic at $r\le64$ (local orders $2.26$, $4.71$, $4.96$) and reaches
$4.06$, $4.01$, $4.00$; a single fit over $r=8$--$64$ would suggest $4.05$ and
overstate the evidence. The Aer circuits lie on the product formula at every $r$
at which they are run. A Trotter step does not conserve $\langle H\rangle$: at
$\tau=1/4$ the deviation from $3.110$ reaches $0.25$ over $t\in[0,4]$ and $1.30$
over $t\in[0,16]$; at $\tau=1/8$ and $1/16$ it is $0.17$ and $0.017$ over
$t\in[0,4]$. The step is far from $\tau\|H\|\ll1$: the truncated $H$ has
$\|H\|=108.1$, so $\tau\|H\|=27$ at $\tau=1/4$. What converges here is the
evolution of a low-energy state, not the operator, and the deviation is the
same, to round-off, for all three circuits and the classical product formula.

\begin{table}[htbp]\centering\footnotesize
\setlength{\tabcolsep}{2.2pt}
\caption{Full-Trotter dynamics on the Qiskit Aer statevector simulator, $\mathrm{SU}(2)$, $d=2$, $Q=2$ (Sec.~\ref{sec:aer}). Two-qubit gates per merged Trotter step (potential $+$ optimised kinetic layer) in the native-$R_{ZZ}$ and CNOT-only gate sets, read off the executed op lists (the floor $m_{\min}$ of the potential term is 60); two-qubit gates executed in the trajectory run (16 unmerged Strang steps, $\tau=1/4$). Largest ancilla weight after any step of any run; compilation error along the trajectory, $|R^2_{\rm Aer}-R^2_{\rm PF}|$ and $1-|\langle\psi_{\rm PF}|\psi_{\rm Aer}\rangle|^2$ against the classical product formula (PF); and the same two quantities for the deployable merged circuits of the convergence study ($r=8,16,32,64$, $T=2$). Every entry of the compilation error is at double-precision round-off, while the Trotter error of the same runs is $\sim10^{-1}$ (Table~\ref{tab:aerconv}).}
\label{tab:aer}
\begin{tabular}{lrrrrrrrrr}
\hline
circuit & qubits & 2q/step & 2q/step & 2q gates & ancilla & \multicolumn{2}{c}{trajectory} & \multicolumn{2}{c}{convergence}\\ & & RZZ set & CNOT set & executed & weight & $|\Delta R^2|$ & $1-F$ & $|\Delta R^2|$ & $1-F$\\
\hline
network & 13 & 108 & 126 & 3456 & $0$ & $1.7{\times}10^{-13}$ & $2.6{\times}10^{-14}$ & $1.2{\times}10^{-13}$ & $1.2{\times}10^{-13}$\\
ours, $R_{ZZ}$ & 18 & 390 & -- & 12192 & $2.4{\times}10^{-14}$ & $9.3{\times}10^{-14}$ & $7.8{\times}10^{-15}$ & $6.2{\times}10^{-13}$ & $1.0{\times}10^{-14}$\\
ours, CNOT & 18 & -- & 570 & 17664 & $2.0{\times}10^{-14}$ & $6.3{\times}10^{-13}$ & $1.6{\times}10^{-13}$ & $1.9{\times}10^{-12}$ & $6.0{\times}10^{-13}$\\
\hline
\end{tabular}\end{table}

\begin{table}[htbp]\centering\footnotesize
\setlength{\tabcolsep}{2.6pt}
\caption{Convergence to exact diagonalisation (ED) at $T=2$ (Sec.~\ref{sec:aer}). Left: the classical product formula (PF): $|R^2(T)-R^2_{\rm ED}(T)|$, the infidelity $1-F$ and the state distance $\|\psi-\psi_{\rm ED}\|$ (the PF carries a definite global phase), each with its local order, $\log_2$ of the ratio to the previous row. Right: the deployable merged Aer circuits (all three compilations), run for $r\le64$: the largest deviation of their $|R^2-R^2_{\rm ED}|$ from the PF value, their largest infidelity with the PF state, and the two-qubit gates executed by the native-$R_{ZZ}$ emission of our circuit and by the parity network (CNOT set).}
\label{tab:aerconv}
\begin{tabular}{rrrrrrrrrrrr}
\hline
$r$ & $\tau$ & \multicolumn{6}{c}{product formula (classical) against ED} & \multicolumn{4}{c}{Aer circuits}\\ & & $|\Delta R^2|$ & order & $1-F$ & order & $\|\Delta\psi\|$ & order & $|\Delta R^2|{-}$PF & $1-F$(PF) & 2q ours & 2q net.\\
\hline
8 & $1/4$ & $1.5{\times}10^{-1}$ & -- & $1.1{\times}10^{-2}$ & -- & $1.4{\times}10^{-1}$ & -- & $2.9{\times}10^{-13}$ & $5.4{\times}10^{-14}$ & 3492 & 1098\\
16 & $1/8$ & $4.3{\times}10^{-2}$ & 1.85 & $2.2{\times}10^{-3}$ & 2.26 & $5.2{\times}10^{-2}$ & 1.48 & $5.4{\times}10^{-13}$ & $1.2{\times}10^{-13}$ & 6612 & 2106\\
32 & $1/16$ & $1.0{\times}10^{-2}$ & 2.06 & $8.5{\times}10^{-5}$ & 4.71 & $1.1{\times}10^{-2}$ & 2.25 & $9.2{\times}10^{-13}$ & $2.2{\times}10^{-13}$ & 12852 & 4122\\
64 & $1/32$ & $2.5{\times}10^{-3}$ & 2.05 & $2.7{\times}10^{-6}$ & 4.96 & $2.2{\times}10^{-3}$ & 2.32 & $1.9{\times}10^{-12}$ & $6.0{\times}10^{-13}$ & 25332 & 8154\\
128 & $1/64$ & $6.2{\times}10^{-4}$ & 2.01 & $1.6{\times}10^{-7}$ & 4.06 & $5.4{\times}10^{-4}$ & 2.02 & -- & -- & -- & --\\
256 & $1/128$ & $1.5{\times}10^{-4}$ & 2.00 & $1.0{\times}10^{-8}$ & 4.01 & $1.3{\times}10^{-4}$ & 2.00 & -- & -- & -- & --\\
512 & $1/256$ & $3.9{\times}10^{-5}$ & 2.00 & $6.3{\times}10^{-10}$ & 4.00 & $3.3{\times}10^{-5}$ & 2.00 & -- & -- & -- & --\\
\hline
\end{tabular}\end{table}

\section{Negative results}
\label{app:negative}
Five things that were tried and did not work are recorded here, because each
closes off an approach that a reader would otherwise reasonably suggest.

\paragraph{The force-gradient corrector does not survive discretisation.}
For this model $\sum_a(\partial_aV)^2=g^4\sum_I\|\sum_J[X_J,[X_I,X_J]]\|_F^2$
(verified), which is what makes the force-gradient integrator attractive here;
but on the grid $[V,[T,V]]_{mn}=-T_{mn}(V_m-V_n)^2$ has identically zero diagonal
(verified to $10^{-11}$): it is not a multiplication operator, so it cannot be
absorbed into the potential layer. Scanning the coefficient over
$c\in\{-\tfrac1{24},-\tfrac1{48},0,\tfrac1{48},\tfrac1{24},\tfrac1{12}\}$ in the
modified-potential Strang scheme and $c\in\{0,\tfrac1{48},\tfrac1{24},\tfrac1{12}\}$
in Chin's 4A scheme yields no fourth-order point (observed orders
$2.01$--$2.21$); $c=0$ is optimal.

\paragraph{Approximate truncation of the Fourier addition does not pay.}
Dropping small-angle levels, as opposed to the exact S1, which drops only
levels that are exactly the identity, induces a phase error
$\sim2\theta K\,\delta K$ with no headroom, because the register is
subsequently squared. At $Q=4$ a $12\%$ saving costs $0.21$ in ancilla leakage;
keeping leakage below $10^{-4}$ saves under $0.4\%$. The exactness of S1 is
therefore not a stylistic choice.

\paragraph{Accumulate-then-inject does not pay.} Summing all squares (or, in the
ablation, all monomials) into one register and injecting once forces the compute
stage to be mirrored, and the ablation's ``single'' mode costs about twice its
per-monomial counterpart at every point (Table~\ref{tab:ablation}). This is why
the oracle of Sec.~\ref{sec:ft} injects per entry (F1), at the price of one
$T_{\rm add}(L)$ per entry.

\paragraph{Arithmetic without the factorisation loses fault-tolerantly, and
	arithmetic where there is nothing to factor loses too.} Both controls are in
Sec.~\ref{sec:ablation}: the direct-monomial evaluation loses to Hamming-weight
phasing at $15$ of $17$ points, and the arithmetic kinetic oracle costs
$5.4$--$9.5\times$ the phase-polynomial layer.

\paragraph{The arithmetic route loses on NISQ metrics.} The Toffoli-arithmetic circuit
with register phase polynomials (built and verified), charged with NISQ costs
(relative-phase Toffolis for AND pairs, six CNOTs otherwise), needs
$5706$--$30\,022$ two-qubit gates for $Q=2,\dots,6$, its few $R_{ZZ}$ counted
as native, which only favours it. That is $63$--$5.5\times$ the built parity
network and $15$--$6.4\times$ our Fourier circuit with native $R_{ZZ}$. The
two routes are not interchangeable, and the primitive-cost argument of
Sec.~\ref{sec:exchange} is not a heuristic.

\section{Verification protocol and code package}
\label{sec:verif}
Every resource number quoted in this paper is read off a circuit that was built
and verified; the closed forms are checked against the built circuits, not the
other way round. The complete package that produces every number, table and
figure is archived as Ref.~\cite{codepackage}; the scripts named below are the
ones in it. The
dependencies are \texttt{numpy}, \texttt{scipy}, \texttt{qiskit},
\texttt{qiskit-aer}, \texttt{matplotlib}, and \texttt{qdk} (the resource
estimator) for the surface-code scripts; the statevector runs of
Sec.~\ref{sec:aer} used \texttt{qiskit} 2.5.2 and \texttt{qiskit-aer} 0.17.2.
The driver \texttt{make\_data.py} regenerates \texttt{data/results.json}, and
\texttt{make\_tables.py}, \texttt{make\_figs.py} and \texttt{make\_circuits.py}
emit every table, figure and transcribed circuit of this paper; no number in any
table is transcribed by hand.

\begin{enumerate}
	\item \textbf{Primitives}: exhaustive over all inputs
	(\texttt{arith.py}, \texttt{test\_arith.py}).
	\item \textbf{Fault-tolerant oracle}: gradient payload $=A\,V_{\rm int}\bmod2^L$,
	system restored, every ancilla $|0\rangle$; exhaustive for $\mathrm{SU}(2)$,
	$d=2$, $Q=2$ ($4096$ states), $12$--$400$ random basis states at eight further
	points, $5$--$8$ at $N=4,5$, $Q=3,4$ (\texttt{verify\_sos\_large.py}), and
	$A\in\{1,5,11,183\}$; built T-count equal to Eq.~\eqref{eq:Tsos} at all $17$
	points (\texttt{sos\_oracle.py}).
	\item \textbf{Ablation oracles} (\texttt{ablation.py}): same checks, $5$ points
	$\times$ $2$ injection modes, $0$ failures; built counts equal to their closed
	forms.
	\item \textbf{Counting polynomials}: Eqs.~\eqref{eq:sumne} and~\eqref{eq:nmono}
	against the built bilinears and monomials at $N=2,\dots,10$, and
	Eq.~\eqref{eq:mminN} against the exact Walsh expansion at twelve points
	(\texttt{scaling.py}).
	\item \textbf{Mini-BMN} (\texttt{minibmn\_model.py}, \texttt{minibmn.py}):
	the three forms of the potential to $1.8\times10^{-12}$; the fuzzy-sphere
	vacuum and fermion spectrum; $O_{kB}$ Hermitian, number-conserving and equal to
	its one-body matrix; affine NISQ gadget and fault-tolerant oracle and split
	network, as in Sec.~\ref{sec:bmn} (\texttt{minibmn\_circuits.py}); all nine
	fermion group circuits against the exact exponential ($\le4.4\times10^{-16}$,
	\texttt{minibmn\_fermion.py}); the weighted Hamming-weight tree exhaustively for
	$Q\le4$.
	\item \textbf{Optimised kinetic layer} (\texttt{kinetic\_opt.py}): K1--K3 on dense
	unitaries for $Q=2,\dots,6$, including three-step telescoping; K4 by reversible
	simulation ($120$ basis states per batch and $80$ per CPhase class and sign, $0$
	failures), with the built T-count equal to the formula.
	\item \textbf{Surface-code inputs}: every AND/T split reproduces the T-count used
	in the tables (asserted in \texttt{surface\_code.py},
	\texttt{surface\_code\_minibmn.py}).
	\item \textbf{Physics}: the integer oracle reproduces
	$-\tfrac{g^2}{4}\sum\mathrm{Tr}[X_I,X_J]^2$ from Hermitian matrices to
	$5.6\times10^{-16}$ (\texttt{verify\_identities.py},
	\texttt{verify\_forcegrad.py}, \texttt{endtoend.py}).
	\item \textbf{Statevector spot checks}: phase-gradient primitive and a toy
	compute--inject--uncompute oracle to $6\times10^{-16}$
	(\texttt{verify\_phase.py}); the full NISQ potential at $\mathrm{SU}(2)$, $Q=2$
	(18 qubits) to $2.1\times10^{-15}$ (native) and $4.0\times10^{-15}$ (register
	frames); the parity network at $Q=2$ (13 qubits) to $3.3\times10^{-16}$.
	\item \textbf{NISQ gadget}, both emissions: exact chained simulation with frame
	tracking; leakage below $10^{-14}$ and phase error at most
	$2.1\times10^{-12}$ at the points listed in Sec.~\ref{sec:nisqcorrect}
	(\texttt{fourier2.py}, \texttt{verify\_fourier2\_large.py}).
	\item \textbf{Parity network} (\texttt{parity\_network.py},
	\texttt{run\_struct\_sun.py}): exact integer phase identity, system and target
	restored; exhaustive at $Q=2$, $2000$ ($\mathrm{SU}(2)$) or $300$ (matrix route)
	random states otherwise.
	\item \textbf{Kinetic layer} (\texttt{kinetic\_layer.py}): transform against the
	analytic kernel ($2\times10^{-15}$), whole layer against
	$e^{-2\pi iA\hat p^2_{\rm int}/2^L}$ ($1.5\times10^{-15}$), and both closed forms
	against the built gate list, for $Q=2,\dots,5$ and $L=18,26,34$.
	\item \textbf{Hamming-weight-phasing incumbent} (\texttt{hwp\_circuit.py},
	\texttt{ft\_incumbent.py}): exact streaming simulation (system restored,
	ancillas clean, exact integer phase identity) at ten configurations up to
	$\mathrm{SU}(4)$, and built T-count and width equal to the formula at all $17$
	points, unbatched and width-matched.
	\item \textbf{Routing} (\texttt{routing.py}): transpiling without a coupling map
	adds no two-qubit gate (all-to-all); heavy-hex and square-lattice counts read off
	the transpiled circuits.
	\item \textbf{Weight parity} (\texttt{models.py}): all Walsh weights even, hence
	$m_1=0$ (Proposition~\ref{prop:even}), at every model point.
	\item \textbf{Cost formulas} (\texttt{costmodel.py}, \texttt{sos\_oracle.py}):
	exact agreement with the built circuits at $17$ fault-tolerant and $21$ NISQ
	points.
	\item \textbf{Trotter constant} (\texttt{trotter.py}, \texttt{c2\_sweep.py}): the
	leading-order, interference-aware error estimate of
	Refs.~\cite{layden2022,tran2020} evaluated matrix-free against adaptive-$r$
	direct evolution reproduces the measured constant to $0.042\%$ over nine points
	($\mathrm{SU}(2)$, $d=2$, $Q=2$); the fourth-order Suzuki formula gives observed
	order $4.03$ at $r=64$. No gate-count ratio depends on this.
	\item \textbf{Full-Trotter dynamics} (\texttt{aer\_dynamics.py}). Classical
	checks first: the normalisation linking the reference potential to the compiled
	one, $V_{\rm Gram}=g^2(\delta/2)^4V_{\rm int}$, on all $4096$ basis states; dense
	exact diagonalisation against an independent Krylov exponential; the kinetic
	layer against $e^{-i\tau p^2/2}$ (one boson, dense) and against the classical
	per-boson evolution (six bosons, random state); per-step two-qubit counts read
	off the executed op lists and asserted equal to the optimised-layer formula.
	Then, on Aer, for all three compilations, after every step of the trajectory and
	at the end of every convergence run: the assertions that the ancilla weight,
	$|R^2-R^2_{\rm PF}|$ and the infidelity with the product-formula state are below
	$10^{-10}$ (measured: $\le2.4\times10^{-14}$, $\le1.9\times10^{-12}$,
	$\le6.0\times10^{-13}$), and that the asymptotic orders of the product formula
	equal $2$ and $4$ within $0.05$. The output is logged verbatim in
	\texttt{data/logs/}.
\end{enumerate}

\end{document}